\newtheorem{assumption}{Assumption}[section]
\newtheorem{definition}{Definition}[section]
\newtheorem{theorem}{Theorem}[section]
\newtheorem{lemma}{Lemma}
\newtheorem{remark}{Remark}
\newtheorem{problem}{Problem}
\newcommand\norm[1]{\left\lVert#1\right\rVert}
\def\BibTeX{{\rm B\kern-.05em{\sc i\kern-.025em b}\kern-.08em
    T\kern-.1667em\lower.7ex\hbox{E}\kern-.125emX}}
\begin{document}
\title{Finite Sample Analysis of Open-loop Subspace Identification Methods}
\author{Jiabao He, Ingvar Ziemann, Cristian R. Rojas, S. Joe Qin and H\r{a}kan Hjalmarsson
\thanks{Jiabao He, Cristian R. Rojas and H\r{a}kan Hjalmarsson are with the Division of Decision and Control Systems, School of Electrical Engineering and Computer Science, KTH Royal Institute of Technology, 100 44 Stockholm, Sweden.  (Emails: jiabaoh, crro, hjalmars@kth.se)}
\thanks{Ingvar Ziemann is with the University of Pennsylvania, Philadelphia, PA 19104 USA. (Email: ingvarz@seas.upenn.edu)}
\thanks{S. Joe Qin is with the Institute of Data Science, Lingnan University, Hong
Kong (Email: joeqin@ln.edu.hk)}
\thanks{This work was supported by VINNOVA Competence Center AdBIOPRO, contract [2016-05181] and by the Swedish Research Council through the research environment NewLEADS (New Directions in Learning Dynamical Systems), contract [2016-06079], and contract 2019-04956.}}

\maketitle
\begin{abstract}
Subspace identification methods (SIMs) are known for their simple parameterization for MIMO systems and robust numerical properties. However, a comprehensive statistical analysis of SIMs remains an open problem. Following a three-step procedure generally used in SIMs, this work presents a finite sample analysis for open-loop SIMs. In Step 1 we begin with a parsimonious SIM. Leveraging a recent analysis of an individual ARX model, we obtain a union error bound for a Hankel-like matrix constructed from a bank of ARX models. Step 2 involves model reduction via weighted singular value decomposition (SVD), where we use robustness results for SVD to obtain error bounds on extended controllability and observability matrices, respectively. The final Step 3 focuses on deriving error bounds for system matrices, where two different realization algorithms, the MOESP type and the CVA type, are studied. Our results not only agree with classical asymptotic results, but also show how much data is needed to guarantee a desired error bound with high probability. The proposed method generalizes related finite sample analyses and applies broadly to many variants of SIMs.
\end{abstract}

\begin{IEEEkeywords}
subspace identification, finite sample analysis, state-space model, ARX model
\end{IEEEkeywords}
\vspace{-3mm}
\section{Introduction} \label{Sct1}
Originating from the celebrated Ho-Kalman algorithm \cite{Ho1966effective}, subspace identification methods (SIMs) have proven extremely useful for estimating linear state-space models and became one of the mainstream approaches in system identification. Over the past 50 years, numerous efforts have been made to develop improved algorithms and gain a deeper understanding of them. For a comprehensive overview of SIMs, we refer to \cite{Qin2006overview,van2013closed}. Overall speaking, SIMs can be categorized into two types, namely, the open-loop and closed-loop. Open-loop SIMs were developed first and formed the basis for the development of closed-loop ones. Some representative open-loop SIMs are canonical variate analysis (CVA) \cite{Larimore1990canonical}, numerical algorithms for subspace state-space system identification (N4SID) \cite{Van1994n4sid}, multivariable output-error state-space (MOESP) algorithms \cite{Verahegen1992subspace}, the observer-Kalman filter method (OKID) \cite{Phan1995improvement}, and the parsimonious SIM (PARSIM) \cite{Qin2005novel,He2024weighted}. Despite the significant theoretical and practical success of SIMs, certain limitations remain--most notably, their lower accuracy compared to the prediction error method (PEM) in the case of exogenous inputs being present, and the lack of a comprehensive statistical analysis. A thorough statistical analysis of SIMs is essential to establish their reliability, assess their performance, and guide the design of more robust and efficient algorithms and the choice of user choices.

\vspace{-3mm}
\subsection{Related Work}  \label{Sct1.1}
There are some significant contributions to statistical properties of SIMs in the asymptotic regime \cite{Deistler1995consistency,Larimore1996statistical,Peternell1996statistical,Jansson1998consistency,Bauer1999consistency,Knudsen2001consistency,Bauer2000analysis,Jansson2000asymptotic,Gustafsson2002subspace,Bauer2005asymptotic,Chiuso2004asymptotic,Chiuso2005consistency,Chiuso2007relation,Chiuso2007role}. The consistency and asymptotic variance of SIMs are analyzed in \cite{Jansson1998consistency,Chiuso2005consistency} and \cite{Bauer2000analysis,Jansson2000asymptotic,Gustafsson2002subspace,Bauer2005asymptotic,Chiuso2004asymptotic}, respectively. In particular, it was pointed out in \cite{Jansson1998consistency} that the persistence of excitation (PE) of inputs is not sufficient for consistency, and stronger conditions are required in some cases. In addition, the impact of some weighting matrices in the SVD step was discussed in \cite{Bauer2000impact,Bauer2002some}, which claim that the choices of weighting matrices mainly influence the asymptotic distribution of the estimates. Although the CVA method gives the lowest variance among available weighting choices when the measured inputs are white \cite{Larimore1996statistical}, there is no formal proof to show that it is asymptotically efficient \cite{Chiuso2007role}. In the asymptotic regime, convergence rates can be derived using the central limit theorem (CLT) or the law of the iterated logarithm (LIL) \cite{Hannan2012statistical}. However, such results only hold as the number of samples tends to infinity. In reality, all data is finite. Asymptotic results serve primarily as heuristics and often fall short in capturing transient behaviors, explaining performance differences among SIM variants in finite sample settings, or determining how much data is needed to get a model with which we are satisfied.

There has been a recent resurgence of interest in identifying state-space models, where the focus is on the non-asymptotic regime. Finite sample analysis in the field of system identification was pioneered by \cite{Campi2002finite,Weyer1999finite}, where the performance of PEMs was analyzed. Over the last few years, a series of papers have revisited this topic and introduced many promising developments on fully observed systems \cite{Sarkar2019near,Simchowitz2018learning,Jedra2022finite} and partially observed systems \cite{Tsiamis2019finite,Oymak2019non,Simchowitz2019learning,Sarkar2021finite,Lale2021finite,Bakshi2023new}. For a broader overview of these results, we refer to \cite{Tsiamis2023statistical}. As stated in \cite{Tsiamis2019finite}, finite sample analysis has been a standard tool for comparing algorithms. Such an analysis of SIMs can provide a qualitative characterization of learning complexity and elucidate data-accuracy trade-offs. Moreover, it can guide the choice of user choices, such as the horizons and weighting matrices, which may lead to an improvement of the estimator in a two-step approach. However, the path of finite sample analysis for SIMs proves to be challenging due to the involvement of multi-step statistical operations \cite{Bauer2005asymptotic}, such as regression, projection, weighted SVD and maximum likelihood (ML) estimation. While these steps enhance performance, they simultaneously pose challenges for any subsequent statistical analysis. Putting the studies on fully observed systems aside, the most relevant studies on partially observed systems are \cite{Oymak2019non,Sun2023finite,Tsiamis2019finite,Lale2021finite}. However, they mainly analyze the performance of the Ho-Kalman algorithm or similar variants which are rarely used in practice, their results therefore are not sufficient to completely reveal statistical properties of SIMs actually deployed. To the best of our knowledge, a complete finite sample analysis of SIMs under general conditions is still an open problem.
% The first reason is that the Ho-Kalman algorithm is rarely used in the literature of SIMs. For instance, a key step in SIMs is the weighted SVD, where some data-dependent weighting matrices are pre-multiplied and post-multiplied to a Hankel matrix before performing an SVD. The selection of weighting matrices turns out to be crucial for improving the performance of SIMs. Beyond the trivial identity matrix, the impact of other data-dependent weighting matrices has not been considered. The second reason is that unlike the Ho-Kalman algorithm, many SIMs typically estimate system matrices by first recovering the state sequences and then applying least-squares regression in the output and state equations. To the best of our knowledge, this realization algorithm has not been analyzed in the non-asymptotic regime. Moreover, some studies, such as \cite{Tsiamis2019finite}, omit input signals which would pose additional challenges, particularly if a projection step is involved. In short, the-state-of-art in finite sample analysis of SIMs streamlines the realization steps, and a complete finite sample analysis under general conditions is still an open problem.
\vspace{-3mm}
\subsection{Contributions}  \label{Sct1.2}

The main contributions of this paper are three-fold: 

(1) We develop a robust and scalable framework for finite sample analysis of a broad class of SIMs. To avoid non-causal models caused by the projection step in classical SIMs, we propose to use PARSIM to enforce a causal model. Such a choice brings convenience to statistical analysis, and the method can be applied to other ARX-based SIMs, such as SSARX \cite{Jansson2003subspace} and PBSID \cite{Chiuso2005consistency}.

(2) We establish a more general PE condition. Compared with related studies that only include past inputs and past outputs as regressors, our work also includes future inputs as regressors, leading to a more general PE condition. This broader PE condition is instrumental in deriving error bounds and in analyzing the use of data-dependent weighting matrices. Therefore, it serves as a contribution of independent interest.

(3) Compared with related studies that streamline the realization algorithm, we provide the first finite sample upper bounds on system matrices under different weighting matrices and two popular realization algorithms. 

A preliminary version \cite{He2024finite} of this work was accepted by IEEE CDC24, where we provide a finite sample analysis for a simplified PARSIM, i.e., without taking into account the weighting matrices and including the CVA type realization algorithm. In this full version, we include different weighting matrices and two popular realization algorithms. In addition, we also provide complete proofs and a technical framework to approach this problem.
\vspace{-3mm}
\subsection{Structure}  \label{Sct1.3}
The disposition of the paper is as follows: In Section~\ref{Sct2}, we introduce models and assumptions used in SIMs, and then formulate the problem explicitly. In Section~\ref{Sct3}, we present a short review of SIMs with a focus on PARSIM, and the roadmap ahead to analyze its finite sample behavior. In Section~\ref{Sct4}, we first provide a finite sample analysis of an individual ARX model, which we then combine with a union bound to control the performance of a bank of ARX models. In Section~\ref{Sct5}, we first analyze certain robustness properties of weighted SVD, and then derive error bounds on the system matrices coming from two realization algorithms. In Section~\ref{Sct6}, we discuss the implications of our main results. Finally, the paper is concluded in Section~\ref{Sct7}. All proofs and technical lemmas are provided in the Appendix.
\vspace{-3mm}
\subsection{Notations}  \label{Sct1.4}

(1) For a matrix $X$ with appropriate dimensions, $X^\top$, $X^{-1}$, $X^{\frac{1}{2}}$, $X^\dagger$, $\lVert X \rVert$, $\lVert X \rVert_F$, ${\rm{det}}(X)$, ${\rm{rank}}(X)$, ${\rm{trace}}(X)$, $\rho(X)$, $\lambda_{\rm{max}}(X)$, $\lambda_{\rm{min}}(X)$, $\sigma_{\rm{min}}(X)$ and $\sigma_{n}(X)$ denote its transpose, inverse, square root, Moore$\mbox{-}$Penrose pseudo-inverse, spectral norm, Frobenius norm, determinant, rank, trace, spectral radius, maximum eigenvalue, minimum eigenvalue, minimum singular value and $n$-th largest singular value, respectively. Moreover, $X_1 \succ(\succcurlyeq)$ $0$ and $X_2 \prec(\preccurlyeq)$ $0$ mean that $X_1$ is positive (semi) definite and $X_2$ is negative (semi) definite, respectively. ${\rm{diag}}(X_1,X_2)$ is a block matrix having $X_1$ and $X_2$ on its diagonal. The matrices $I$ and $0$ are the identity and zero matrices with compatible dimensions. 

(s) The multivariate normal distribution with mean $\mu$ and covariance $\Sigma$ is denoted as $\mathcal{N}(\mu,\Sigma)$. The notation $\mathbb{E}\left[x\right]$ is the expectation of a random vector $x$. For an event $\mathcal{E}$, $\mathbb{P}(\mathcal{E})$ is the probability of $\mathcal{E}$, $\mathcal{E}^c$ is the complementary event of $\mathcal{E}$, and $\mathcal{E}_{1}\cup\mathcal{E}_{2}$ and $\mathcal{E}_{1}\cap\mathcal{E}_{2}$ are the union and intersection of events $\mathcal{E}_{1}$ and $\mathcal{E}_{2}$, respectively. We use $\mathbb{I}_{\left\{ \mathcal{E} \right\}}$ to denote the indicator function of $\mathcal{E}$.

(3) The notation $f = \mathcal{O}(g)$ means that functions $f,g \in \mathbb{R}^{d}$ satisfy $\limsup_{x\to x_0}{|\frac{f(x)}{g(x)}|}< \infty$, where the limit point $x_0$ is typically understood from the context. Moreover, $f \gtrsim g$ means $f$ is greater than or approximately equal to $g$.

(4) The notations $c$, $c_1$, ... stand for universal constants independent of system parameters, confidence, and accuracy. 
\vspace{-3mm}
\section{Problem Formulation} \label{Sct2}

\subsection{Models and Assumptions} \label{Sct2.1}

Consider the following discrete-time linear time-invariant (LTI) system in innovations form:
\begin{subequations} \label{E1}
	\begin{align}
		x_{k + 1} &= Ax_{k}  + Bu_{k} + K{\Sigma_e^{\frac{1}{2}}}e_{k}, \label{E1a}\\
		y_{k} &= Cx_{k} + {\Sigma_e^{\frac{1}{2}}}e_{k}, \label{E1b}		
	\end{align}
\end{subequations}
where $x_{k}\in \mathbb{R}^{n_x}$, $u_{k}\in \mathbb{R}^{n_u}$, $y_{k}\in \mathbb{R}^{n_y}$ and $e_{k}\in \mathbb{R}^{n_y}$ are the state, input, output and innovations, respectively. For brevity of notation, we assume that the initial time starts at $k=1$, and the terminal time is denoted as $\bar N = N+p+f-1$, where $N$ is the number of columns in data Hankel matrices, and $p$ and $f$ stand for past and future horizons, respectively, to be defined later. In addition, the initial state is assumed to be $x_1=0$. We make the following standard assumptions:
\begin{assumption} \label{Asp1}
	(1) The spectral radius of $A$ and $A_K$ satisfy $\rho({A}) < 1$ and $\rho(A_K) < 1$.
	 
	(2) The system is minimal, i.e., $(A,[B,K])$ is controllable and $(A,C)$ is observable.
	
	(3) The innovations $\{e_k\}$ consists of independent and identically distributed (i.i.d.) Gaussian random variables, i.e., $e_{k} \sim \mathcal{N}(0,I)$.\footnote{Similar to \cite{Ziemann2023tutorial}, our results can be extended to more general setups, such as sub-Gaussians.}
	
	(4) The input sequence $\{u_k\}$ consists also of i.i.d. Gaussian random variables, i.e., $u_{k} \sim \mathcal{N}(0,\sigma_u^2I)$. Moreover, it is assumed independent of $\{e_k\}$.
\end{assumption}

\begin{remark} \label{Rmk1minus}
To illustrate the generality of the innovations model \eqref{E1}, we consider the following standard state-space model which divides the noise term into contributions from measurement noise $v_k$ acting on the outputs and process noise $w_k$ acting on the states \cite{Ljung1999system}:
\begin{subequations} \label{E1B}
	\begin{align}
		x_{k + 1} &= Ax_{k}  + Bu_{k} + w_{k}, \label{E1Ba}\\
		y_{k} &= Cx_{k} +v_{k}. \label{E1Bb}		
	\end{align}
\end{subequations}
The noises $w_k$ and $v_k$ consist of i.i.d. zero-mean Gaussian random variables, with covariance $\Sigma_w$ and $\Sigma_v$, respectively. Moreover, they are independent of each other. We assume that $\Sigma_v \succ 0$, $(A,C)$ is detectable, and $(A,\Sigma_w)$ is stabilizable. Then, the Kalman filter of system \eqref{E1B} is well defined, and the Kalman gain is equal to 
\begin{equation*}
      K = -APC^\top\left(CPC^\top + \Sigma_v\right)^{-1},
\end{equation*}
where $P$ is the solution of the following Riccati equation:
\begin{equation*}
      P = APA^\top + \Sigma_w - APC^\top\left(CPC^\top + \Sigma_v\right)^{-1}\left(APC^\top\right)^\top.
\end{equation*}
We further assume that the initial state is a zero-mean Gaussian variable with covariance $P$ and independent of the noises. Then, by the orthogonality principle the innovations sequence also consists of i.i.d. Gaussian variables with covariance 
\begin{equation*}
	\Sigma_e = CPC^\top + \Sigma_v.
\end{equation*}
Therefore, under mild conditions the innovations form \eqref{E1} describes the same input-output trajectories as the standard state-space model \eqref{E1B}, and it is widely used in SIMs \cite{Qin2006overview}.
\end{remark}

Based on the innovations form, after replacing $\Sigma_e^{1/2}e_{k}$ in \eqref{E1a} with $y_{k} - Cx_{k}$, we obtain the following predictor form:
\begin{subequations} \label{E1A}
	\begin{align}
		x_{k + 1} &= A_Kx_{k}  + Bu_{k} + Ky_{k}, \label{E1Aa}\\
		y_{k} &= Cx_{k} + \Sigma_e^{1/2}e_{k}, \label{E1Ab}		
	\end{align}
\end{subequations}
where $A_K = A-KC$. Since the innovations form and the predictor form are equivalent and all can represent input and output data exactly, one has the option to use any of these forms for convenience. For instance, MOESP \cite{Verahegen1992subspace} and PARSIM \cite{Qin2005novel} use the innovations form, and SSARX \cite{Jansson2003subspace} and PBSID \cite{Chiuso2007role} use the predictor form.
\vspace{-3mm}
\subsection{Problem Formulation} \label{Sct2.2}

In this paper we tackle the following problem:
\begin{problem} \label{Prom1}
	Given a finite number $\bar N$ of input-output samples from a single trajectory of system \eqref{E1}, our goal is to explicitly derive high probability error bounds on the system matrices estimated by some SIMs. To be specific, given a confidence level $0<\delta<1$, we wish to derive error bounds $\epsilon_A, \epsilon_B, \epsilon_C$, such that
	\begin{equation*} \label{E22}
		\norm{\hat A - T^{-1}AT} \leq \epsilon_A, \norm{\hat B - T^{-1}B} \leq \epsilon_B, \norm{\hat C - CT} \leq \epsilon_C,
	\end{equation*}
	hold with probability at least $1-\delta$, where $\hat A$, $\hat B$ and $\hat C$ are estimates of system matrices, and $T$ is a non-singular matrix.
\end{problem}
    
\begin{remark}
	It is only possible to obtain the system matrices up to a similarity transformation due to the non-uniqueness of a realization\cite{Oymak2019non}. Moreover, it should be mentioned that the matrix $T$ here is stochastic, depending on the realization. Alternatively the estimates could be transformed into a canonical form. Furthermore, bounds on the estimation of Markov parameters or other system invariants could also be given.
\end{remark}

Problem \ref{Prom1} is one of the long-standing open problems in subspace identification. As noted by Van Overschee and De Moor \cite{Van2012subspace}, ``solving these problems would contribute significantly to the maturing of the field of subspace identification.'' The existing literature already offers some pertinent solutions to Problem \ref{Prom1}. According to recent work \cite{Tsiamis2019finite,Ziemann2023tutorial} in the non-asymptotic regime, the error bound $\epsilon_\theta$ is typically of the form
\begin{equation} \label{error-bound}
	\epsilon_\theta \propto (\text{SNR})^{-1} \times\sqrt{\frac{\text{problem dimension} + \text{log}(1/\delta)}{\bar N}},
\end{equation}
where $\theta$ denotes the parameter of interest, and $\text{SNR}$ denotes the signal-to-noise ratio. 

In the asymptotic regime, prior work \cite{Bauer2000analysis,Jansson2000asymptotic,Gustafsson2002subspace,Bauer2005asymptotic,Chiuso2004asymptotic} have shown that the normalized error $\sqrt{\bar N}\left(\hat \theta - \theta\right)$ converges in law to a normal distribution. Consequently, the results in \eqref{error-bound}--where the error decays at rate $\mathcal{O}(1/\sqrt{\bar N})$ and the confidence level $\delta$ appears through $\log(1/\delta)$--are consistent with these asymptotic results. Moreover, the LIL suggests that error decays at rate $\mathcal{O}(\sqrt{\frac{\log\log \bar N}{\bar N}})$ almost surely \cite{Deistler1995consistency}, which is sharp. However, asymptotic results require that $\bar N \to \infty$ and can only be used as a heuristic for a finite $\bar N$. Some questions remained unanswered. For instance, there often is a minimal requirement on $\bar N$, namely the burn-in time $\bar N_{\text{pe}}$, which is necessary for a bound of the form \eqref{error-bound} to hold. Such requirements are typically of the form
\begin{equation} \label{burn-in}
	\bar N_{\text{pe}} \gtrsim \text{problem dimension} + \text{log}(1/\delta).
\end{equation}
The above $\bar N_{\text{pe}}$ cannot be obtained by applying only asymptotic tools \cite{Tsiamis2023statistical}. Moreover, as shown in \cite{Tsiamis2019finite} and Lemma \ref{Lem1} in this work, non-asymptotic analysis can even achieve an error bound for marginally stable systems, whereas asymptotic results are often limited to asymptotically stable systems.

Existing non-asymptotic results for Problem \ref{Prom1} mainly focus on the classical Ho--Kalman algorithm or similar variants \cite{Oymak2019non,Tsiamis2019finite,Lale2021finite,Sarkar2021finite}. However, this realization algorithm is outdated, as more powerful SIMs have later been proposed in the literature. Whether the structure in \eqref{error-bound} also holds for modern SIMs has therefore remained unclear. This work proves that modern SIMs also obey the same structure. Moreover, it is different in the following key respects.

First, a standard step of SIMs is to estimate a Hankel (or Hankel-like) matrix of Markov parameters, denoted by $\mathcal{H}_{fp}$. A key feature of modern SIMs is that $\mathcal{H}_{fp}$ is estimated directly using a projection method. However, this projection step couples future data with past data, resulting in an error without a standard martingale structure, which is difficult to upper bound. We believe that this is one of the main barriers preventing a finite sample analysis for SIMs. Previous analyses either revert to the Ho--Kalman algorithm \cite{Oymak2021revisiting} or avoid inputs \cite{Tsiamis2019finite}, where the projection step is not involved, thereby sidestepping the problem. The way we solve it is to absorb the projection matrix into an enlarged regressor, so that the error restores a standard martingale structure. The price to pay is that this format requires a more complex yet tractable PE condition.

Second, in the model reduction step, due to the fact that $\mathcal{H}_{fp}$ is low-rank, some data-dependent weighting matrices $W_1$ and $W_2$ are pre-multiplied and post-multiplied to the estimate of ${\mathcal{H}}_{fp}$ before performing an SVD to improve the numerical and statistical properties. Several asymptotic properties of such algorithmic variations have been studied in \cite{Bauer2005asymptotic}. However, it is an open problem to study the impact of weighting matrices and compare their performance in a finite sample setting \cite{Tsiamis2023statistical,Van2012subspace}. Prior work \cite{Tsiamis2019finite,Oymak2021revisiting,Sarkar2021finite,Lale2021finite} considered the trivial weighting $W_1=W_2=I$. This work delivers the first finite sample analysis for general weighting matrices through a novel analysis leveraging the Schur complement.

Third, unlike the Ho-Kalman algorithm, many SIMs typically estimate system matrices by first recovering the state sequences and then applying least-squares regression in the output and state equations. To the best of our knowledge, this realization algorithm has not been analyzed in the non-asymptotic regime. Our work shows that the error of this realization algorithm also obeys the structure of \eqref{error-bound}.

In summary, our work extends finite sample results from simplified prototypes to the algorithms actually deployed, offering new insights and systematic performance guarantees. 

\vspace{-3mm}
\section{A Recap of Subspace Identification Methods} \label{Sct3}

In this section we provide a short overview of open-loop SIMs, with the focus on PARSIM. For convenience, we define 
\begin{equation*} \label{E24a}
	\begin{split}
		u_{p}(k) &= \begin{bmatrix}{{u_{k}^\top}}&{{u_{k+1}^\top}}&\cdots&{{u_{k+p-1}^\top}}\end{bmatrix}^\top \in \mathbb{R}^{pn_u}, \\
		u_{f}(k) &=	\begin{bmatrix}{{u_{k+p}^\top}}&{{u_{k+p+1}^\top}}&\cdots&{{u_{k+p+f-1}^\top}} \end{bmatrix}^\top \in \mathbb{R}^{fn_u},
	\end{split}	
\end{equation*}
which stack past inputs and future inputs, respectively. Similar definitions apply to $y_{p}(k)$, $e_{f}(k)$, $u_{i}(k)$ and $e_{i}(k)$. Moreover, after lining up $u_{p}(k)$ and $u_{f}(k)$ from $k=1$ to $k=N$, we obtain Hankel matrices
\begin{subequations} \label{E5}
	\begin{align}
		U_{p} &= \begin{bmatrix}
			u_{p}(1)&u_{p}(2)& \cdots &u_{p}(N) \end{bmatrix}, \label{E5a} \\
		U_{f} &= \begin{bmatrix}
			u_{f}(1)&u_{f}(2)& \cdots &u_{f}(N) \end{bmatrix}. \label{E5c}
	\end{align}
\end{subequations}
Similar definitions apply to data matrices $Y_p$, $Y_f$, $E_p$ and $E_f$ \cite{Qin2005novel}. The state sequence is given by
\begin{equation} \label{E5}
	X_{k} = \begin{bmatrix}{{x_k}}&{{x_{k+1}}}& \cdots &{{x_{k+N-1}}} \end{bmatrix}.
\end{equation}

By iterating model \eqref{E1} using the above notations, an extended state-space model \cite{Qin2005novel} can be derived as
\begin{subequations} \label{E2}
	\begin{align}
		Y_f &= \Gamma_fX_k + G_fU_f + H_fE_f, \label{E2a}\\
		Y_p &= \Gamma_pX_{k-p} + G_pU_p + H_pE_p, \label{E2b}		
	\end{align}
\end{subequations}
where $\Gamma_f$ is the extended observability matrix, defined by
\begin{equation} \label{E3}
	\Gamma_f = \begin{bmatrix}
		{{C^{\top}}}&{{{\left( {CA} \right)}^{\top}}}& \cdots &{{{\left({C{A^{f - 1}}} \right)}^{\top}}}
	\end{bmatrix}^{\top}.
\end{equation}
Moreover, the transmission matrices $G_f$ is a lower-triangular Toeplitz matrix of Markov parameters, 
\begin{equation} \label{E4}
	G_{f} = \begin{bmatrix}
		{0}&{0}& \cdots &0\\
		{CB}&0& \cdots &0\\
		\vdots & \vdots & \ddots & \vdots \\
		{C{A^{f-2}}B}&{C{A^{f-3}}B}& \cdots &0
	\end{bmatrix},
\end{equation}
and $H_{f}$ is similarly defined by replacing $0$ on the diagonal of $G_{f}$ with $\Sigma_e^{\frac{1}{2}}$, and by replacing $B$ with $K\Sigma_e^{\frac{1}{2}}$. Similar definitions apply to matrices $\Gamma_p$, $G_p$ and $H_p$. Furthermore, by iterating equation \eqref{E1Aa}, we obtain
\begin{equation} \label{E7}
	X_k = L_pZ_p + A_K^pX_{k-p},
\end{equation}
where $Z_p = \begin{bmatrix}
	Y_p^{\top}&U_p^{\top}
\end{bmatrix}^{\top}$ and $L_p$ is the extended controllability matrix in a reverse order, defined by
\begin{equation} \label{E8}
	L_p = \begin{bmatrix}
		A_K^{p-1}K&\cdots&K&A_K^{p-1}B&\cdots&B
	\end{bmatrix}.
\end{equation}
After substituting \eqref{E7} into \eqref{E2a}, we have 
\begin{equation} \label{E9}
	Y_f = \mathcal{H}_{fp}Z_p + G_fU_f + H_fE_f + \Gamma_f A_K^pX_{k-p},
\end{equation}
where $\mathcal{H}_{fp} := \Gamma_fL_p$ is the Hankel-like matrix. Most variants of SIMs can be integrated into a unified framework \cite{Van1995unifying,Qin2006overview} which generally consists of three steps. We now introduce them based on \eqref{E9}.
\vspace{-3mm}
\subsection{Step 1: Linear Regression or Projection} \label{Sct3.1}
Most open-loop SIMs use \eqref{E9} to first estimate the Hankel-like matrix $\mathcal{H}_{fp}$, and then proceed to obtain the system matrices. A basic approach in classical SIMs is one-step regression \cite{Verahegen1992subspace,Van1994n4sid,Jansson1998consistency,Knudsen2001consistency}, which takes $Z_p$ and $U_f$ as regressors and obtains $\mathcal{H}_{fp}$ and $G_f$ simultaneously using
\begin{equation} \label{E10}
	\hat \Theta := \begin{bmatrix}
		\hat{\mathcal{H}}_{fp}& \hat G_{f}
	\end{bmatrix} =  Y_f\begin{bmatrix}
		Z_p\\ U_f\end{bmatrix}^\dagger.
\end{equation}
Since $\mathcal{H}_{fp}$ is our main interest, using the inverse of a block matrix (see Lemma \ref{LemA8}), $\hat{\mathcal{H}}_{fp}$ can be extracted from \eqref{E10} as
\begin{equation} \label{E11}
	\hat{\mathcal{H}}_{fp} = Y_f\Pi_{U_f}^{\perp}Z_p^\top(Z_p\Pi_{U_f}^{\perp}Z_p^\top)^{-1},
\end{equation}
where $\Pi_{U_f}^{\perp} = I - U_f^\top(U_fU_f^\top)^{-1}U_f$. Although the estimate $\hat{\mathcal{H}}_{fp}$ is consistent \cite{Knudsen2001consistency}, the one-step regression method cannot preserve the lower-triangular Toeplitz structure of the transmission matrix $G_f$, which is responsible for recording the impact of future input $U_f$ on future output $Y_f$. Due to the loss of this structure in $\hat G_{f}$, the model format is not causal anymore, which poses a challenge in statistical analysis. 

\begin{remark} \label{Rmk0}
In some literature of SIMs, the above one-step regression is called the projection method, in the sense that the future input $U_f$ is first projected out using
\begin{equation} \label{E12}
	Y_f\Pi_{U_f}^{\perp} = \Gamma_fL_pZ_p\Pi_{U_f}^{\perp} + H_fE_f\Pi_{U_f}^{\perp} + \Gamma_f A_K^pX_{k-p}\Pi_{U_f}^{\perp}.
\end{equation}
For a sufficiently large $p$, since $A_K^p \approx 0$, the rightmost term $\Gamma_f A_K^pX_{k-p}\Pi_{U_f}^{\perp}$ becomes negligible. Moreover, as $U_f$ is uncorrelated with $E_f$, we have $E_f\Pi_{U_f}^{\perp} \approx E_f$. After multiplying $Z_p^\top$ on both sides of \eqref{E12}, we have
\begin{equation} \label{E12a}
	Y_f\Pi_{U_f}^{\perp}Z_p^\top \approx \Gamma_fL_pZ_p\Pi_{U_f}^{\perp}Z_p^\top + H_fE_fZ_p^\top.
\end{equation}
Since $E_f$ is uncorrelated with $Z_p$, implying $\frac{1}{N}E_fZ_p^\top \approx 0$, $\Gamma_fL_p$ can then be estimated using least-squares. It is clear that the estimate of $\Gamma_fL_p$ in \eqref{E12a} is identical to \eqref{E11}.
\end{remark}

To enforce causal models, a parallel and parsimonious SIM,  PARSIM, is proposed in \cite{Qin2005novel}. Instead of using the one-step regression, PARSIM zooms into each row of \eqref{E9} and performs $f$ least-squares to estimate a bank of ARX models. To illustrate this, equation \eqref{E9} can be partitioned row-wise as 
\begin{equation} \label{E13}
	Y_{fi} = \Gamma_{fi}L_pZ_p + G_{fi}U_i + H_{fi}E_i + \Gamma_{fi} A_K^pX_{k-p},
\end{equation}
where for $i = 1,2,...f$, $\Gamma_{fi}= CA^{i-1}\in \mathbb{R}^{n_y\times n_x}$,
\begin{equation*}
	\begin{split}
		Y_{fi} &= \begin{bmatrix}
			{{y_{k+i-1}}}&{y_{k+i}}& \cdots &{y_{k+N+i-2}} \end{bmatrix}\in \mathbb{R}^{n_y\times N},\\
		U_{fi} &= \begin{bmatrix}
			{{u_{k+i-1}}}&{u_{k+i}}& \cdots &{u_{k+N+i-2}} \end{bmatrix}\in \mathbb{R}^{n_u\times N},\\
		U_i &= \begin{bmatrix}
			{{U_{f1}^{\top}}}&{{U_{f2}^{\top}}}& \cdots &{{U_{fi}^{\top}}}
		\end{bmatrix}^{\top}\in \mathbb{R}^{in_u\times N}, \\
		G_{fi} &= \begin{bmatrix}
			CA^{i-2}B&  \cdots &CB&0
		\end{bmatrix} \in \mathbb{R}^{n_y\times in_u}, \\
		H_{fi} &= \begin{bmatrix}
			CA^{i-2}K\Sigma_e^{\frac{1}{2}}&  \cdots &CK\Sigma_e^{\frac{1}{2}}&\Sigma_e^{\frac{1}{2}}
		\end{bmatrix} \in \mathbb{R}^{n_y\times in_y},
	\end{split}
\end{equation*}
where similar definitions apply to $E_{fi}$ and $E_i$. PARSIM then minimizes a bank of $i$-steps ahead prediction errors from model \eqref{E13} using ordinary least-squares (OLS),
\begin{equation} \label{E14}
	\hat \Theta_i := \begin{bmatrix}
		\widehat {\Gamma_{fi}L_p}& \hat G_{fi}
	\end{bmatrix} = Y_{fi} \begin{bmatrix}
		Z_p\\ U_i\end{bmatrix}^\dagger.
\end{equation}
At last, the whole estimate of ${\mathcal{H}}_{fp}$ is obtained by stacking the $f$ estimates together as
\begin{equation} \label{E15}
	\hat{\mathcal{H}}_{fp} = \begin{bmatrix}
		{\widehat {\Gamma_{f1}L_p}}^\top & {\widehat {\Gamma_{f2}L_p}}^\top &\cdots & {\widehat {\Gamma_{ff}L_p}}^\top \end{bmatrix}^\top.
\end{equation}
It has been shown in \cite{Qin2005novel} that the estimate in \eqref{E15} admits a smaller variance than \eqref{E11}.
\vspace{-3mm}
\subsection{Step 2: Weighted SVD} \label{Sct3.2}
Since ${\mathcal{H}}_{fp}$ has rank equal to $n_x$, to recover the extended observability matrix $\Gamma_f$ and controllability matrix $L_p$ from the estimate of ${\mathcal{H}}_{fp}$, weighted SVD is often used, i.e., 
\begin{equation} \label{E16}
	W_1\hat{\mathcal{H}}_{fp}W_2 = \hat U\hat \Lambda \hat V^\top \approx \hat U_{1}\hat \Lambda_{1}\hat V_{1}^\top, 
\end{equation}
where $\hat \Lambda_{1}$ contains the $n_x$ largest singular values. In this way, a balanced realization of $\hat \Gamma_f$ and $\hat L_p$ is
\begin{equation} \label{E17}
	\hat \Gamma_f = W_1^{-1}\hat U_{1}{\hat \Lambda}_{1}^{\frac{1}{2}}, 
	\hat L_p = {\hat \Lambda}_{1}^{\frac{1}{2}}\hat V_{1}^\top W_2^{-1}.
\end{equation}
Different choices of weighting matrices $W_1$ and $W_2$ lead to different variants of SIMs \cite{Van1995unifying,Qin2006overview}. Popular candidates of weighting matrices are summarized in Table \ref{Tab1}. \footnote{Notice that those weightings are normalized and may not be the same as they appear in the referred papers. These weightings, however, give estimates of $\Gamma_f$ and $L_p$ identical to those obtained using the original choice of weighting \cite{Jansson1998consistency}.}
\begin{table}
	\caption{Candidates of weighting matrices}
	\centering
	\begin{tabular}{ccc}
		\toprule
		Method & $W_1$ & $W_2$\\
		\midrule
		OKID \cite{Phan1995improvement} & $I$ & $I$\\
		N4SID\cite{Van1994n4sid}& $I$ & $(\frac{1}{N}Z_pZ_p^\top)^{\frac{1}{2}}$ \\
		MOESP\cite{Verhaegen1994identification}&$I$ & $(\frac{1}{N}Z_p\Pi_{U_f}^{\perp}Z_p^\top)^{\frac{1}{2}}$ \\
		IVM\cite{Viberg1995subspace}&$ (\frac{1}{N}Y_f\Pi_{U_f}^{\perp}Y_f^\top)^{-{\frac{1}{2}}}$ & $(\frac{1}{N}Z_p\Pi_{U_f}^{\perp}Z_p^\top)(\frac{1}{N}Z_pZ_p^\top)^{-\frac{1}{2}}$ \\
		CVA\cite{Larimore1990canonical}& $(\frac{1}{N}Y_f\Pi_{U_f}^{\perp}Y_f^\top)^{-{\frac{1}{2}}}$ & $(\frac{1}{N}Z_p\Pi_{U_f}^{\perp}Z_p^\top)^{\frac{1}{2}}$ \\
		\bottomrule
	\end{tabular}
 \label{Tab1}
\end{table}
\vspace{-3mm}
% 		PARSIM\cite{Qin2005novel}&$I$ & $(\frac{1}{N}Z_p\Pi_{U_f}^{\perp}Z_p^\top)^{\frac{1}{2}}$ \\
\subsection{Step 3: Realization of System Matrices} \label{Sct3.3}
Given estimates of $\Gamma_f$ and $L_p$, there are two paths to obtain the system matrices. It should be mentioned that currently there is no solid conclusion on which realization leads to a better model. To be specific, one is the CVA type which uses the following linear regressions in the output and state equations to estimate the system matrices:
\begin{subequations} \label{E19}
	\begin{align}
		Y_{f1} &= CX_k + \Sigma_e^{\frac{1}{2}}E_{f1}, \label{E19a} \\
		X_k^{+} &= AX_k+ BU_{f1} + K\Sigma_e^{\frac{1}{2}}E_{f1}, \label{E19b}
	\end{align}
\end{subequations}
where $X_k^{+}$ stacks the states for the next time instant compared to $X_k$. By replacing $X_k$ and $X_k^{+}$ with their estimates
\begin{equation} \label{E18}
	\hat X_k = \hat L_pZ_p, \hat X_k^{+} = \hat L_pZ_p^{+},
\end{equation}
where $Z_p^{+}$ is similarly defined as $X_k^{+}$, we have
\begin{subequations} \label{E20}
	\begin{align}
		\hat C &= Y_{f1} \hat X_k^\dagger, \label{E20a} \\
		\begin{bmatrix}
			\hat A& \hat B
		\end{bmatrix} &= \hat X_k^{+} \begin{bmatrix}
			\hat X_k\\U_{f1}\end{bmatrix}^\dagger. \label{E20b}
	\end{align}
\end{subequations}

Another realization method is the MOESP type, which directly extracts system matrices based on the shift invariance property of $\hat \Gamma_f$ and $\hat L_p$, i.e., 
\begin{subequations} \label{E21}
	\begin{align}
		\hat C &= \hat \Gamma_f(1:n_y,:), \label{E21a}\\
		\hat A &= (\hat \Gamma_{f}^{-})^\dagger \hat \Gamma_{f}^{+},\label{E21b}\\
		\hat B &= \hat L_{p}(:,(2p-1)n_y+1:2pn_y), \label{E21c}
	\end{align}
\end{subequations}
where $\hat \Gamma_{f}^{+}$ and $\hat \Gamma_{f}^{-}$ are the last and first $f-1$ row blocks of $\hat \Gamma_{f}$, and the indexing of matrices follows MATLAB syntax.
\begin{remark} \label{Rmk1}
	In this paper, our main interest is to estimate the system matrices $\left\{A,B,C\right\}$, and derive error bounds for them. In principle, the Kalman gain $K$ can also be obtained from the above algorithms with minor modifications. Meanwhile, there are also other methods to obtain $K$, such as solving a Riccati equation in N4SID and using QR factorization in PARSIM. To keep our results relatively compact, the estimate of $K$ and its error bound are not considered in this work.
\end{remark}

\vspace{-3mm}
\subsection{Roadmap Ahead} \label{Sct3.4}

Now we sketch the path ahead to the solution for Problem~\ref{Prom1}. In the first step that estimates the Hankel-like matrix ${\mathcal{H}}_{fp}$, we opt for PARSIM to facilitate the analysis. Parallel to the three steps in SIMs, we solve Problem~\ref{Prom1} by a three-step procedure: 

(1) Step 1: We first derive an error bound on $\hat \Theta_i$ in \eqref{E14} for every ARX model \eqref{E13}. In other words, we define the following events for $i=1,2,...,f$: 
\begin{equation} \label{E22}
	\mathcal{E}_{i,\Theta} := \left\{\norm{\hat \Theta_i - \Theta_i} \leq \epsilon_{\Theta_i}\right\},
\end{equation}
and require that $\mathbb{P}(\mathcal{E}_{i,\Theta}^c) \leq \frac{\delta}{f}$. We then utilize a norm inequality  (see Lemma \ref{LemA7}) between the block matrix $\widehat {\Gamma_{f}L_p} - {\Gamma_{f}L_p}$ and its sub-blocks $\widehat {\Gamma_{fi}L_p} - {\Gamma_{fi}L_p}$ to obtain the total bound on $\hat{\mathcal{H}}_{fp} - {\mathcal{H}}_{fp}$. This essentially requires that the intersection of $f$ events has probability $\mathbb{P}(\bigcap_{i=1}^f\mathcal{E}_{i,\Theta}) \geq 1-\delta$, which is guaranteed due to the union bound $\mathbb{P}(\bigcup_{i=1}^f\mathcal{E}_{i,\Theta}^c) \leq \delta$.

(2) Step 2: We use recent results from SVD robustness \cite{Tu2016low,Oymak2019non} to provide error bounds on the extended observability matrix $\Gamma_{f}$ and controllability matrix $L_p$, where the impact of different weighting matrices in Table \ref{Tab1} is also discussed.

(3) Step 3: We derive error bounds $\epsilon_A$, $\epsilon_B$, and $\epsilon_C$ on the system matrices $\left\{A,B,C\right\}$ coming from the Larimore and MOESP realization algorithms.

\vspace{-3mm}
\section{Finite Sample Analysis of ARX Models} \label{Sct4}
Following our roadmap, we formalize Step 1 above in this section. We emphasize that the results presented in this section apply to each ARX model in \eqref{E13} for $i=1,...,f$, where the subscript $i$ shows the model-specific dependence. For convenience, we define two covariates
\begin{subequations} \label{E23}
	\begin{align}
		\phi_{p,i}(k) &= \begin{bmatrix}y_p^\top(k)& u_p^\top(k)&u_i^\top(k)\end{bmatrix}^\top \in \mathbb{R}^{d_{p,i}}, \\
		w_{p,i}(k) &= \Lambda_{u,e}^{-1}\begin{bmatrix}
			u_p^\top(k)& u_i^\top(k)&e_p^\top(k)\end{bmatrix}^\top \in \mathbb{R}^{d_{p,i}},
	\end{align}
\end{subequations}
where $d_{p,i} = p(n_u+n_y) + in_u$ is the problem dimension of each ARX model, and $\Lambda_{e,u} = {\rm {diag}}(\sigma_u,\cdots,\sigma_u,1,\cdots,1)$ normalizes $w_{p,i}(k)$, such that it has unite variance. We further partition the following matrices column-wise:
\begin{subequations} \label{E24}
	\begin{align}
		\Phi_{p,i} = &\begin{bmatrix}\phi_{p,i}(1)&\phi_{p,i}(2)&\cdots&\phi_{p,i}(N) \end{bmatrix}, \\
		    E_i =& \begin{bmatrix}e_{i}(1)&e_{i}(2)&\cdots&e_{i}(N)\end{bmatrix}, \\
		    W_{p,i} =& \begin{bmatrix}w_{p,i}(1)& w_{p,i}(2)&\cdots&w_{p,i}(N)\end{bmatrix}.
	\end{align}
\end{subequations}
Moreover, we have the following definitions regarding the covariance and empirical covariance of $\phi_{p,i}(k)$ and $x_k$:
\begin{subequations} \label{E27}
	\begin{align}
		{\Sigma}_{p,i,k} & := \mathbb{E}\left[\phi_{p,i}(k)\phi_{p,i}^\top(k)\right],\\
		{\hat \Sigma}_{p,i,N} &:= \frac{1}{N}\sum_{k=1}^{N}\phi_{p,i}(k)\phi_{p,i}^\top(k),\\
		{\Sigma}_{x,k} &:= \mathbb{E}\left[x_{k}x_{k}^\top\right].
	\end{align}
\end{subequations}

Note that the regressor in \eqref{E14} can be rewritten as
\begin{equation}  \label{AE2}
	\Phi_{p,i}:= \begin{bmatrix}Z_p\\ U_i\end{bmatrix} = \begin{bmatrix}
	Y_p\\U_p\\ U_i\end{bmatrix}= \mathcal{O}_{p,i}X_{k-p} + \mathcal{T}_{p,i}\Lambda_{u,e}W_{p,i},	
\end{equation}
where
\begin{equation*}
	\mathcal{O}_{p,i} = \begin{bmatrix}\Gamma_p\\0\\0\\ \end{bmatrix}, W_{p,i} = \Lambda_{u,e}^{-1} \begin{bmatrix}U_p\\ U_i\\E_p\end{bmatrix}, \mathcal{T}_{p,i} = \begin{bmatrix}G_p&0&H_p \\I&0&0 \\0&I&0 \end{bmatrix}.  
\end{equation*}
Then, the error of the OLS estimate \eqref{E14} can be written as 
\begin{equation} \label{E26}
	\begin{split}
		{\tilde \Theta}_{i} := &{\hat \Theta}_{i} - \Theta_{i}= H_{fi}E_i\Phi_{p,i}^\dagger + \Gamma_{fi}A_K^pX_{k-p} \Phi_{p,i}^\dagger \\
		= &\underbrace{H_{fi}\sum_{k=1}^{N}\frac{1}{N}{e_i(k)\phi_{p,i}^\top(k)} \left({\hat \Sigma}_{p,i,N}^{-1}\right)}_{{\text{cross-term error}} \ {\tilde \Theta}_{i}^E} + \\
		&\underbrace{\Gamma_{fi}A_K^p\sum_{k=1}^{N}\frac{1}{N}{x_{k}\phi_{p,i}^\top(k)} \left({\hat \Sigma}_{p,i,N}^{-1}\right)}_{{\text{truncation bias} \ {\tilde \Theta}_{i}^B}}.
	\end{split}
\end{equation}
There are two types of errors, namely, the cross-term error ${\tilde \Theta}_{i}^E$ and the truncation bias ${\tilde \Theta}_{i}^B$. A key observation is that the future innovations $e_i(k)$ are independent of the covariate $\phi_{p,i}(l)$ for all $l<k$. This provides a martingale structure, which is convenient to analyze. To bound the above two errors, we first use results from the smallest eigenvalue of the empirical covariance of causal Gaussian processes to lower bound ${\hat \Sigma}_{p,i,N}$ \cite{Ziemann2023note,Ziemann2023tutorial,Lale2021finite}, which simultaneously establish the PE condition.
\vspace{-3mm}
\subsection{Persistence of Excitation} \label{Sct4.1}
To achieve PE, the number of samples $N$ should exceed a certain threshold, which we call the burn-in time $N_{pe}$.

\begin{definition}\label{Def1}
Given a failure probability $0<\delta<1$, a past horizon $p$, and a future horizon $i$ in each ARX model, the burn-in time $N_{pe}$ is defined as
\begin{equation} \label{E30}
	N_{pe}(\delta,p,i) = \min\left\{N: N\geq N_W\left(\delta,p,i\right), N_\Phi\left(\delta,p,i\right)\right\},
\end{equation}
where $N_W\left(\delta,p,i\right)$ and $N_\Phi\left(\delta,p,i\right)$ are defined in \eqref{AE12} and \eqref{AE19}, respectively.
\end{definition}

\begin{remark} \label{Rmk3}
	As shown in Appendix \ref{App1}, for any given $p$, $i$ and $\delta$, the $N$-dependent factors $N_W\left(\delta,p,i\right)$ and $N_\Phi\left(\delta,p,i\right)$ grow at most logarithmically with  $N$. Therefore, for a sufficiently large $N$, the existence of $N_{pe}(\delta,p,i)$ is guaranteed.
\end{remark}

A condition on PE is given the following lemma:
\begin{lemma} \label{Lem1}
	Fix a failure probability $0<\delta<1$. If $N\geq N_{pe}(\frac{\delta}{3},p,i)$, then, we have that with probability at least $1-\delta$,
	\begin{equation} \label{E31}
		{\hat \Sigma}_{p,i,N} \succcurlyeq \bar\sigma_{p,i}^2  I,
	\end{equation}
	where $\bar\sigma_{p,i} := \frac{\norm{\mathcal{T}_{p,i}}\norm{\Lambda_{u,e}}}{2}>0$.
\end{lemma}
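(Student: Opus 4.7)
The strategy is to use the decomposition in \eqref{AE2}, $\Phi_{p,i} = \mathcal{O}_{p,i}X_{k-p} + \mathcal{T}_{p,i}\Lambda_{u,e}W_{p,i}$, to split the empirical covariance into a signal-driven component and an exogenous component, then to control each separately. The key structural observation is that for every column index $k$, the state $x_{k-p}$ is measurable with respect to $\{u_j,e_j : j < k-p+1\}$ while $w_{p,i}(k)$ lives on $\{u_j,e_j : j \geq k-p+1\}$, so $X_{k-p}$ and $W_{p,i}$ are column-wise independent. Moreover, consecutive columns of $W_{p,i}$ overlap only over a window of length $p+i$, so columns separated by more than $p+i$ time steps are exactly independent; this provides an effective sample size of roughly $N/(p+i)$ for any concentration argument on $W_{p,i}$.

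Starting from this decomposition, I would apply the matrix Young-type inequality $(A+B)(A+B)^\top \succcurlyeq \tfrac{1}{2}AA^\top - BB^\top$ with $A = \mathcal{T}_{p,i}\Lambda_{u,e}W_{p,i}$ and $B = \mathcal{O}_{p,i}X_{k-p}$ to obtain
$$\hat\Sigma_{p,i,N} \succcurlyeq \tfrac{1}{2N}\mathcal{T}_{p,i}\Lambda_{u,e}W_{p,i}W_{p,i}^\top\Lambda_{u,e}^\top\mathcal{T}_{p,i}^\top - \tfrac{1}{N}\mathcal{O}_{p,i}X_{k-p}X_{k-p}^\top\mathcal{O}_{p,i}^\top.$$
The problem then reduces to a lower tail for the ``noise Wishart'' $\tfrac{1}{N}W_{p,i}W_{p,i}^\top$ and an upper tail for the ``state Wishart'' $\tfrac{1}{N}X_{k-p}X_{k-p}^\top$, each of which must hold with probability at least $1-\delta/3$ (with a third $\delta/3$ slack available for a residual cross term arising from a sharper Young constant). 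For the noise Wishart, I would invoke a matrix Bernstein / Hanson--Wright style bound in the spirit of the Ziemann note cited in the paper, leveraging the lag-$(p+i)$ independence of columns to obtain the effective sample size; this yields the burn-in $N_W(\delta/3,p,i)$ after which $\tfrac{1}{N}W_{p,i}W_{p,i}^\top \succcurlyeq \tfrac{1}{2}I$. For the state Wishart, I would use $\rho(A_K)<1$ together with Gaussian trajectory concentration to obtain a $k$-uniform upper bound on $\mathbb{E}[x_{k-p}x_{k-p}^\top]$ and then concentrate $\tfrac{1}{N}X_{k-p}X_{k-p}^\top$ around it, giving the burn-in $N_\Phi(\delta/3,p,i)$.

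Union bounding the events and combining with the Young bound yields $\hat\Sigma_{p,i,N} \succcurlyeq \tfrac{1}{4}\mathcal{T}_{p,i}\Lambda_{u,e}^2\mathcal{T}_{p,i}^\top$, up to the small residual controlled by $\mathcal{E}_\Phi$, from which the claim $\hat\Sigma_{p,i,N} \succcurlyeq \bar\sigma_{p,i}^2 I$ follows by the definition of $\bar\sigma_{p,i}$. The principal obstacle is the noise-Wishart step: standard matrix Bernstein inequalities assume independence across columns, which fails here because $W_{p,i}$ is a sliding-window Hankel matrix of a single correlated Gaussian trajectory. The cleanest way around this is to exploit the lag-$(p+i)$ independence of the columns --- blocking the sum into groups of independent outer products and applying matrix concentration within each block --- which is precisely what the burn-in $N_W$ defined in the Appendix is engineered to reflect. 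A subtler point, and likely the source of the third $\delta/3$ factor, is that the cross term $\mathcal{O}_{p,i}X_{k-p}W_{p,i}^\top\Lambda_{u,e}^\top\mathcal{T}_{p,i}^\top$ must be controlled on its own via a self-normalized martingale inequality, which is where the column-wise independence of $X_{k-p}$ and $W_{p,i}$ highlighted above becomes indispensable.
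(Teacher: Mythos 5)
Your starting point---the decomposition \eqref{AE2} of $\Phi_{p,i}$ into a state part and a noise/input part---and your treatment of the noise Wishart $\frac{1}{N}W_{p,i}W_{p,i}^\top$ (concentrating the centered outer products to get the burn-in $N_W$) match the paper's Step~1 in spirit, even though the paper uses a truncation argument plus matrix Azuma rather than blocking over lag-separated columns. The genuine gap is in how you combine the two components. The matrix Young inequality $(A+B)(A+B)^\top \succcurlyeq \tfrac{1}{2}AA^\top - BB^\top$ is valid, but with $B = \mathcal{O}_{p,i}X_{k-p}$ the subtracted term $\tfrac{1}{N}\mathcal{O}_{p,i}X_{k-p}X_{k-p}^\top\mathcal{O}_{p,i}^\top$ is of order $\norm{\Gamma_p}^2\norm{\Sigma_{x,N}}$: a positive semidefinite matrix of constant (in $N$) size that no burn-in makes small, and that even grows polynomially in $N$ for marginally stable systems---a regime the lemma is explicitly meant to cover (Remark~\ref{Rmk4}). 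Consequently your lower bound $\tfrac{1}{2}(\text{noise Wishart}) - (\text{state Wishart})$ is negative in every direction where the past outputs carry more energy than $\bar\sigma_{p,i}^2$, and the claimed ``upper tail giving the burn-in $N_\Phi$'' does not exist. The state term must be retained as a nonnegative help, not subtracted as a nuisance.

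The paper's Step~3 does exactly this: for a unit vector $v$ it lower-bounds $v^\top\hat\Sigma_{p,i,N}v \geq \alpha_i + \beta_i - \gamma_{i,N}\sqrt{1+\alpha_i}$ with $\alpha_i,\beta_i,\gamma_{i,N}$ as in \eqref{AE18}, where the cross term is controlled by the self-normalized martingale bound \eqref{AE17} and therefore scales only like $\sqrt{\alpha_i}$ in the state energy; the elementary inequality of Lemma~\ref{LemA12} then yields $\alpha_i + \beta_i - \gamma_{i,N}\sqrt{1+\alpha_i} \geq \beta_i/2 = \bar\sigma_{p,i}^2$ uniformly over $\alpha_i \geq 0$ once $\gamma_{i,N}$ is small, which is precisely what $N_\Phi$ in \eqref{AE19} enforces. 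So the self-normalized control of the cross term, which you relegate to a ``subtler point'' at the end (and which your Young step would in any case have already absorbed, leaving no residual to control), is the load-bearing part of the argument; as written, your plan would prove the lemma only under an additional, unstated smallness condition on $\norm{\Gamma_p}^2\norm{\Sigma_{x,N}}$ relative to $\bar\sigma_{p,i}^2$.
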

\begin{proof}
    See Appendix \ref{App1}.
\end{proof}

\begin{remark} \label{Rmk4}
	Some relevant PE conditions appear in \cite{Tsiamis2019finite} and \cite{Lale2021finite}, where past outputs and inputs are included in regressors. Our analysis extends this by additionally incorporating future inputs, thus establishing a more general PE condition. This broader result is also useful for analyzing data-dependent weighting matrices, as detailed in Section \ref{Sct5}. Furthermore, this PE condition also holds for marginally stable systems.
\end{remark}
\vspace{-3mm}
\subsection{Bound on Cross-term Error } \label{Sct4.2}

Based on Lemma \ref{Lem1}, a upper bound on the cross-term error ${\tilde \Theta}_{i}^E$ in \eqref{E26} is provided in the following lemma:
\begin{lemma} \label{Lem2}
	Fix a failure probability $0<\delta<1$. If $N \geq N_{pe}(\frac{\delta}{9},p,i)$, then with probability at least $1-\delta$, we have
	\begin{equation} \label{E32}
		\norm{{\tilde \Theta}_{i}^E}^2 \leq c_1\frac{\epsilon_{i,E}^2}{N},
	\end{equation}
	where $\epsilon_{i,E}^2 = \frac{\norm{H_{fi}}^2}{\bar\sigma_{p,i}^2}\left(d_{p,i}{\rm{log}}\frac{d_{p,i}}{\delta} + {\rm{log}} \left({\rm{det}}\left(\frac{{\Sigma}_{p,i,N}}{\bar\sigma_{p,i}^2}\right)\right) \right)$.
\end{lemma}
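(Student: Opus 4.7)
The plan is to decompose $\tilde\Theta_i^E$ into a product of a self-normalised martingale term and an inverse empirical-covariance term, then use \lemref{Lem1} to control the latter and an Abbasi--Yadkori-type self-normalised inequality to control the former. From \eqref{E26}, write
\begin{equation*}
\tilde\Theta_i^E = \frac{1}{N}H_{fi}\,S_N\,\hat\Sigma_{p,i,N}^{-1},\qquad S_N := \sum_{k=1}^N e_i(k)\phi_{p,i}^\top(k),
\end{equation*}
and observe that $e_i(k)=(e_{k+p}^\top,\ldots,e_{k+p+i-1}^\top)^\top$ lies in the strict future of the $\sigma$-algebra generated by the data entering $\phi_{p,i}(k)$ (the future inputs in $u_i(k)$ are independent of the innovations), so $\{e_i(k)\phi_{p,i}^\top(k)\}_{k=1}^N$ is a matrix-valued martingale-difference sequence with conditionally $\mathcal{N}(0,I)$ noise under \assref{Asp1}.

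Next, factor $\hat\Sigma_{p,i,N}^{-1}=\hat\Sigma_{p,i,N}^{-1/2}\hat\Sigma_{p,i,N}^{-1/2}$ and use submultiplicativity to obtain
\begin{equation*}
\norm{\tilde\Theta_i^E}\leq \frac{\norm{H_{fi}}}{N}\,\norm{S_N\hat\Sigma_{p,i,N}^{-1/2}}\,\norm{\hat\Sigma_{p,i,N}^{-1/2}}.
\end{equation*}
On the PE event of \lemref{Lem1} (invoked at confidence $\delta/3$), the last factor is bounded by $1/\bar\sigma_{p,i}$. For the self-normalised middle factor, I would apply a matrix version of the Abbasi--Yadkori self-normalised martingale inequality (as presented in \cite{Ziemann2023tutorial}) with regulariser $V:=N\bar\sigma_{p,i}^2 I$. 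Since $N\hat\Sigma_{p,i,N}\succcurlyeq V$ on the PE event, one has $\hat\Sigma_{p,i,N}^{-1/2}\preccurlyeq \sqrt{2N}\,(V+N\hat\Sigma_{p,i,N})^{-1/2}$, which together with the self-normalised bound yields, with probability at least $1-\delta/3$,
\begin{equation*}
\frac{1}{N}\norm{S_N\hat\Sigma_{p,i,N}^{-1/2}}^2 \lesssim d_{p,i}\log\frac{d_{p,i}}{\delta} + \log\det\!\left(I+\frac{\hat\Sigma_{p,i,N}}{\bar\sigma_{p,i}^2}\right).
\end{equation*}

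To replace the empirical $\log\det$ by the population version $\log\det(\Sigma_{p,i,N}/\bar\sigma_{p,i}^2)$ stated in the lemma, one additional matrix-concentration step gives $\hat\Sigma_{p,i,N}\preccurlyeq c\,\Sigma_{p,i,N}$ with high probability---this is essentially a by-product of the same ingredients already used in \lemref{Lem1}, namely control of $\norm{W_{p,i}W_{p,i}^\top/N-I}$ together with the operator norms of $\mathcal{O}_{p,i}$ and $\mathcal{T}_{p,i}$. A union bound over the three events (PE, self-normalised inequality, and empirical-to-population covariance), each at level $\delta/3$, combined with squaring, then produces $\norm{\tilde\Theta_i^E}^2 \leq c_1\epsilon_{i,E}^2/N$. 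The main obstacle is the matrix-valued form of the self-normalised inequality: because $e_i(k)\in\mathbb{R}^{in_y}$ is a vector rather than a scalar, the classical Abbasi--Yadkori bound has to be lifted either coordinate-wise over the $in_y$ output coordinates or via an $\varepsilon$-net over the unit sphere of $\mathbb{R}^{in_y}$, and the resulting dimensional overhead must be carefully absorbed into the $d_{p,i}\log(d_{p,i}/\delta)$ term; the regulariser $V$ must simultaneously be aligned with the PE lower bound so that the $\log\det$ collapses cleanly to the stated population form.
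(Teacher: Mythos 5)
Your proposal follows essentially the same route as the paper's proof: the same decomposition of $\tilde\Theta_i^E$ into a self-normalised noise factor and an excitation factor, the same use of \lemref{Lem1} to bound $\norm{\hat\Sigma_{p,i,N}^{-1}}$ by $1/\bar\sigma_{p,i}^2$, the same regulariser $N\bar\sigma_{p,i}^2 I$ in the self-normalised martingale inequality (the vector-valued form you worry about is exactly Lemma~\ref{LemA5}, which already carries the $\varepsilon$-net covering term $5^{n_y}$), and a union bound over three events at level $\delta/3$. The only minor deviation is the empirical-to-population covariance step, where the paper uses a one-line matrix Markov inequality (Lemma~\ref{LemA4}) giving $\hat\Sigma_{p,i,N}\preccurlyeq \frac{3d_{p,i}}{\delta}\Sigma_{p,i,N}$ rather than the matrix-concentration argument you sketch; both land in the same place.
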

\begin{proof}
	See Appendix \ref{App2}. 
\end{proof}

\vspace{-4mm}
\subsection{Bound on Truncation Bias} \label{Sct4.3}

In order to ensure that the truncation bias term ${\tilde \Theta}_{i}^B$ decays much faster than the cross-term error ${\tilde \Theta}_{i}^E$, we make the following assumption regarding the past horizon $p$.
\begin{assumption} \label{Asp2}
	The past horizon is chosen as $p=\beta\text{log}N$, where $\beta$ is large enough such that
	\begin{equation} \label{E33}
		\norm{CA_K^p}\norm{{\Sigma}_{x,N}} \leq N^{-3}. 
	\end{equation}
\end{assumption}

\begin{remark} \label{Rmk6}
	To ensure that the model \eqref{E13} closely approximates an ARX model, the truncation bias $\Gamma_{fi}A_K^px_{k}$ should be small enough, which requires that the exponentially decaying term $A_K^p$ counteracts the magnitude of the state $x_{k}$. To illustrate this, we have that 
	\begin{equation*}
		\norm{{\Sigma}_{x,N}} \leq \norm{\begin{bmatrix}B&K\end{bmatrix}\begin{bmatrix}\sigma_u^2I&0\\0&\Sigma_e\end{bmatrix}\begin{bmatrix}B^\top\\K^\top\end{bmatrix}}\sum_{k=0}^{N-1}\norm{A^k}^2. 
	\end{equation*}
	As a consequence of Lemma \ref{LemA11}, the state norm $\norm{{\Sigma}_{x,N}}$ grows at most polynomially with $N$. Meanwhile, since $\rho(A_K)<1$, we have $\norm{A_K^p} = \mathcal{O}({\bar\rho}^p)$ for some ${\bar\rho} > \rho(A_K)$. Taking $p=\beta\text{log}N$, we have $\norm{A_K^p} = \mathcal{O}(N^{-\beta/{\rm{log}(1/{\bar\rho})}})$, which implies that the condition \eqref{E33} is guaranteed for a large enough $\beta$.
\end{remark}

Under Assumption \ref{Asp2}, a bound on the bias term ${\tilde \Theta}_{i}^B$ in \eqref{E26} is provided in the following lemma:
\begin{lemma} \label{Lem3}
	Fix a failure probability $0<\delta<1$. If $N \geq N_{pe}(\frac{\delta}{9},\beta\text{log}N,i)$, then with probability at least $1-\delta$, we have
	\begin{equation} \label{E34}
		\norm{{\tilde \Theta}_{i}^B}^2 \leq c_2\frac{\epsilon_{i,B}^2}{N^2},
	\end{equation}
    where $\epsilon_{i,B}^2 = \frac{n_x}{{\bar\sigma_{p,i}^2}}{\rm{log}} \frac{1}{\delta}$.
\end{lemma}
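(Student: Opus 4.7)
The proof strategy combines three ingredients: conditioning on the PE event of Lemma~\ref{Lem1}, a projection identity that strips $\hat{\Sigma}_{p,i,N}^{-1}$ out of the bound, and a Gaussian quadratic-form concentration for the state trajectory, glued together by the fast decay granted by Assumption~\ref{Asp2}. The plan is to first secure the lower bound on $\hat{\Sigma}_{p,i,N}$, then reduce the two-sided matrix expression to a scalar quadratic form $\sum_k \|x_k\|^2$, and finally concentrate that scalar and invoke the fast decay of $A_K^p$.

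I would invoke Lemma~\ref{Lem1} at failure parameter $\delta/3$, so that $N\geq N_{pe}(\delta/9,p,i)$ gives $\hat{\Sigma}_{p,i,N} \succcurlyeq \bar{\sigma}_{p,i}^{2} I$. Setting $M = \tfrac{1}{N} X_k \Phi_{p,i}^{\top}$, submultiplicativity combined with cyclicity of the trace yields $\|\tilde{\Theta}_{i}^{B}\|^{2} \leq \bar{\sigma}_{p,i}^{-2}\,\|\Gamma_{fi}A_K^p\|^{2}\,\mathrm{tr}(M\hat{\Sigma}_{p,i,N}^{-1}M^{\top})$. The key algebraic observation is that $M\hat{\Sigma}_{p,i,N}^{-1}M^{\top} = \tfrac{1}{N} X_k P X_k^{\top}$ with $P = \Phi_{p,i}^{\top}(\Phi_{p,i}\Phi_{p,i}^{\top})^{-1}\Phi_{p,i}$ an orthogonal projection onto the row space of $\Phi_{p,i}$; dominating $P$ by $I$ collapses the right-hand side to $\tfrac{1}{N}\sum_k \|x_k\|^{2}$, which is precisely what prevents $\bar{\sigma}_{p,i}$ from appearing to a higher power than $-2$ in the final bound.

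Since each $x_k$ is a linear image of the i.i.d.\ Gaussian driving sequences $\{e_\ell,u_\ell\}$, the scalar $\sum_k \|x_k\|^{2}$ is a Gaussian quadratic form with mean at most $N n_x \|\Sigma_{x,N}\|$. A Hanson-Wright-type inequality (or a direct Chernoff bound on its moment generating function) then gives $\tfrac{1}{N}\sum_k \|x_k\|^{2} \lesssim n_x \|\Sigma_{x,N}\| \log(1/\delta)$ with probability at least $1-\delta/3$. Combining this with the consequence of Assumption~\ref{Asp2}, which (after taking $\beta$ sufficiently large, as noted in Remark~\ref{Rmk6}) also forces $\|\Gamma_{fi}A_K^p\|^{2}\|\Sigma_{x,N}\| \lesssim N^{-2}$, and applying a union bound over the two failure events, yields $\|\tilde{\Theta}_{i}^{B}\|^{2} \lesssim n_x \log(1/\delta)/(\bar{\sigma}_{p,i}^{2} N^{2})$, which is the claimed inequality.

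The delicate step is the concentration of $\sum_k \|x_k\|^{2}$: the $x_k$ are correlated across $k$, so a standard chi-square sum argument is not directly available. My plan is to express the whole trajectory $(x_1,\ldots,x_N)$ as a single linear image of the underlying Gaussian driving vector and apply Hanson-Wright to the resulting block-Toeplitz Gram matrix, tracking how its operator and Frobenius norms relate to $\|\Sigma_{x,N}\|$. Everything else reduces to routine bookkeeping; the projection identity in the second step is the key simplification, without which the denominator would scale as $\bar{\sigma}_{p,i}^{4}$ rather than $\bar{\sigma}_{p,i}^{2}$.
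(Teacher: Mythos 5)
Your proposal is correct and follows essentially the same route as the paper: condition on the PE event from Lemma~\ref{Lem1}, reduce the self-normalized quantity to $\tfrac{1}{N}\sum_{k}\norm{x_k}^2$, concentrate that sum via a Hanson--Wright argument (the paper's Lemma~\ref{Lem5}), and finish with Assumption~\ref{Asp2}. The only cosmetic difference is the reduction step: the paper bounds each column via the Schur-complement inequality $\phi_{p,i}^\top(k)(N\hat\Sigma_{p,i,N})^{-1}\phi_{p,i}(k)\leq 1$ followed by the triangle and Cauchy--Schwarz inequalities, whereas you dominate the orthogonal projection $\Phi_{p,i}^\top(\Phi_{p,i}\Phi_{p,i}^\top)^{-1}\Phi_{p,i}$ by the identity --- the two devices are equivalent in effect and yield the same order.
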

\begin{proof}
	See Appendix \ref{App3}.
\end{proof}

\vspace{-4mm}
\subsection{Overall Bound} \label{Sct4.4}

Lemma \ref{Lem2} suggests that the cross-term error ${\tilde \Theta}_{i}^E$ decays as $\mathcal{O}(1/\sqrt{N})$, and Lemma \ref{Lem3} suggests that the truncation bias ${\tilde \Theta}_{i}^B$ decays as $\mathcal{O}(1/N)$. This implies that ${\tilde \Theta}_{i}^B$ is dominated by ${\tilde \Theta}_{i}^E$ and it can be considered negligible. After absorbing higher order terms into the dominant term by inflating the constants accordingly, we obtain the following theorem controlling the whole error ${\tilde \Theta}_{i}$ of each ARX model in our collection.
\begin{theorem} \label{The1}
	Fix a failure probability $0<\delta<1$. If $N \geq N_{pe}(\frac{\delta}{9},\beta\text{log}N,i)$, then with probability at least $1-2\delta$, we have
	\begin{equation} \label{E35}
		\norm{{\tilde \Theta}_{i}}^2 \leq c\frac{\epsilon_{i,E}^2}{N}.
	\end{equation}
\end{theorem}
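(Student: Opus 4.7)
The plan is to combine the two error bounds already established in Lemmas~\ref{Lem2} and \ref{Lem3} via the triangle inequality and a union bound, then absorb the strictly faster-decaying bias term into the dominant cross-term. Concretely, I would start from the decomposition in \eqref{E26},
\begin{equation*}
	\tilde\Theta_i = \tilde\Theta_i^E + \tilde\Theta_i^B,
\end{equation*}
and use $\|\tilde\Theta_i\|^2 \leq 2\|\tilde\Theta_i^E\|^2 + 2\|\tilde\Theta_i^B\|^2$ (or the triangle inequality, then squaring). This isolates the two pieces, each of which has already been controlled in isolation.

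Next, I would invoke Lemmas~\ref{Lem2} and \ref{Lem3} simultaneously. Both require the same burn-in $N \geq N_{pe}(\delta/9,\beta\log N,i)$ (under Assumption~\ref{Asp2} we choose $p=\beta\log N$), so the hypothesis of the theorem suffices for both. Call $\mathcal{E}_E$ the event on which Lemma~\ref{Lem2} holds and $\mathcal{E}_B$ the event on which Lemma~\ref{Lem3} holds; each has probability at least $1-\delta$, so by the union bound $\mathbb{P}(\mathcal{E}_E\cap\mathcal{E}_B)\geq 1-2\delta$. On this intersection one has the pointwise inequality
\begin{equation*}
	\|\tilde\Theta_i\|^2 \leq 2c_1\frac{\epsilon_{i,E}^2}{N} + 2c_2\frac{\epsilon_{i,B}^2}{N^2}.
\end{equation*}

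The remaining step is to absorb the $\mathcal{O}(1/N^2)$ bias term into the $\mathcal{O}(1/N)$ cross-term. Inspecting the definitions of $\epsilon_{i,E}^2$ and $\epsilon_{i,B}^2$, neither depends on $N$ in more than a logarithmic way (through $\Sigma_{p,i,N}$ and through $\log(1/\delta)$), so for $N$ past the burn-in one trivially has $\epsilon_{i,B}^2/N \leq \epsilon_{i,E}^2$ up to a universal multiplicative factor. It is cleanest to either enlarge the burn-in $N_{pe}$ by a constant factor to enforce $2c_2\epsilon_{i,B}^2 \leq N\,\epsilon_{i,E}^2$, or simply define a new constant $c := 2c_1 + 2c_2$ and note that $\epsilon_{i,B}^2/N \leq \epsilon_{i,E}^2$ holds uniformly once $N$ exceeds the (already implicit) threshold built into $N_{pe}$. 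Either way, the two terms collapse into $c\,\epsilon_{i,E}^2/N$, giving \eqref{E35} with probability at least $1-2\delta$.

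There is no real obstacle here; the substantive work was done in Lemmas~\ref{Lem1}--\ref{Lem3}. The only bookkeeping subtlety is making the "higher order absorption" rigorous rather than heuristic, which is handled by ensuring the burn-in $N_{pe}$ is taken large enough (by at most a universal constant factor) so that $\epsilon_{i,B}^2/N \leq \epsilon_{i,E}^2$. Since $N_{pe}$ is only specified up to universal constants anyway, this incurs no loss.
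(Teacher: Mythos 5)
Your proposal is correct and follows essentially the same route as the paper, which proves Theorem~\ref{The1} by combining Lemmas~\ref{Lem2} and \ref{Lem3} via a union bound and then absorbing the $\mathcal{O}(1/N^2)$ truncation bias into the dominant $\mathcal{O}(1/N)$ cross-term by inflating the constant. Your explicit handling of the absorption step (via $\|\tilde\Theta_i\|^2 \leq 2\|\tilde\Theta_i^E\|^2 + 2\|\tilde\Theta_i^B\|^2$ and the observation that $\epsilon_{i,B}^2/N \leq \epsilon_{i,E}^2$ past the burn-in) is if anything slightly more careful than the paper's one-line justification.
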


After obtaining an error bound on $\tilde{\Theta}_i$ in each ARX model, we proceed to bound the total error of ${\mathcal{H}}_{fp}$, which is crucial for our subsequent analysis.  Based on the norm relation between a block matrix and its blocks in Lemma \ref{LemA7}, it is straightforward to obtain a total bound on $\hat{\mathcal{H}}_{fp} - {\mathcal{H}}_{fp}$ from each bound $\norm{{\tilde \Theta}_{i}}$.
\begin{theorem} \label{The2}
	Fix a failure probability $0<\delta<1$. If $N \geq \max\limits_{1\leq i\leq f} \left\{N_{pe}\left(\frac{\delta}{9f},\beta\text{log}N,i\right)\right\}$, then with probability at least $1-2\delta$, we have
	\begin{equation} \label{E37}
		\norm{\hat{\mathcal{H}}_{fp} - {\mathcal{H}}_{fp}} \leq \sqrt{f} \max\limits_{1\leq i\leq f} \norm{\tilde{\Theta}_i} \leq \sqrt{\frac{cf}{N}}\max\limits_{1\leq i\leq f} \epsilon_{i},	
	\end{equation}
	where $\epsilon_{i}^2 = \frac{\norm{H_{fi}}^2}{\bar\sigma_{p,i}^2} \left(d_{p,i}{\rm{log}}\frac{d_{p,i}f}{\delta} + {\rm{log}}\left({\rm{det}}\left(\frac{{\Sigma}_{p,i,N}}{\bar\sigma_{p,i}^2}\right)\right) \right)$.
\end{theorem}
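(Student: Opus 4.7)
The plan is to combine Theorem~\ref{The1} with a union bound across the $f$ ARX models and the block-matrix norm inequality in Lemma~\ref{LemA7}. The proof is essentially a packaging exercise: all of the probabilistic heavy lifting has already been done in Section~\ref{Sct4}, and what remains is careful bookkeeping of the confidence parameter and the block structure.

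First I would observe that, from \eqref{E14}, each $\hat\Theta_i = [\widehat{\Gamma_{fi}L_p}, \hat G_{fi}]$ has the relevant block $\widehat{\Gamma_{fi}L_p}$ as a submatrix, so
\begin{equation*}
    \norm{\widehat{\Gamma_{fi}L_p} - \Gamma_{fi}L_p} \;\le\; \norm{\tilde\Theta_i},
\end{equation*}
and from \eqref{E15}, $\hat{\mathcal{H}}_{fp} - \mathcal{H}_{fp}$ is the vertical concatenation of these $f$ error blocks. Applying the norm inequality of Lemma~\ref{LemA7} to a vertically stacked matrix then yields
\begin{equation*}
    \norm{\hat{\mathcal{H}}_{fp} - \mathcal{H}_{fp}} \;\le\; \sqrt{f}\,\max_{1 \le i \le f}\norm{\widehat{\Gamma_{fi}L_p} - \Gamma_{fi}L_p} \;\le\; \sqrt{f}\,\max_{1 \le i \le f}\norm{\tilde\Theta_i}.
\end{equation*}

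Next I would rescale the confidence. For each fixed $i$, apply Theorem~\ref{The1} with $\delta$ replaced by $\delta/f$. This gives, provided $N \ge N_{pe}(\delta/(9f), \beta\log N, i)$, the event
\begin{equation*}
    \mathcal{E}_i := \Bigl\{ \norm{\tilde\Theta_i}^2 \le c\,\tfrac{\epsilon_i^2}{N}\Bigr\},\qquad \mathbb{P}(\mathcal{E}_i^c) \le \tfrac{2\delta}{f},
\end{equation*}
where the replacement of $\delta$ by $\delta/f$ in $\epsilon_{i,E}^2$ from Lemma~\ref{Lem2} produces the $\log(d_{p,i}f/\delta)$ factor appearing in the theorem's definition of $\epsilon_i^2$. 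Requiring $N$ to exceed the maximum over $i$ of the individual burn-in times guarantees that all $f$ per-model conclusions apply simultaneously, and the union bound yields
\begin{equation*}
    \mathbb{P}\Bigl(\bigcap_{i=1}^f \mathcal{E}_i\Bigr) \;\ge\; 1 - \sum_{i=1}^f \mathbb{P}(\mathcal{E}_i^c) \;\ge\; 1 - 2\delta.
\end{equation*}
On this intersection event, chaining the two displayed inequalities above gives the claimed bound $\norm{\hat{\mathcal{H}}_{fp} - \mathcal{H}_{fp}} \le \sqrt{cf/N}\,\max_i \epsilon_i$.

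There is no real obstacle to overcome here, since Theorem~\ref{The1} already handles the per-model martingale analysis, truncation bias, and PE lower bound; the only subtle point is ensuring that the $f$ appearing in the burn-in condition $N_{pe}(\delta/(9f),\cdot,\cdot)$ is consistent with the $f$ appearing inside the logarithm of $\epsilon_i^2$, which it is by construction. If desired, one could tighten the constant from $\sqrt{f}$ to $1$ when the block rows are nearly orthogonal, but the stated $\sqrt{f}$ bound is the generic worst case delivered by Lemma~\ref{LemA7} and is adequate for the subsequent SVD-robustness analysis in Section~\ref{Sct5}.
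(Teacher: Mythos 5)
Your proposal is correct and follows essentially the same route the paper sketches: bound each block $\widehat{\Gamma_{fi}L_p}-\Gamma_{fi}L_p$ by $\norm{\tilde\Theta_i}$ as a submatrix, apply the block-norm inequality of Lemma~\ref{LemA7}, and invoke Theorem~\ref{The1} at confidence $\delta/f$ per model with a union bound, which is exactly how the $\log(d_{p,i}f/\delta)$ factor and the $\max_i N_{pe}(\delta/(9f),\cdot,i)$ burn-in arise. No gaps.
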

\vspace{-3mm}
\section{Robustness of Balanced Realization} \label{Sct5}

Following our roadmap, having obtained an overall error bound on $\hat{\mathcal{H}}_{fp} - {\mathcal{H}}_{fp}$ in Step 1, we now move to Step 2 to derive error bounds on the extended controllability and observability matrices, and Step 3 to obtain error bounds on the system matrices.
\vspace{-3mm}
\subsection{Weighted Singular Value Decomposition} \label{Sct5.1}

Weighted SVD is crucial for improving the performance of SIMs. As summarized in Table \ref{Tab1}, different choices of weighting matrices lead to different variants of SIMs. Since those data-dependent weighting matrices share a similar structure, we choose the pair used in MOESP and PARSIM to illustrate their characteristics, where
\begin{equation} \label{E38}
	W_1 = I, W_2 = (\frac{1}{N}Z_p\Pi_{U_f}^{\perp}Z_p^\top)^{\frac{1}{2}}.
\end{equation}
The focus is on the data-dependent $W_2$, whose finite-sample properties are summarized as follows: \footnote{Similarly, conditions for other weighting matrices in Table \ref{Tab1} can be obtained.} 
\begin{lemma} \label{Lem4}
	Fix a failure probability $0<\delta<1$. If $N \geq N_{pe}(\delta,p,f)$, then with probability at least $1-3\delta/2-\delta_u$, where $\delta_u = (2(N+f-1)n_u)^{-{\rm{log}}^2(2fn_u){\rm{log}}(2(N+f-1)n_u)}$, the weighting matrix $W_2$ in \eqref{E38} satisfies:
	
	(1) $W_2$ is positive definite.

	(2) $\norm{W_2}$ grows at most logarithmically with $N$.

	(3) $\norm{W_2^{-1}}$ is bounded.
\end{lemma}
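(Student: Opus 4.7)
The plan is to derive all three claims from \lemref{Lem1} by exploiting a Schur-complement identification. The regressor of the $i=f$ ARX model is $\Phi_{p,f} = \begin{bmatrix} Z_p^\top & U_f^\top\end{bmatrix}^\top$, so its empirical Gram matrix $\hat\Sigma_{p,f,N}$ decomposes into four blocks, with top-left $\tfrac{1}{N} Z_p Z_p^\top$, off-diagonals $\tfrac{1}{N} Z_p U_f^\top$ and its transpose, and bottom-right $\tfrac{1}{N} U_f U_f^\top$. The key observation is that $W_2^2 = \tfrac{1}{N} Z_p \Pi_{U_f}^\perp Z_p^\top$ is exactly the Schur complement of the $(2,2)$ block in $\hat\Sigma_{p,f,N}$, and by the block-inversion formula (Lemma~\ref{LemA8}) its inverse occupies the top-left block of $\hat\Sigma_{p,f,N}^{-1}$.

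From here I would proceed as follows. By \lemref{Lem1} applied with $i = f$, on an event of probability at least $1 - 3\delta/2$ (after rescaling the confidence parameter to match the $N_{pe}(\delta,p,f)$ burn-in), we have $\hat\Sigma_{p,f,N} \succcurlyeq \bar\sigma_{p,f}^2 I$. Inverting gives $\hat\Sigma_{p,f,N}^{-1} \preccurlyeq \bar\sigma_{p,f}^{-2} I$, and restricting the associated quadratic form to vectors that vanish on the $U_f$-block yields $(W_2^2)^{-1} \preccurlyeq \bar\sigma_{p,f}^{-2} I$, i.e.\ $W_2^2 \succcurlyeq \bar\sigma_{p,f}^2 I$. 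This simultaneously establishes positive definiteness (claim~1) and the dimension-free, $N$-independent bound $\|W_2^{-1}\| \leq \bar\sigma_{p,f}^{-1}$ (claim~3), with no further concentration needed.

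For claim~2, I would use the projection inequality $\Pi_{U_f}^\perp \preccurlyeq I$ to reduce the spectral norm of $W_2$ to that of $\tfrac{1}{N} Z_p Z_p^\top$, and then upper-bound that empirical covariance separately. Since $Z_p$ is a causal linear transformation of the Gaussian input and innovation sequences, with transfer coefficients controlled by the stable system of \assref{Asp1}, its population second moment $\Sigma_{p,f,N}$ is bounded at most polynomially in $N$ (cf.\ the state-norm bound used in \remref{Rmk6} and \lemref{LemA11}). A Hanson--Wright / Laurent--Massart-type tail inequality applied to the $(N+f-1)n_u$ underlying i.i.d.\ Gaussian scalars then shows that $\|\tfrac{1}{N} Z_p Z_p^\top\|$ deviates from $\|\Sigma_{p,f,N}\|$ by at most an $\mathcal{O}(\log N)$ term on an event of probability at least $1 - \delta_u$.

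The main obstacle is producing a tail bound whose failure probability is exactly the quasi-polynomial expression $\delta_u = (2(N+f-1)n_u)^{-\log^2(2fn_u)\log(2(N+f-1)n_u)}$. A plain Hanson--Wright bound gives exponentially small failure probability but not this particular exponent; getting the right form appears to require a chaining or moment-method argument tailored to polynomials in Gaussian variables whose degree scales with $fn_u$, which is what produces the $\log^2(2fn_u)$ factor in the exponent. A final union bound combining the PE event (probability at least $1 - 3\delta/2$) with the spectral-norm event (probability at least $1 - \delta_u$) then delivers the stated overall success probability $1 - 3\delta/2 - \delta_u$ under which all three properties hold.
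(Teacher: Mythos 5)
Your treatment of claims (1) and (3) is correct and essentially the paper's argument: identifying $\tfrac{1}{N}Z_p\Pi_{U_f}^{\perp}Z_p^\top$ as the Schur complement of the $(2,2)$ block of $\hat\Sigma_{p,f,N}$ and invoking \lemref{Lem1} with $i=f$ is exactly how the paper proceeds (it packages the eigenvalue transfer as statements (2) and (4) of \lemref{LemA8}, i.e.\ $\lambda_{\rm{min}}(\hat\Sigma_{p,f,N}) \leq \lambda_{\rm{min}}(\tfrac{1}{N}Z_p\Pi_{U_f}^{\perp}Z_p^\top)$, rather than inverting and restricting the quadratic form, but these are the same fact).

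The gap is in claim (2), and it stems from misidentifying what $\delta_u$ accounts for. You try to realize $\delta_u$ as the failure probability of a Hanson--Wright-type concentration bound on $\bigl\lVert \tfrac{1}{N}Z_pZ_p^\top\bigr\rVert$, and you correctly observe that no standard tail bound produces that quasi-polynomial exponent --- but that is because $\delta_u$ has nothing to do with bounding $Z_pZ_p^\top$. In the paper, $\delta_u$ is the failure probability of \lemref{Lem6} (quoted from Oymak and Ozay), which guarantees $\tfrac{1}{N}U_fU_f^\top \succcurlyeq \tfrac{1}{2}\sigma_u^2 I$; this is the prerequisite for $\Pi_{U_f}^{\perp}$ to be well defined and for the Schur-complement identification to make sense, and the peculiar form of $\delta_u$ is simply inherited from that cited result. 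The upper bound on $\bigl\lVert \tfrac{1}{N}Z_pZ_p^\top\bigr\rVert$ is obtained much more cheaply, via the matrix Markov inequality (\lemref{LemA4}, instantiated as \lemref{Lem7}): $\tfrac{1}{N}Z_pZ_p^\top \preccurlyeq \tfrac{3d_{p,0}}{\delta}\Sigma_{p,0,N}$ with failure probability folded into the $3\delta/2$ budget. The logarithmic growth in $N$ then comes not from a concentration deviation term but from the dimension factor $d_{p,0} = p(n_u+n_y)$ with $p=\beta\log N$ (the population covariance $\Sigma_{p,0,N}$ being bounded for the stable system of \assref{Asp1} by \lemref{LemA11}). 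So the chaining or moment-method machinery you were searching for is not needed, and as written your proof of claim (2) cannot be completed along the route you propose.
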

\begin{proof}
	See Appendix \ref{App4}.
\end{proof}

Now we study the robustness of SVD. Since other weighting matrices in Table \ref{Tab1} have the same properties as in Lemma \ref{Lem4}, we will henceforth not specify a pair but use $W_1$ and $W_2$ to represent them universally. For simplicity, we further assume that $W_1$ and $W_2$ satisfy the conditions in Lemma \ref{Lem4} almost surely. The weighted SVD of the true value of ${\mathcal{H}}_{fp}$ is 
\begin{equation} \label{E39}
	W_1{\mathcal{H}}_{fp}W_2 = \begin{bmatrix}
		U_{1}&U_0
	\end{bmatrix}\begin{bmatrix}
		\Lambda_{1}&0\\0&0
	\end{bmatrix}\begin{bmatrix}
		V_{1}&V_0
	\end{bmatrix}^\top,
\end{equation}
where $\Lambda_{1} \succ 0$ contains the $n_x$ largest singular values. A balanced realization for $\Gamma_{f}$ and $L_p$ is 
\begin{equation} \label{E40}
	\bar{\Gamma}_{f} = W_1^{-1}U_{1}\Lambda_{1}^{1/2}, \bar{L}_p = \Lambda_{1}^{1/2}V_{1}^\top W_2^{-1}.		
\end{equation}
Then, we obtain the following robustness results regarding the estimates of  ${\Gamma}_{f}$ and $L_p$.
\begin{theorem} \label{The3}
	If the following condition is satisfied:
	\begin{equation} \label{E41}
		\norm{W_1\hat{\mathcal{H}}_{fp}W_2- W_1{\mathcal{H}}_{fp}W_2} \leq \frac{\sigma_{n_x} (W_1{\mathcal{H}}_{fp}W_2)}{4},
	\end{equation}
	then there exists an orthogonal matrix $T$, such that for a failure probability $0< \delta < 1$, if $N \geq \max \limits_{1\leq i\leq f} \left\{N_{pe}(\frac{\delta}{9f},\beta {\rm{log}} N,i)\right\}$, then with probability at least $1-2\delta$, we have
	\begin{subequations} \label{E42}
		\begin{align}
			\norm{\hat{\Gamma}_{f}- \bar{\Gamma}_{f}T} &\leq \sqrt{\frac{40n_x}{\sigma_{n_x} ({\mathcal{H}}_{fp})}} \norm{\hat{\mathcal{H}}_{fp}- {\mathcal{H}}_{fp}}W_{\Gamma},\\
			\norm{\hat{L}_p- T^\top\bar{L}_p} &\leq \sqrt{\frac{40n_x}{\sigma_{n_x} ({\mathcal{H}}_{fp})}} \norm{\hat{\mathcal{H}}_{fp}- {\mathcal{H}}_{fp}}W_{L},
		\end{align}
	\end{subequations}
	where $W_{\Gamma} = \norm{W_1}\norm{W_2}\norm{W_1^{-1}}^\frac{3}{2}\norm{W_2^{-1}}^\frac{1}{2}$ and $W_{L} = \norm{W_1}\norm{W_2}\norm{W_2^{-1}}^\frac{3}{2}\norm{W_1^{-1}}^\frac{1}{2}$.
\end{theorem}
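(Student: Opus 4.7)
The proof splits naturally into a deterministic SVD perturbation argument and a concluding probabilistic step; the gap condition \eqref{E41} is precisely the hypothesis required by the deterministic part. The plan is to set $M := W_1\mathcal{H}_{fp}W_2$ and $\hat M := W_1\hat{\mathcal{H}}_{fp}W_2$, apply a low-rank balanced-factor perturbation lemma to the pair $(M,\hat M)$, propagate the bound through the weightings and through the singular-value transfer, and finally invoke \theref{The2} to control $\norm{\hat{\mathcal{H}}_{fp} - \mathcal{H}_{fp}}$.

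For the deterministic core, $M$ has rank $n_x$ (since $\mathcal{H}_{fp}$ does and $W_1,W_2$ are invertible), and by \eqref{E41} the perturbation $\norm{\hat M - M}$ does not exceed $\sigma_{n_x}(M)/4$. The balanced-factor perturbation lemma of \cite{Tu2016low} (in the form used by \cite{Oymak2019non}) then yields a \emph{single} orthogonal $T \in \mathbb{R}^{n_x\times n_x}$ with
\[
\max\!\Bigl\{\norm{\hat U_1 \hat\Lambda_1^{1/2} - U_1 \Lambda_1^{1/2} T},\ \norm{\hat V_1 \hat\Lambda_1^{1/2} - V_1 \Lambda_1^{1/2} T}\Bigr\} \leq \sqrt{\frac{40\, n_x}{\sigma_{n_x}(M)}}\,\norm{\hat M - M}.
\]
The factor $n_x$ and the constant $40$ come from a Frobenius-to-spectral conversion on a rank-at-most-$2n_x$ error (Eckart--Young gives $\norm{\hat M_{n_x}- M} \le 2\norm{\hat M - M}$, and $\norm{\cdot}_F \le \sqrt{2n_x}\norm{\cdot}$ on the rank-$2n_x$ difference) combined with the universal constant $2/(\sqrt{2}-1)$ of the Procrustes lemma. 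I then propagate the weightings: since $\hat\Gamma_f - \bar\Gamma_f T = W_1^{-1}(\hat U_1\hat\Lambda_1^{1/2} - U_1\Lambda_1^{1/2} T)$, submultiplicativity contributes $\norm{W_1^{-1}}$, and analogously $\norm{W_2^{-1}}$ for $\hat L_p - T^\top \bar L_p$; on the right-hand side, $\norm{\hat M - M} \le \norm{W_1}\norm{W_2}\norm{\hat{\mathcal{H}}_{fp} - \mathcal{H}_{fp}}$. Finally, the singular-value inequality
\[
\sigma_{n_x}(W_1 \mathcal{H}_{fp} W_2) \geq \sigma_{\min}(W_1)\,\sigma_{n_x}(\mathcal{H}_{fp})\,\sigma_{\min}(W_2) = \frac{\sigma_{n_x}(\mathcal{H}_{fp})}{\norm{W_1^{-1}}\norm{W_2^{-1}}}
\]
transfers the gap back to $\sigma_{n_x}(\mathcal{H}_{fp})$ at the cost of an additional $\norm{W_1^{-1}}^{1/2}\norm{W_2^{-1}}^{1/2}$ under the square root. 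Multiplying these prefactors reproduces exactly $W_\Gamma = \norm{W_1}\norm{W_2}\norm{W_1^{-1}}^{3/2}\norm{W_2^{-1}}^{1/2}$ and $W_L = \norm{W_1}\norm{W_2}\norm{W_2^{-1}}^{3/2}\norm{W_1^{-1}}^{1/2}$ as displayed in \eqref{E42}.

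The probabilistic step is essentially a direct call: the only randomness enters through $\norm{\hat{\mathcal{H}}_{fp} - \mathcal{H}_{fp}}$, which under the burn-in $N \geq \max_{1\leq i\leq f} N_{pe}(\delta/(9f),\beta\log N,i)$ is controlled with probability at least $1-2\delta$ by \theref{The2}. The main obstacle I anticipate is not the bookkeeping of weights but securing a \emph{single} orthogonal $T$ that simultaneously aligns the left and right balanced factors. A factor-by-factor application of Wedin's theorem would yield two different orthogonal matrices for $U_1$ and $V_1$, which would destroy the product form $\hat U_1\hat\Lambda_1\hat V_1^\top$ and, consequently, the compatibility of $\hat\Gamma_f$ with $\hat L_p$ under a common coordinate change needed for Step 3 of the roadmap. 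The Procrustes-style joint analysis in \cite{Tu2016low}, formulated precisely for balanced factorizations $XY^\top$ with $X = U\Sigma^{1/2}$ and $Y = V\Sigma^{1/2}$, resolves exactly this coupling, and the constant $1/4$ in the gap hypothesis \eqref{E41} is what its proof requires.
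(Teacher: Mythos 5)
Your proposal is correct and follows essentially the same route as the paper's own proof: the same identification $M = W_1\mathcal{H}_{fp}W_2$, $\hat M = W_1\hat{\mathcal{H}}_{fp}W_2$, the same invocation of the balanced-factor perturbation lemma (Lemma~\ref{LemA9}) yielding a single orthogonal $T$, the same propagation $\norm{\hat M - M} \leq \norm{W_1}\norm{W_2}\norm{\hat{\mathcal{H}}_{fp}-\mathcal{H}_{fp}}$ together with $\sigma_{n_x}(W_1\mathcal{H}_{fp}W_2)\geq \sigma_{\min}(W_1)\sigma_{n_x}(\mathcal{H}_{fp})\sigma_{\min}(W_2)$, and the same final call to Theorem~\ref{The2} for the probabilistic content. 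Your additional sketch of where the constant $40$ and the joint-orthogonal-alignment property come from is consistent with the cited lemma, which the paper simply invokes without reproving.
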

\begin{proof}
	See Appendix \ref{App4}.
\end{proof}
\begin{remark} \label{Rmk10}
Compared to a stochastic non-singular matrix $T$ in Problem \ref{Prom1}, matrix $T$ is constrained to be an orthogonal matrix in Theorem \ref{The3}. This is mainly due to Lemma \ref{LemA9}. Furthermore, according to the eigenvalue decomposition, every non-singular matrix has an associated orthogonal matrix. Therefore, such a constraint will not affect the generality of our results.
\end{remark}
\vspace{-3mm}
\subsection{Bounds on System Matrices} \label{Sct5.2}
Having obtained upper bounds on the extended observability matrix $\bar{\Gamma}_{f}$ and controllability matrix $\bar{L}_p$ in Step 2, we now move to the final Step 3 to derive error bounds on the system matrices, where two realization algorithms are studied.

\subsubsection{CVA Type Realization}
The error bounds on system matrices from the realization algorithm \eqref{E20} are follows:
\begin{theorem} \label{The4}
	If the condition \eqref{E41} is satisfied, then there exists an orthogonal matrix $T$, such that for a failure probability $0< \delta < 1$, if $N \geq \max \limits_{1\leq i\leq f} \left\{N_{pe}(\frac{\delta}{9f},\beta {\rm{log}} N,i)\right\}$, then with probability at least $1-2\delta$, we have
	\begin{subequations} \label{E43}
		\begin{align}
			&\norm{\hat{C}- \bar{C}T} \leq c_4 \norm{\hat{L}_p- T^\top \bar L_p}  + c_5 \frac{\epsilon_{0,B}}{N} + c_6 \frac{\epsilon_{0,E}}{\sqrt{N}}, \label{E43a}\\
			\nonumber
			&{\rm {max}}\left\{\norm{\hat{A}- T^\top \bar{A}T},\norm{\hat{B}- T^\top\bar{B}} \right\} \leq \\
			& c_7 \norm{\hat{L}_p- T^\top \bar L_p}  + c_8 \frac{\epsilon_{1,B}}{N} + c_9 \frac{\epsilon_{1,E}}{\sqrt{N}}, \label{E43b}
		\end{align}
	\end{subequations}
	where detailed expressions of $c_4,\dots,c_9$, $\epsilon_{0,B}$, $\epsilon_{0,E}$, $\epsilon_{1,B}$ and $\epsilon_{1,E}$ are given in Appendix \ref{App5}.
\end{theorem}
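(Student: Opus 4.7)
I prove the $\hat{C}$ bound first and then adapt the argument to $\hat{A}$ and $\hat{B}$. Substituting the output equation \eqref{E19a} into \eqref{E20a}, expanding $X_k = L_p Z_p + A_K^p X_{k-p}$ as in \eqref{E7}, rewriting the linear part in the balanced coordinates underlying \eqref{E40}, and using the identity $\bar{C} T\, \hat{X}_k \hat{X}_k^{\dagger} = \bar{C} T$ (valid whenever $\hat{X}_k$ has full row rank), I obtain the decomposition
\begin{equation*}
\hat{C} - \bar{C} T \,=\, \bar{C}\bigl(\bar{L}_p - T\hat{L}_p\bigr) Z_p \hat{X}_k^{\dagger} \,+\, C A_K^p X_{k-p}\hat{X}_k^{\dagger} \,+\, \Sigma_e^{1/2} E_{f1} \hat{X}_k^{\dagger}.
\end{equation*}
The three summands correspond exactly to the three contributions on the right-hand side of \eqref{E43a}, and it remains to bound each.

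\textbf{Bounding the three pieces.} For the first summand, orthogonality of $T$ gives $\norm{\bar{L}_p - T\hat{L}_p} = \norm{\hat{L}_p - T^\top \bar{L}_p}$, supplying the prefactor in \eqref{E43a}; the remaining factor $\norm{Z_p \hat{X}_k^{\dagger}}$ rewrites as $\norm{(Z_pZ_p^\top)^{1/2} \bigl(\hat{L}_p (Z_pZ_p^\top)^{1/2}\bigr)^{\dagger}}$ and is controlled by \lemref{Lem1} combined with the Weyl-type lower bound $\sigma_{n_x}(\hat{L}_p) \gtrsim \sigma_{n_x}(\bar{L}_p)$ ensured by condition \eqref{E41}. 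The truncation-bias summand is bounded via \assref{Asp2}, which gives $\norm{CA_K^p}\,\norm{\Sigma_{x,N}} \le N^{-3}$, together with a sub-Gaussian tail bound on $\norm{X_{k-p}}$ in the spirit of the proof of \lemref{Lem3} and the same PE lower bound on $\hat{X}_k \hat{X}_k^\top / N$; this yields the $\epsilon_{0,B}/N$ rate. Finally, the cross-term factors as $\Sigma_e^{1/2} E_{f1} Z_p^\top \hat{L}_p^\top \bigl(\hat{L}_p Z_p Z_p^\top \hat{L}_p^\top\bigr)^{-1}$, and since $e_{k+p}$ is independent of $Z_p$ column-wise in the causal order, the self-normalized martingale concentration of \lemref{Lem2} applied to $\norm{E_{f1} Z_p^\top}/\sqrt{N}$ produces the $\epsilon_{0,E}/\sqrt{N}$ rate.

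\textbf{Extension to $\hat{A}$ and $\hat{B}$.} For the state equation \eqref{E19b}, I repeat the construction with the enlarged regressor $\Phi = [\hat{X}_k^\top, U_{f1}^\top]^\top$. Substituting $X_k^{+} = A X_k + B U_{f1} + K \Sigma_e^{1/2} E_{f1}$ into \eqref{E20b} and using $X_k^{+} = L_p Z_p^{+} + A_K^p X_{k-p+1}$ produces, in complete analogy with the $\hat{C}$ case, three contributions: one proportional to $\norm{\hat{L}_p - T^\top \bar{L}_p}$ (from the replacements $X_k \mapsto \hat{X}_k$ and $X_k^{+} \mapsto \hat{X}_k^{+}$), a truncation bias now involving both $X_{k-p}$ and $X_{k-p+1}$ and yielding $\epsilon_{1,B}/N$, and a cross-term $K \Sigma_e^{1/2} E_{f1}$ projected against $\Phi$ and yielding $\epsilon_{1,E}/\sqrt{N}$. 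The Schur-complement structure of $\Phi \Phi^\top$, handled via the block-matrix inverse of \lemref{LemA8}, disentangles the contributions of $\hat{X}_k$ and $U_{f1}$ and delivers the bounds on $\hat{A} - T^\top \bar{A} T$ and $\hat{B} - T^\top \bar{B}$ simultaneously.

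\textbf{Main obstacle.} The most delicate step is the cross-term: $\hat{X}_k = \hat{L}_p Z_p$ is itself data-dependent through $\hat{L}_p$, so $\hat{X}_k$ and $E_{f1}$ are only conditionally independent. I plan to separate the two dependencies by bounding $\norm{E_{f1} Z_p^\top}$ via the self-normalized martingale argument behind \lemref{Lem2} (exploiting that $e_{k+p}$ is independent of $Z_p$ column-wise) and treating the inverse factor $\bigl(\hat{L}_p Z_p Z_p^\top \hat{L}_p^\top\bigr)^{-1}$ via the empirical PE bound of \lemref{Lem1} combined with the $\sigma_{n_x}(\hat{L}_p)$ lower bound described above. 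The failure probabilities of these events, together with those of \lemref{Lem1} and \theref{The3}, will be balanced by rescaling $\delta$ so that the total failure probability is at most $2\delta$, as stated in the theorem.
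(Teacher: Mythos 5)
Your proposal is correct and follows essentially the same route as the paper: the identical three-way decomposition of $\hat{C}-\bar{C}T$ (and of $[\hat{A},\hat{B}]$ with the enlarged regressor $[\hat X_k^\top, U_{f1}^\top]^\top$) into a realization-error term proportional to $\norm{\hat L_p - T^\top \bar L_p}$, a truncation bias controlled by Assumption~\ref{Asp2} as in Lemma~\ref{Lem3} with $i=0$ (resp.\ $i=1$), and a cross-term handled by factoring out $E_{f1}Z_p^\dagger$ and applying the self-normalized martingale bound of Lemma~\ref{Lem2}, with the data-dependent factor $\hat L_p Z_pZ_p^\top\hat L_p^\top(\hat L_p Z_pZ_p^\top\hat L_p^\top)^{-1}$ bounded separately via the PE condition. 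The only cosmetic difference is that the paper bounds the stacked matrix $[\hat A - T^\top\bar A T,\ \hat B - T^\top\bar B]$ directly (each block's norm being dominated by the whole), rather than disentangling the two blocks through the Schur complement as you suggest.
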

\begin{proof}
	See Appendix \ref{App5}.
\end{proof}

\subsubsection{MOESP Type Realization}
The error bounds on system matrices from the realization algorithm \eqref{E21} are follows:
\begin{theorem} \label{The5}
	If the condition \eqref{E41} is satisfied, then there exists an orthogonal matrix $T$, such that for a failure probability $0< \delta < 1$, if $N \geq \max \limits_{1\leq i\leq f} \left\{N_{pe}(\delta/(9f),\beta {\rm{log}} N,i)\right\}$, then with probability at least $1-2\delta$, we have
	\begin{subequations} \label{E44}
		\begin{align}
			\norm{\hat{C}- \bar{C}T} &\leq \norm{\hat{\Gamma}_{f}- \bar{\Gamma}_{f}T}, \label{E44a} \\
			\norm{\hat{B}- T^\top\bar{B}} &\leq \norm{\hat{L}_p- T^\top\bar{L}_p}, \label{E44b} \\
			\norm{\hat{A}- T^\top \bar{A}T} &\leq \frac{\sqrt{\norm{\Gamma_{f}L_p}}+\sigma_o}{\sigma_o^2}\norm{\hat{\Gamma}_{f}- \bar{\Gamma}_{f}T}, \label{E44c}
		\end{align}
	\end{subequations}
	where $\sigma_o = {\rm{min}}\left(\sigma_{n_x}(\hat{\Gamma}_f^{-}),\sigma_{n_x}(\bar{\Gamma}_f^{-})\right)$.
\end{theorem}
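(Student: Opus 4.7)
My plan is that the MOESP realization in \eqref{E21} extracts the system matrices through elementary algebraic operations on $\hat{\Gamma}_f$ and $\hat{L}_p$---row selection, column selection, and one pseudo-inverse composition---so all three bounds in Theorem \ref{The5} reduce to standard perturbation estimates on the same event where Theorem \ref{The3} already controls $\norm{\hat{\Gamma}_f-\bar{\Gamma}_f T}$ and $\norm{\hat{L}_p-T^\top\bar{L}_p}$. The probabilistic content and the orthogonal matrix $T$ are inherited from Theorem \ref{The3}; the present theorem is then deterministic on that event of probability at least $1-2\delta$.

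The bounds \eqref{E44a} and \eqref{E44b} are immediate. Since $\hat{C}$ is the first $n_y$ rows of $\hat{\Gamma}_f$ by \eqref{E21a}, and right multiplication by $T$ acts column-wise on $\bar{\Gamma}_f$, the matrix $\bar{C}T$ equals the corresponding block row of $\bar{\Gamma}_f T$; hence $\hat{C}-\bar{C}T$ is a sub-matrix of $\hat{\Gamma}_f-\bar{\Gamma}_f T$, and deleting rows cannot increase the spectral norm. An identical argument applied column-wise to \eqref{E21c}, noting that $T^\top$ acts row-wise and leaves the column-block structure of $\bar{L}_p$ intact, gives \eqref{E44b}.

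The hard part is \eqref{E44c}. I would first exploit the orthogonality of $T$ together with the full column rank of $\bar{\Gamma}_f^-$ (ensured by Assumption \ref{Asp1}(2) once $f-1\geq n_x/n_y$) to derive $T^\top(\bar{\Gamma}_f^-)^\dagger=(\bar{\Gamma}_f^- T)^\dagger$ from the SVD of $\bar{\Gamma}_f^-$. This yields the decomposition
\begin{equation*}
\hat{A}-T^\top\bar{A}T=(\hat{\Gamma}_f^-)^\dagger\bigl(\hat{\Gamma}_f^+-\bar{\Gamma}_f^+ T\bigr)+\bigl[(\hat{\Gamma}_f^-)^\dagger-(\bar{\Gamma}_f^- T)^\dagger\bigr]\bar{\Gamma}_f^+ T.
\end{equation*}
The first summand is bounded by $\norm{(\hat{\Gamma}_f^-)^\dagger}\cdot\norm{\hat{\Gamma}_f^+-\bar{\Gamma}_f^+ T}\leq\norm{\hat{\Gamma}_f-\bar{\Gamma}_f T}/\sigma_o$, again using that $\hat{\Gamma}_f^+-\bar{\Gamma}_f^+ T$ is a row sub-matrix of $\hat{\Gamma}_f-\bar{\Gamma}_f T$. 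For the second summand, the standard Wedin-type pseudo-inverse perturbation inequality for matrices of equal full column rank gives $\norm{(\hat{\Gamma}_f^-)^\dagger-(\bar{\Gamma}_f^- T)^\dagger}\lesssim\max(\norm{(\hat{\Gamma}_f^-)^\dagger},\norm{(\bar{\Gamma}_f^- T)^\dagger})^2\norm{\hat{\Gamma}_f-\bar{\Gamma}_f T}$, contributing the $1/\sigma_o^2$ factor. The remaining norm $\norm{\bar{\Gamma}_f^+ T}=\norm{\bar{\Gamma}_f^+}\leq\norm{\bar{\Gamma}_f}$ is controlled via the balanced realization identity $\bar{\Gamma}_f=W_1^{-1}U_1\Lambda_1^{1/2}$, which yields $\norm{\bar{\Gamma}_f}\leq\sqrt{\norm{\Lambda_1}}\,\norm{W_1^{-1}}\lesssim\sqrt{\norm{\Gamma_f L_p}}$ after absorbing weighting constants. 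Summing the two contributions reproduces the advertised factor $(\sqrt{\norm{\Gamma_f L_p}}+\sigma_o)/\sigma_o^2$.

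The main obstacle I expect is verifying the applicability of the pseudo-inverse perturbation inequality: both $\hat{\Gamma}_f^-$ and $\bar{\Gamma}_f^- T$ must have rank $n_x$ for the inequality (and hence $\sigma_o>0$) to make sense. Condition \eqref{E41} is precisely the device that keeps $\hat{\Gamma}_f$ within a perturbation radius of $\bar{\Gamma}_f T$ at which, by Weyl's inequality, $\sigma_{n_x}(\hat{\Gamma}_f^-)$ remains strictly positive. Once this rank-preservation is checked, no additional probabilistic argument is required and the remainder is routine norm bookkeeping together with the sub-matrix observations already used in the $\hat{C}$ and $\hat{B}$ cases.
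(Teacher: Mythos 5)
Your proposal is correct and is essentially the argument the paper relies on: the paper omits this proof entirely, stating only that it is identical to Theorem 4 of Tsiamis and Pappas (2019), and your submatrix observations for $\hat{C}$, $\hat{B}$ plus the two-term decomposition of $\hat{A}-T^\top\bar{A}T$ with the Wedin pseudo-inverse perturbation bound (Lemma \ref{LemA10}) is exactly that argument. The only caveat is bookkeeping of constants: Lemma \ref{LemA10} introduces a factor $\sqrt{2}$ and the weighted realization gives $\norm{\bar{\Gamma}_f}\leq\norm{W_1^{-1}}\sqrt{\norm{W_1\mathcal{H}_{fp}W_2}}$ rather than $\sqrt{\norm{\Gamma_fL_p}}$ exactly, so the stated constant is recovered verbatim only for $W_1=W_2=I$ --- a looseness present in the theorem statement itself, not a flaw in your reasoning.
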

\begin{proof}
	See Appendix \ref{App5}.
\end{proof}
\vspace{-3mm}
\section{Discussions} \label{Sct6}

We now discuss the implications of our main results. Specifically, we will answer two key questions: how to extend our method to other variants of SIMs, and what does the finite sample analysis bring. 
\vspace{-3mm}
\subsection{Extensions to Other Methods} \label{Sct6.1}
It is clear that our results in Steps 2 and 3 cover a large class of SIMs. In Step 1, to strictly enforce a causal model, we opt for the multi-regression method used in PARSIM--one of the most appealing SIMs, to estimate the Hankel-like matrix ${\mathcal{H}}_{fp}$. The following observations suggest that the technique used in the analysis of PARSIM can be extended to other SIMs.

First, our analysis for PARSIM can be extended to the one-step regression method in \eqref{E11}. If we study the estimation error of ${\mathcal{H}}_{fp}$ in \eqref{E11} separately, the cross-term error will be  $E_f\Pi_{U_f}^{\perp}Z_p^\top(Z_p\Pi_{U_f}^{\perp}Z_p^\top)^{-1}$.  Due to the data-dependent projection matrix $\Pi_{U_f}^{\perp}$, the columns of $E_f$ and $Z_p$ are mixed together, bringing challenges to statistical analysis. However, this problem can be avoided if we study the total error of $\Theta$ in \eqref{E10}, which is equivalent to setting $i=f$ in PARSIM.

Second, our analysis can be extended to other SIMs that estimate high order ARX models using least-squares in their first step, such as SSARX\cite{Jansson2003subspace} and PBSID \cite{Chiuso2005consistency}. To be specific, for SSARX, it first estimates the predictor Markov parameters $\left\{{C{A_K^{i}}B,C{A_K^{i}}K}\right\}_{i=0}^{f-1}$ from a high-order ARX model, and then replaces the true Markov parameters in the transmission matrices with their estimates, and proceeds to estimate a Hankel-like matrix. PBSID, also known as the whitening filter approach \cite{Chiuso2005consistency}, starts from the predictor form \eqref{E1A}. Similar to PARSIM, it utilizes the structure of the transmission matrices to carry out multiple regressions parallelly. It is clear that our methods can be applied to the first step of SSARX, and to every step of PBSID in the ope-loop setting.
\vspace{-3mm}
\subsection{What Does Finite Sample Analysis Bring}  \label{Sct6.2}
Under the umbrella of this question, we analyze the implications of our results and validate them with simulations on a benchmark SISO system used in SIM studies \cite{Chiuso2007role}. It should be mentioned that the results in this paper directly extend to MIMO systems. The SISO system is given by
\begin{equation} \label{E45}
	y_k + ay_{k-1} = bu_{k-1} + e_k + ce_{k-1},
\end{equation}
where $a = -0.7$, $b = 1$ and $c = 0.5$. The innovations $e_{k} \sim \mathcal{N}(0,4)$. Two types of inputs are considered, one is a white input given by $u_k \sim \mathcal{N}(0,1)$, and the other is a colored input\footnote{It is important to note that due to Assumption \ref{Asp1}, our theoretical results do not apply to scenarios with colored inputs yet. However, using colored inputs in our simulations helps in demonstrating the behavior of SIMs.}, generated by a white noise $r_{k}\sim \mathcal{N}(0,1)$ passing through a filter $H_u(q^{-1}) =  \frac{0.318}{1-0.5q^{-1}+0.9q^{-2}}$, where $q^{-1}$ is the backward shift operator.
\subsubsection{Statistical Rates}
According to Theorems \ref{The2}, \ref{The3}, \ref{The4} and \ref{The5}, the error bounds on the Hankel-like matrix ${\mathcal{H}}_{fp}$, the extended observability matrix $\Gamma_{f}$, controllability matrix ${L}_p$, and system matrices $C$, $B$ and $A$ decay as $\mathcal{O}(1/\sqrt{N})$ up to logarithmic terms. Classical asymptotic results, such as the LIL, can tighten the log factor to $\mathcal{O}(\text{log}\text{log}N/\sqrt{N})$ \cite{Deistler1995consistency}. This suggests that our bounds are not tight, which is one of the downsides of the non-asymptotic analysis. It is possible to optimize our results in the future.

\subsubsection{Persistence of Excitation}
Similar to the PE condition for consistency analysis in the asymptotic regime \cite{Jansson1998consistency}, Lemma~\ref{Lem1} provides a non-asymptotic PE condition. Specifically, it guarantees the invertibility of the empirical covariance matrix ${\hat \Sigma}_{p,i,N}$ defined in \eqref{E27} by establishing a lower bound on its smallest eigenvalue. In practice, once data is available, one can directly verify PE by computing the smallest eigenvalue of ${\hat \Sigma}_{p,i,N}$. Nonetheless, Lemma \ref{Lem1} provides a theoretical threshold, denoted by the burn-in time $N_{pe}$ in \eqref{E30}, indicating the minimum sample size required to ensure invertibility of ${\hat \Sigma}_{p,i,N}$. Such a non-asymptotic result is valuable for designing experiments (e.g., determining excitation duration), and implementing stopping rules in adaptive control \cite{Jedra2022finite}.

\subsubsection{Dimensional Dependence}
As shown in \eqref{E30} and Theorem~\ref{The2}, the burn-in time $N_{pe}$ and error bounds scale with the problem dimension $d_{p,i}$ and the state dimension $n_x$. Such a dimensional dependence still holds when $n_x$ increases to the same order as $N$, whereas the results in the asymptotic regime are less meaningful in this case \cite{Tsiamis2023statistical}.

\subsubsection{A Sweet Spot for the Past Horizon $p$}
Assumption~\ref{Asp2} implies that to make the truncation bias ${\tilde \Theta}_{i}^B$ decay much faster than the cross-term error ${\tilde \Theta}_{i}^E$, the past horizon $p$ should increase at a proper rate with $N$, i.e., $p=\beta\text{log}N$, where $\beta$ is sufficiently large. Meanwhile, a larger $p$ means that there are more parameters to be estimated, thus implying a larger error bound. This highlights that, for a fixed $N$, there is a sweet spot for the choice of $p$. Similar conclusions are found in asymptotic analysis \cite{Chiuso2007role,Galrinho2018parametric}, suggesting that $p$ should grow moderately with $N$, neither too slow nor too fast. In practice though, $p$ can be selected using information criteria or a two-step procedure--either by minimizing the prediction error of the estimated state-space model or by directly minimizing the error bounds. We use the numerical example \eqref{E45} to demonstrate the existence of the sweet spot. We fix the future horizon at $f=7$, and vary the number of samples $\bar N = 500:1000:2500$ and the past horizon $p = 2:4:30$. The input is white, and the weighting matrices are chosen as $W_1=I$ and $W_2=I$. We run 100 Monte Carlo trials. The performance is evaluated by the average normalized error of the poles $\norm{\hat{a} - a}/\norm{a}$, where $\hat{a}$ is obtained using two realization algorithms. As shown in Figure \ref{F4}, for both two realization methods, when the number of samples is fixed, there is a sweet spot for $p$ that minimizes the errors. In addition, when the number of samples increases, the sweet spot for $p$ tends to increase as well.
\begin{figure}
	\centering
	\includegraphics[scale=0.5]{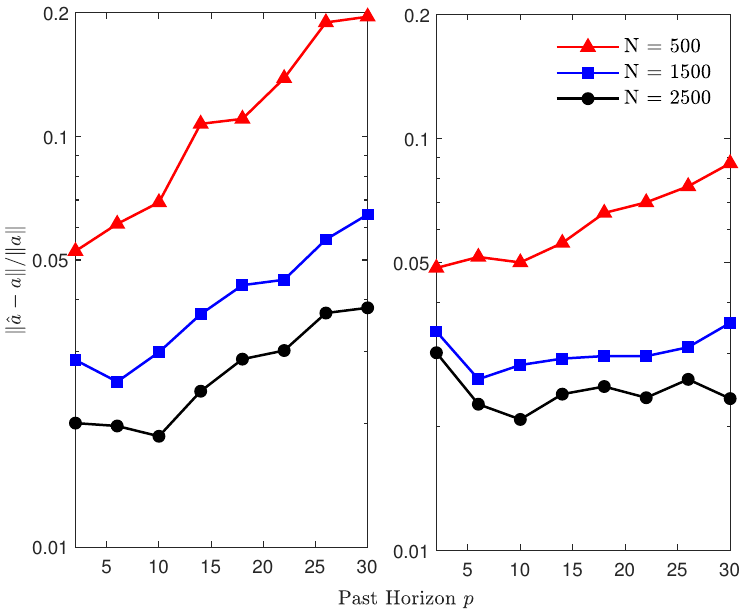}
	\caption{A sweet spot for past horizon: MOESP (left) and CVA (right) realizations.}
	\label{F4}
\end{figure}

\subsubsection{On the Impact of Weighting Matrices}
In the asymptotic regime, the impact of weighting matrices is discussed in \cite{Bauer2000impact,Bauer2002some,Bauer2005asymptotic,Gustafsson2002subspace}. At a high level, $W_1$ is related to a maximum likelihood or CVA objective, while $W_2$ is related to an orthogonal projection. In addition, $W_1$ has little impact on the asymptotic accuracy of $\Gamma_{f}$, and $W_2$ has little impact on the asymptotic accuracy of $L_{p}$. As shown in Theorem \ref{The3}, our work provides a new perspective on the impact of weighting matrices. To be specific, for the robustness of SVD, the condition \eqref{E41} should be satisfied, which guarantees that the singular vectors related to small singular values of $W_1{\mathcal{H}}_{fp}W_2$ can be separated from the singular vectors coming from the noise $W_1\left(\hat{\mathcal{H}}_{fp}- {\mathcal{H}}_{fp}\right)W_2$. Because $W_1$ and $W_2$ reshape these singular values, they can either relax or tighten the condition \eqref{E41}.
To see this clearly, we use the numerical example \eqref{E45} to show the impact of different weighting matrices in Table~\ref{Tab1}. The simulation settings are the same as before, and the past horizon is fixed at $p=f=7$. Both white input and colored input are considered. The performance is evaluated by the ratio 
\begin{equation}
	\kappa := \frac{\norm{W_1(\hat{\mathcal{H}}_{fp}-{\mathcal{H}}_{fp})W_2}}{\sigma_{n_x} (W_1{\mathcal{H}}_{fp}W_2)}.
\end{equation}
We determine whether the condition \eqref{E41} is satisfied or not by comparing $\kappa$ with $1/4$. The results are shown in Figure \ref{F2}\footnote{Note that N4SID gives almost identical results as MOESP, and IVM gives almost identical results as CVA, so only results for MOESP, CVA and OKID are presented.}. 
\begin{figure}
	\centering
	\includegraphics[scale=0.5]{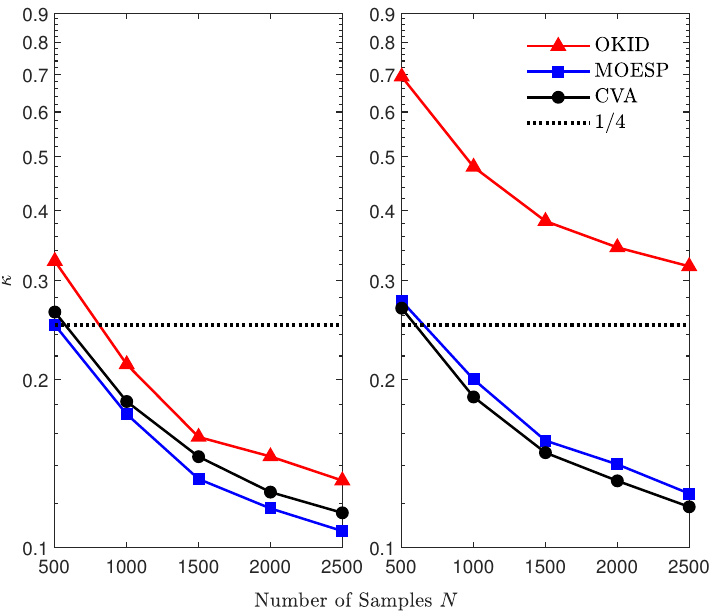}
	\caption{Robustness condition: white input (left) and colored input (right), where the three curves are from three types of weighting matrices.}
	\label{F2}
\end{figure}

First, Figure \ref{F2} suggests that different weighting matrices result in different number of samples required for the condition \eqref{E41} to be satisfied. Since $\norm{\hat{\mathcal{H}}_{fp}-{\mathcal{H}}_{fp}}$ decays as $\mathcal{O}(1/\sqrt{N})$, no matter what pair of weighting matrices we choose, $\kappa \leq 1/4$ will be eventually satisfied as $N \to \infty$. However, a good choice of weighting matrices makes it easier to satisfy this condition, such as the MOESP weighting.

Second, both weighting matrices $W_1$ and $W_2$ affect the robustness condition. As shown in Figure \ref{F2}, MOESP and CVA employ different $W_1$ and the same $W_2$, which result in different robustness conditions. Meanwhile, MOESP and OKID employ different $W_2$ and the same $W_1$, which also result in different robustness conditions. 

Third, the impact of weighting matrices is input-dependent. As shown in Figure \ref{F2}, it is easier for the MOESP weighting to satisfy the condition \eqref{E41} when the input is white than colored.

Fourth, besides their impact on the robustness condition, the weighting matrices also influence the estimation accuracy. It should be emphasized that a pair of weighting matrices making the robustness condition easier to achieve does not mean that they also imply a smaller estimation error. To illustrate this, we choose the estimate of poles coming from two realization algorithms to demonstrate the impact of the weighting matrices. The simulation settings are same as before, and only the white input is considered. The performance is evaluated by the normalized error of the poles $\norm{\hat{a} - a}/\norm{a}$. The results are shown in Figure \ref{F3}. Based on the right subplots of Figures \ref{F2} and \ref{F3}, we see that compared to the CVA weighting, although the MOESP weighting makes the robustness condition easier to achieve, it increases the estimation error of the poles.
\begin{figure}
	\centering
	\includegraphics[scale=0.5]{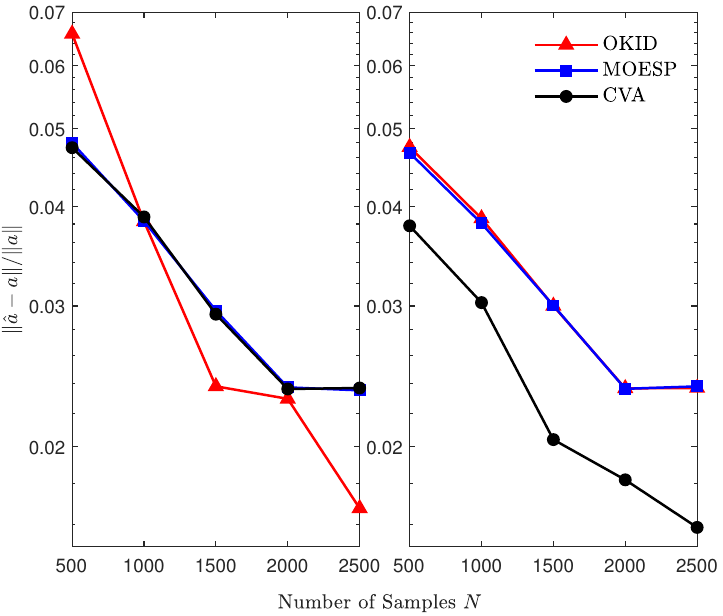}
	\caption{Normalized error of poles: MOESP (left) and Larimore (right), where the three curves are from three types of weighting matrices\protect\footnotemark.}
	\label{F3}
\end{figure}
\footnotetext{Although the OKID weighting performs better for the MOESP type of realization when $N$ becomes larger in this simple example, it generally does not outperform other methods in most cases. Additionally, since we estimate ${\mathcal{H}_{f,p}}$ using PARSIM, the best results of OKID is primarily attributable to PARSIM rather than the original OKID approach which estimates ${\mathcal{H}_{f,p}}$ in a different way \cite{Phan1995improvement}.}

In addition, as shown in Figure \ref{F3}, $W_1$ has minor influence on the estimate of the poles for the MOESP type realization, and $W_2$ has minor influence on the estimate of the poles for the Larimore type realization. This is consistent with an analysis in the asymptotic regime \cite{Bauer2000impact,Bauer2002some,Bauer2005asymptotic,Gustafsson2002subspace}. However, this conclusion cannot be obtained through our finite sample analysis. This comparison underscores the view that both asymptotic and non-asymptotic methods are valuable in uncovering the statistical properties of SIMs, and they complement each other.

Finally, we remark that the robustness condition \eqref{E41} is a sufficient condition, and our results are upper bounds. It is not sufficient to determine the best choice of weighting matrices solely based on the criteria of facilitating the achievement of robustness conditions and minimizing the upper bounds. To fully grasp the influence of the weighting matrices and develop an optimal choice, further study is needed.
\vspace{-3mm}
\section{Conclusion} \label{Sct7}

This paper presents a finite sample analysis for a large class of open-loop SIMs. Compared with the-state-of-art that mainly analyzes the performance
of the Ho-Kalman algorithm or similar variants, we investigate one of the most representative SIMs, PARSIM. Our analysis establishes a more general PE condition, and takes the different weighting matrices and two realization algorithms into account. It not only confirms that the convergence rates for estimating the Markov parameters and system matrices are $\mathcal{O}(1/\sqrt{N})$ even in the presence of inputs, in line with classical asymptotic results, but it also provides high-probability upper bounds for these estimates. Our findings complement the existing asymptotic results, and methodologies can be similarly applied to many variants of SIMS, such as classical SIMs, SSARX and PBSID.
\vspace{-3mm}
\section*{References}
\bibliographystyle{IEEEtran}
\bibliography{refs}

\appendices 
\numberwithin{equation}{section}
\vspace{-3mm}
\section{Burn-in Time and Persistence of Excitation} \label{App1}
\subsection{Proof of Lemma \ref{Lem1}}

Based on \eqref{AE2}, we have that 
\begin{equation}  \label{AE3}
	\begin{split}
		N{\hat \Sigma}_{p,i,N} = &\mathcal{O}_{p,i}X_{k-p}X_{k-p}^\top\mathcal{O}_{p,i}^\top + \mathcal{T}_{p,i}\Lambda_{u,e}W_{p,i}W_{p,i}^\top\Lambda_{u,e}^\top\mathcal{T}_{p,i}^\top + \\ 
		&\mathcal{O}_{p,i}X_{k-p}W_{p,i}^\top\Lambda_{u,e}^\top\mathcal{T}_{p,i}^\top +\mathcal{T}_{p,i}\Lambda_{u,e}W_{p,i} X_{k-p}^\top\mathcal{O}_{p,i}^\top.
	\end{split}    	
\end{equation}
In the following, we use a three-step procedure to show that ${\hat \Sigma}_{p,i,N}$ is invertible with high probability. 

Step 1 (Noises and inputs PE): Here we establish the PE condition for the corrections of noises and inputs, i.e., the term $\mathcal{T}_{p,i}\Lambda_{u,e}W_{p,i}W_{p,i}^\top\Lambda_{u,e}^\top\mathcal{T}_{p,i}^\top$. Since $\mathcal{T}_{p,i}\Lambda_{u,e}$ is a full-row rank matrix, we mainly show that the following event occurs with high probability:
\begin{equation} \label{AE4}
	\mathcal{E}_{i,W} = \left\{W_{p,i}W_{p,i}^\top = \sum_{k=1}^{N}w_{p,i}(k)w_{p,i}^\top(k) \succcurlyeq \frac{N}{2}I \right\}.
\end{equation}
Since $W_{p,i}$ contains two block Hankel matrices, existing lower bounds on individual Hankel matrices in \cite{Tsiamis2019finite,Oymak2021revisiting} cannot be directly used. We propose another approach to bound it. Based on \eqref{E23}, we have $\mathbb{E}\left[w_{p,i}(k)w_{p,i}^\top(k)\right] = I$, which further gives
\begin{equation} \label{AE5}
	\sigma_{\min}\left(\mathbb{E}\left[w_{p,i}(k)w_{p,i}^\top(k)\right]\right) = 1.
\end{equation}    
We now provide a upper bound for $\norm{w_{p,i}(k)}$. Since $e_{k} \sim \mathcal{N}(0,I)$ and $u_{k} \sim \mathcal{N}(0,\sigma_u^2I)$, we conclude that $u_k$ and $e_k$ are component-wise sub-Gaussian random variables with variance $\sigma_u^2$ and $1$, respectively. Therefore, for all $1\leq k \leq N$, with probability at least $1-\delta$, we have 
\begin{subequations} \label{AE6}
	\begin{align}
		\sigma_u^{-1}\norm{u_k} &\leq \bar u := n_u\sqrt{2n_u{\rm {log}}(\frac{32n_uN}{\delta})},  \\
		\norm{e_k} &\leq \bar e := n_y\sqrt{2n_y{\rm {log}}(\frac{32n_yN}{\delta})}.
	\end{align}
\end{subequations}
Furthermore, it is straightforward to see that 
\begin{equation}\label{AE7}
	\norm{w_{p,i}(k)} \leq \bar w_{p,i} := \bar e \sqrt{p} + \bar u \sqrt{p+i}.
\end{equation}
We now define 
\begin{subequations} \label{AE8}
	\begin{align}
		r_{p,i}(k) &= w_{p,i}(k)w_{p,i}^\top(k) - \mathbb{E}\left[w_{p,i}(k) w_{p,i}^\top(k)\right],  \\
		R_{p,i} &= \sum_{k=1}^{N}r_{p,i}(k),
	\end{align}
\end{subequations}
and their truncation
\begin{subequations} \label{AE8a}
	\begin{align}
		\tilde{r}_{p,i}(k) &= {r}_{p,i}(k)\mathbb{I}_{\left\{r_{p,i}(k) \preccurlyeq  2 \bar w_{p,i}^2I\right\}}, \\
		\tilde{R}_{p,i} &= \sum_{k=1}^{N}\tilde{r}_{p,i}(k).		
	\end{align}
\end{subequations}
Further, we define a upper bound $\bar r_{p,i} = 2\bar w_{p,i}^2 \sqrt{2{\rm{log}}\left(\frac{d_i}{\delta}\right)}$. According to Lemma \ref{LemA2-1}, we have
\begin{equation} \label{AE9}
	\begin{split}
		\mathbb{P}\left(R_{p,i} \succ \bar r_{p,i}\sqrt{N}I\right) \leq &\mathbb{P}\left(\tilde{R}_{p,i} \succ \bar r_{p,i}\sqrt{N}I\right) + \\
		&\mathbb{P}\left(\max\limits_{1\leq k\leq N} r_{p,i}(k) \succcurlyeq 2 \bar w_{p,i}^2I\right). 
	\end{split}	   	
\end{equation}
Based on Lemma \ref{LemA2} and \eqref{AE7}, each term on the right hand side of \eqref{AE9} is bounded by $\delta$. Thus, with probability of at least $1-2\delta$, we have 
\begin{equation} \label{AE10}
	\lambda_{\rm{max}}\left(R_{p,i}\right) \leq \bar r_{p,i}\sqrt{N}.
\end{equation}
Combining \eqref{AE5} and \eqref{AE10}, and using Weyl's inequality in Lemma \ref{LemA3}, we have that with probability at least $1-2\delta$,
\begin{equation} \label{AE11}
	\sigma_{\rm{min}}\left(\sum_{k=1}^{N}w_{p,i}(k)w_{p,i}^\top(k)\right) \geq N - \bar r_{p,i}\sqrt{N}.
\end{equation}
Define 
\begin{equation} \label{AE12}
	N_W\left(\delta,p,i\right) := 4\bar r_{p,i}^2.
\end{equation}
For $N\geq N_W\left(\delta,p,i\right)$, we then have that with probability at least $1-2\delta$, 
\begin{equation} \label{AE13}
	\sigma_{\rm{min}}\left(\sum_{k=1}^{N}w_{p,i}(k)w_{p,i}^\top(k)\right) \geq \frac{N}{2} > 0.
\end{equation}

Step 2 (Cross terms are small): We now show that the cross term $\mathcal{O}_{p,i}X_{k-p}W_{p,i}^\top\Lambda_{u,e}^\top\mathcal{T}_{p,i}^\top$ is much smaller and its norm increases with a rate of at most $\mathcal{O}(\sqrt{N})$ up to logarithmic terms. We first use the Markov inequality in Lemma \ref{LemA4} to upper bound state correlations $X_{k-p}X_{k-p}^\top$, which states that the following event holds with probability $1-\frac{\delta}{2}$:
\begin{equation} \label{AE14}
	\mathcal{E}_{i,X} := \left\{X_{k-p}X_{k-p}^\top  \preccurlyeq N\frac{2n_x}{\delta} {\Sigma}_{x,N}\right\}.
\end{equation}
Using Lemma \ref{LemA5} we have that with probability $1-\frac{\delta}{2}$:
\begin{equation} \label{AE15}
	\begin{split}
		&\norm{\left(X_{k-p}X_{k-p}^\top+NI\right)^{-\frac{1}{2}}X_{k-p}W_{p,i}^\top}^2  \leq \\
		&4{\text{log}\frac{\text{det}\left(X_{k-p}X_{k-p}^\top+NI\right)}{\text{det}\left(NI\right)}} + 8\text{log}\frac{5^{d_i}2}{\delta}.
	\end{split}	
\end{equation}
Furthermore, conditioned on $\mathcal{E}_{i,X}$, inequality in \eqref{AE15} can be relaxed to that with probability $1-\delta$:
\begin{equation} \label{AE16}
	\mathcal{E}_{i,XE} := \left\{\norm{\left(X_{k-p}X_{k-p}^\top+NI\right)^{-\frac{1}{2}}X_{k-p}W_{p,i}^\top}^2 \leq C_{i,XE}\right\},
\end{equation}
where $C_{i,XE} = 4{\text{log}\left(\text{det}(I + \frac{2n_x}{\delta} \Sigma_{x,N})\right)} + 8\text{log}\frac{5^{d_i}2}{\delta}$. In this way, we have that with probability $1-\delta$,
\begin{equation} \label{AE17}
	\norm{X_{k-p}W_{p,i}^\top} \leq \norm{\left(X_{k-p}X_{k-p}^\top+NI\right)^{\frac{1}{2}}}\sqrt{C_{i,XE}},  	
\end{equation}
which implies that $\mathcal{O}_{p,i}X_{k-p}W_{p,i}^\top\Lambda_{u,e}^\top\mathcal{T}_{p,i}^\top$ increases at most $\mathcal{O}(\sqrt{N})$ up to logarithmic terms.

Step 3 (Inputs and outputs PE): From Steps 1 and 2, we summarize that for $N\geq N_W\left(N,\delta,p,i\right)$, the event $\mathcal{E}_{i,W} \cap \mathcal{E}_{i,XE}$ occurs with probability at least $1-3\delta$. In this step, we show that if the number of samples is large enough, then,
\begin{equation*}
	N{\hat \Sigma}_{p,i,N}  \succcurlyeq \mathcal{O}_{p,i}X_{k-p}X_{k-p}^\top\mathcal{O}_{p,i}^\top + \mathcal{T}_{p,i}\Lambda_{u,e}W_{p,i}W_{p,i}^\top\Lambda_{u,e}^\top\mathcal{T}_{p,i}^\top,   	
\end{equation*}
holds with high probability. For an arbitrary unit vector $v \in \mathbb{R}^{d_i}$, we define
\begin{subequations} \label{AE18}
	\begin{align}
		\alpha_i &:= \frac{1}{N}v^\top\mathcal{O}_{p,i}X_{k-p}X_{k-p}^\top\mathcal{O}_{p,i}^\top v, \\
		\beta_i &:=  \frac{1}{N}v^\top\mathcal{T}_{p,i}\Lambda_{u,e}W_{p,i}W_{p,i}^\top\Lambda_{u,e}^\top\mathcal{T}_{p,i}^\top v, \\
		\gamma_{i,N} &:= 2\norm{\mathcal{T}_{p,i}}\norm{\Lambda_{u,e}}\sqrt{\frac{C_{i,XE}}{N}}.
	\end{align}    	
\end{subequations}
Then, from \eqref{AE3}, we have that
\begin{equation*}
	\begin{split}
		v^\top{\hat \Sigma}_{p,i,N} v & = \alpha_i + \beta_i + 2v^\top\mathcal{O}_pX_{k-p}W_{p,i}^\top\Lambda_{u,e}^\top\mathcal{T}_{p,i}^\top v \\
		&\geq \alpha_i + \beta_i - \gamma_{i,N}\sqrt{1+\alpha_i},
	\end{split}    	
\end{equation*}
where the second inequality is due to \eqref{AE17}. Moreover, the event $\mathcal{E}_{i,W}$ implies that $\beta_i \geq \frac{\norm{\mathcal{T}_{p,i}}^2\norm{\Lambda_{u,e}}^2}{2} > 0$. Let $N_\Phi\left(\delta,p,i\right)$ be such that 
\begin{equation} \label{AE19}
	N_\Phi\left(\delta,p,i\right) := \min \left\{N: \gamma_{i,N} \leq \min\left\{1,\bar\sigma_{p,i}^2\right\}\right\}.
\end{equation}
Since $C_{i,XE}$ grows at most logarithmically with $N$, $N_\Phi\left(\delta,p,i\right)$ always exist. Using Lemma \ref{LemA12}, we conclude that with probability at least $1-3\delta$:
\begin{equation} \label{AE20}
	\alpha_i + \beta_i - \gamma_{i,N}\sqrt{1+\alpha_i} \geq \frac{\alpha_i + \beta_i}{2} \geq \frac{\beta_i}{2},
\end{equation}
where $\frac{\beta_i}{2}  = \frac{\norm{\mathcal{T}_{p,i}}^2\norm{\Lambda_{u,e}}^2}{4} := \bar\sigma_{p,i}^2$.
\hfill $\blacksquare$
\vspace{-3mm}
\section{Bound on Cross-term Error}  \label{App2}
\subsection{Proof of Lemma \ref{Lem2}}

To bound the cross-term error ${\tilde \Theta}_{i}^E$, we first define the following three events, where each of them occurs with probability at leat $1-\delta/3$:
\begin{equation}
	\nonumber
	\begin{split}
		\mathcal{E}_{i,\Phi_1} := &\left\{{\hat \Sigma}_{p,i,N}  \succcurlyeq \bar \sigma_{p,i}^2 I  \right\}, \\
		\mathcal{E}_{i,\Phi_2} := &\left\{{\hat \Sigma}_{p,i,N}  \preccurlyeq \frac{3d_{p,i}}{\delta} {\Sigma}_{p,i,N}  \right\}, \\
		\mathcal{E}_{i,\Phi_3} := &\left\{ \norm{ \sum_{k=1}^{N}e_i(k)\phi_{p,i}(k)^\top \left(\Sigma + N {\hat \Sigma}_{p,i,N}\right)^{-\frac{1}{2}}}^2 \right. \leq\\ 
		&  \left.4{\text{log}\frac{\text{det}(\Sigma + N {\hat \Sigma}_{p,i,N})}{\text{det}\left(\Sigma\right)}} + 8\left(\text{log}\frac{5^{n_y}3}{\delta} \right) \right\},
	\end{split}
\end{equation}
where matrix $\Sigma \succ 0$. The event $\mathcal{E}_{i,\Phi_1}$ is due to the PE condition in Lemma \ref{Lem1}, the event $\mathcal{E}_{i,\Phi_2}$ is derived from an extension of Markov's inequality in Lemma \ref{LemA4}, and the event $\mathcal{E}_{i,\Phi_3}$ is based on self-normalized martingales in Lemma \ref{LemA5}. 

The main idea is to show that the event in Lemma \ref{Lem2} is subsumed by the intersection of events $\mathcal{E}_{i,\Phi_1}$, $\mathcal{E}_{i,\Phi_2}$ and $\mathcal{E}_{i,\Phi_3}$, with $\mathbb{P}\left(\bigcap_{j=1}^3\mathcal{E}_{i,\Phi_j}\right) \geq 1-\delta$. According to \eqref{E26}, we have that  
\begin{equation}
	\nonumber
	\begin{split}
		\norm{{\tilde \Theta}_{i}^E}^2 &= \norm{\left(H_{fi}\sum_{k=1}^{N}\frac{1}{N}{e_i(k)\phi_{p,i}^\top(k)}{\hat \Sigma}_{p,i,N}^{-\frac{1}{2}}\right){\hat \Sigma}_{p,i,N}^{-\frac{1}{2}}}^2 \\
		&\leq  \underbrace{\norm{H_{fi}\sum_{k=1}^{N}\frac{1}{N}{e_i(k)\phi_{p,i}^\top(k)} {\hat \Sigma}_{p,i,N}^{-\frac{1}{2}}}^2}_{\text{noise term}} \underbrace{\norm{{\hat \Sigma}_{p,i,N}^{-1}}.}_{\text{excitation term}}
	\end{split}	
\end{equation}
The excitation term is bounded based on the event $\mathcal{E}_{i,\Phi_1}$, i.e., 
\begin{equation} \label{BE1}
	\norm{{\hat \Sigma}_{p,i,N}^{-1}} \leq 1/\bar\sigma_{p,i}^2.
\end{equation}
Due to $\mathcal{E}_{i,\Phi_1}$, we have $2N{\hat \Sigma}_{p,i,N} \succcurlyeq N{\hat \Sigma}_{p,i,N} + N\bar\sigma_{p,i}^2I$. After substituting it into the noise term, we have
\begin{equation}
	\nonumber
	\begin{split}
		&\norm{\frac{H_{fi}}{\sqrt{N}}\sum_{k=1}^{N}e_i(k)\phi_{p,i}^\top(k)\left({N\hat \Sigma}_{p,i,N}\right)^{-\frac{1}{2}}}^2\leq\\
		&\frac{2\norm{H_{fi}}^2}{N}\norm{\sum_{k=1}^{N}e_i(k)\phi_{p,i}^\top(k)\left({N\bar\sigma_{p,i}^2I+N\hat \Sigma}_{p,i,N}\right)^{-\frac{1}{2}}}^2.
	\end{split}
\end{equation}
Taking $\Sigma = N\bar\sigma_{p,i}^2I$ in $\mathcal{E}_{i,\Phi_3}$, the noise term is relaxed to
\begin{equation} \label{BE2}
	\begin{split}
		&\norm{\sum_{k=1}^{N}e_i(k)\phi_{p,i}^\top(k)\left({N\bar\sigma_{p,i}^2I+N\hat \Sigma}_{p,i,N}\right)^{-\frac{1}{2}}}^2 \leq \\
		& 4{\text{log}\frac{\text{det}(N\bar\sigma_{p,i}^2I + N {\hat \Sigma}_{p,i,N})}{\text{det}\left(N\bar\sigma_{p,i}^2I\right)}} + 8\left(\text{log}\frac{5^{n_y}3}{\delta} \right) \leq \\
		&4{\text{log}\left(\text{det}(I + \frac{3d_{p,i}}{\delta\bar\sigma_{p,i}^2} \Sigma_{p,i,N})\right)} + 8\left(\text{log}\frac{5^{n_y}3}{\delta} \right)\leq \\
		&4\left(d_{p,i}\text{log}\frac{6d_{p,i}}{\delta} + \text{log}\left(\text{det}(\frac{{\Sigma}_{p,i,N}}{\bar\sigma_{p,i}^2})\right) + 2\left(\text{log}\frac{5^{n_y}3}{\delta} \right)\right),
	\end{split}    	
\end{equation}
where the second inequality is due to $\mathcal{E}_{i,\Phi_2}$, and the third inequality is due to $I\preccurlyeq \frac{3d_i}{\bar\sigma_{p,i}^2\delta} {\Sigma}_{p,i,N}$ under $\mathcal{E}_{i,\Phi_1}$ and $\mathcal{E}_{i,\Phi_2}$. Combining \eqref{BE1} and \eqref{BE2}, and absorbing minor terms by inflating the constants $c_1$ accordingly, we have
\begin{equation}
	\nonumber
	\norm{{\tilde \Theta}_{i}^E}^2 \leq c_1\frac{\epsilon_{i,E}^2}{N},
\end{equation}
where $\epsilon_{i,E}^2 = \frac{\norm{H_{fi}}^2}{\bar\sigma_{p,i}^2}\left(d_{p,i}{\rm{log}}\frac{d_{p,i}}{\delta} + {\rm{log}} \left({\rm{det}}(\frac{{\Sigma}_{p,i,N}}{\bar\sigma_{p,i}^2})\right) \right)$.
\hfill $\blacksquare$
\vspace{-3mm}
\section{Bound on Truncation Bias}  \label{App3}
To prove Lemma \ref{Lem3}, we need the following auxiliary lemma:
\begin{lemma} \label{Lem5}
	Fix a failure probability $0 < \delta < 1$. There exists a universal constant $c_2 \geq 16\sqrt{2}+\frac{1}{{\rm{log}} \frac{2}{\delta}}$, such that with probability at least $1-\delta$,
	\begin{equation} \label{CE1}
		\sum_{k=1}^{N}\norm{x_{k}}^2 \leq c_2 n_x N \norm{\Sigma_{x,N}} {\rm{log}} \frac{2}{\delta}.
	\end{equation}
\end{lemma}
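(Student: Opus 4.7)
The plan is to exploit that, under Assumption~\ref{Asp1}(3)--(4) and the initial condition $x_1 = 0$, the stacked state vector $X := [x_1^\top, \ldots, x_N^\top]^\top \in \mathbb{R}^{Nn_x}$ is a centered joint Gaussian, so that $\sum_{k=1}^N \norm{x_k}^2 = \norm{X}^2$ is a Gaussian quadratic form. A standard sub-exponential tail bound then controls $\norm{X}^2$ in terms of two scale parameters, ${\rm{trace}}(\Sigma_X)$ and $\norm{\Sigma_X}$, where $\Sigma_X$ denotes the joint covariance of $X$. The remainder of the argument reduces to bounding these two quantities in terms of $N$, $n_x$ and $\norm{\Sigma_{x,N}}$.

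I would first establish monotonicity of the marginal state covariance in the PSD order. Writing $\tilde\Sigma := \sigma_u^2 BB^\top + K\Sigma_e K^\top$, equation \eqref{E1a} together with the independence of $u_k$ and $e_k$ from $x_k$ gives the Lyapunov recursion $\Sigma_{x,k+1} = A\Sigma_{x,k}A^\top + \tilde\Sigma$, so subtracting consecutive instances yields $\Sigma_{x,k+1} - \Sigma_{x,k} = A(\Sigma_{x,k} - \Sigma_{x,k-1})A^\top$. The base case $\Sigma_{x,2} - \Sigma_{x,1} = \tilde\Sigma \succcurlyeq 0$ then closes the induction, so $\Sigma_{x,k} \preccurlyeq \Sigma_{x,N}$ for every $k\leq N$. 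This immediately gives the trace bound ${\rm{trace}}(\Sigma_X) = \sum_{k=1}^N {\rm{trace}}(\Sigma_{x,k}) \leq N n_x \norm{\Sigma_{x,N}}$. For the operator-norm bound, I would use a Cauchy--Schwarz argument on quadratic forms: for any unit vector $v = (v_1, \ldots, v_N) \in \mathbb{R}^{Nn_x}$ partitioned into $n_x$-dimensional blocks,
\begin{equation*}
v^\top \Sigma_X v = \mathbb{E}\left[\left(\sum_{k=1}^N v_k^\top x_k\right)^2\right] \leq N \sum_{k=1}^N v_k^\top \Sigma_{x,k} v_k \leq N \norm{\Sigma_{x,N}} \sum_{k=1}^N \norm{v_k}^2 = N \norm{\Sigma_{x,N}},
\end{equation*}
so that $\norm{\Sigma_X} \leq N\norm{\Sigma_{x,N}}$.

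Finally, I would write $\norm{X}^2 = Z^\top \Sigma_X Z$ with $Z\sim\mathcal{N}(0,I_{Nn_x})$ and invoke the classical Laurent--Massart chi-squared tail bound (equivalently Hanson--Wright) to obtain, with probability at least $1-\delta$,
\begin{equation*}
\norm{X}^2 \leq {\rm{trace}}(\Sigma_X) + 2\norm{\Sigma_X}_F \sqrt{{\rm{log}}(2/\delta)} + 2\norm{\Sigma_X}\,{\rm{log}}(2/\delta).
\end{equation*}
Using $\norm{\Sigma_X}_F \leq \sqrt{{\rm{trace}}(\Sigma_X)\,\norm{\Sigma_X}}$ together with AM--GM collapses the middle term into the two endpoints. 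Substituting the estimates from the previous paragraph and pulling out the common factor $N n_x \norm{\Sigma_{x,N}}\,{\rm{log}}(2/\delta)$, the contribution from ${\rm{trace}}(\Sigma_X)$, which carries no $\log$ factor, is absorbed via the $1/{\rm{log}}(2/\delta)$ summand in $c_2$, while the contribution from $\norm{\Sigma_X}\,{\rm{log}}(2/\delta)$ is captured by the $16\sqrt{2}$ summand.

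The principal technical obstacle is the operator-norm estimate $\norm{\Sigma_X}\leq N\norm{\Sigma_{x,N}}$. A crude bound such as $\norm{\Sigma_X}\leq {\rm{trace}}(\Sigma_X)\leq N n_x \norm{\Sigma_{x,N}}$ would inject a spurious factor of $n_x$ into the sub-exponential part of the tail and turn the final dependence from linear in $n_x$ into quadratic, breaking the stated constant. The Cauchy--Schwarz step above is what preserves the correct scaling and is the only place where the joint structure of the trajectory, rather than its marginals, is actually used.
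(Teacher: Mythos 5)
Your proof is correct and follows essentially the same route as the paper, which simply defers to Lemma E.5 of \cite{Ziemann2023tutorial}: write $\sum_{k=1}^N\norm{x_k}^2$ as a Gaussian quadratic form and apply a Hanson--Wright/chi-squared tail bound, with the covariance monotonicity $\Sigma_{x,k}\preccurlyeq\Sigma_{x,N}$ supplying the trace estimate. One small correction to your closing remark: the sharper bound $\norm{\Sigma_X}\leq N\norm{\Sigma_{x,N}}$ is not actually needed here, since the crude bound $\norm{\Sigma_X}\leq{\rm{trace}}(\Sigma_X)\leq Nn_x\norm{\Sigma_{x,N}}$ already yields the stated inequality --- the target bound carries the factors $n_x$ and ${\rm{log}}\frac{2}{\delta}$ multiplicatively, so placing $n_x$ in the sub-exponential term does not produce an $n_x^2$ anywhere.
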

\begin{proof}
	The proof is identical to Lemma E.5 in \cite{Ziemann2023tutorial}, which is an application of Hanson-Wright inequality in Lemma \ref{LemA6}, thus, it is omitted here.
\end{proof}
\vspace{-3mm}
\subsection{Proof of Lemma \ref{Lem3}}
The bias term ${\tilde \Theta}_{i}^B$ can be similarly decomposed as
\begin{equation}
	\nonumber
	{\tilde \Theta}_{i}^B= \left(\Gamma_{fi}A_K^p\sum_{k=1}^{N}{x_{k}\phi_{p,i}^\top(k)}(N\hat \Sigma_{p,i,N})^{-\frac{1}{2}}\right)(N\hat \Sigma_{p,i,N})^{-\frac{1}{2}}.
\end{equation}
Note that ${N\hat \Sigma}_{p,i,N} = \sum_{j=1}^{N}\phi_{p,i}(j)\phi_{p,i}^\top(j) \succcurlyeq \phi_{p,i}(k)\phi_{p,i}^\top(k)$ for each $k$, hence, by the Schur complement, we have
\begin{equation} \label{CE2}
	\phi_{p,i}^\top(k)\left({N\hat \Sigma}_{p,i,N}\right)^{-1}\phi_{p,i}(k) \leq 1.
\end{equation}
Using the triangle inequality, we have
\begin{equation} \label{CE3}
	\begin{split}
		&\norm{\sum_{k=1}^{N}{x_{k}\phi_{p,i}^\top(k)} \left({N\hat \Sigma}_{p,i,N}\right)^{-\frac{1}{2}}} \leq \\ 
		&\sum_{k=1}^{N}\norm{x_{k}} \norm{\phi_{p,i}^\top(k)\left({N\hat \Sigma}_{p,i,N}\right)^{-\frac{1}{2}}}.
	\end{split}	
\end{equation}
Combining \eqref{CE2} with \eqref{CE3}, and using the Cauchy-Schwarz inequality, we further have
\begin{equation} \label{CE4}
	\norm{\sum_{k=1}^{N}{x_{k}\phi_{p,i}^\top(k)} \left({N\hat \Sigma}_{p,i,N}\right)^{-\frac{1}{2}}} \leq	\sqrt{N\sum_{k=1}^{N}\norm{x_{k}}^2}.
\end{equation}
Using inequality \eqref{CE1}, inequality \eqref{CE4} is further relaxed to
\begin{equation} \label{CE5}
	\norm{\sum_{k=1}^{N}{x_{k}\phi_{p,i}^\top(k)} \left({N\hat \Sigma}_{p,i,N}\right)^{-\frac{1}{2}}} \leq \sqrt{c_2 n_x  \norm{\Sigma_{x,N}} N^2{\rm{log}} \frac{1}{\delta}}.
\end{equation}
Combining \eqref{BE1} and \eqref{CE5}, the bias term is bounded by
\begin{equation} \label{CE6}
	\norm{{\tilde \Theta}_{i}^B}^2 \leq \frac{c_2n_x}{\bar\sigma_{p,i}^2} \norm{\Gamma_{fi} A_K^p}\norm{\Sigma_{x,N}}N{\rm{log}} \frac{1}{\delta}.
\end{equation}
Under Assumption \ref{Asp2}, $\norm{\Gamma_{fi}A_K^p}\norm{\Sigma_{x,N}} \leq N^{-3}$, then, the bias error is finally bounded by 
\begin{equation*}
		\norm{{\tilde \Theta}_{i}^B}^2 \leq c_2\frac{\epsilon_{i,B}^2}{N^2},
\end{equation*}
where $\epsilon_{i,B}^2 = \frac{n_x}{{\bar\sigma_{p,i}^2}}{\rm{log}} \frac{1}{\delta}$.
\hfill $\blacksquare$
\vspace{-3mm}
\section{Weighted Singular Value Decomposition}  \label{App4}

To prove Lemma \ref{Lem4} and Theorem \ref{The3}, we first introduce the following auxiliary lemmas:
\begin{lemma} [{\cite[Lemma 5]{Oymak2021revisiting}}] \label{Lem6}
	Fix a failure probability 
	$\delta_u := (2(N+f-1)n_u)^{-{\rm{log}}^2(2fn_u){\rm{log}}(2(N+f-1)n_u)}$.
	There exists a universal constant $c_3$ such that if $N\geq 2c_3fn_y{\rm{log}}(1/\delta_u)$, then with probability at least $1-\delta_u$, we have $\frac{1}{N}U_fU_f^\top \succcurlyeq \frac{1}{2}\sigma_u^2I$.
\end{lemma}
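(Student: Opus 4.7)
The plan is to reduce the lower bound on $\lambda_{\min}(\frac{1}{N}U_fU_f^\top)$ to concentration of a quadratic form in the underlying i.i.d.\ Gaussian input sequence, and then promote this to a uniform statement over direction via a covering argument. The obstacle we must work around is that consecutive columns of the block Hankel matrix $U_f$ are not independent -- they share entries by construction -- so standard i.i.d.\ matrix concentration (e.g., matrix Chernoff over columns) cannot be invoked directly.

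First I would fix a unit vector $v\in\mathbb{R}^{fn_u}$ and observe that $U_f^\top v\in\mathbb{R}^N$ is a linear functional of the global input sequence; writing $U_f^\top v = M_v\tilde u$ with $\tilde u$ stacking $u_{p+1},\ldots,u_{p+N+f-1}$ and $M_v$ encoding the Hankel pattern weighted by $v$, the quantity $\norm{U_f^\top v}^2$ is a quadratic form in an i.i.d.\ standard Gaussian vector (after scaling by $\sigma_u$). Its mean equals $\sigma_u^2 N \norm{v}^2 = \sigma_u^2 N$, and Hanson--Wright (Lemma~\ref{LemA6}) controls its deviation in terms of $\norm{M_v M_v^\top}_F$ and $\norm{M_v M_v^\top}$. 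Because each row of $M_v$ has at most $f$ nonzero blocks, both quantities scale like $f$, so choosing deviation $t=\tfrac{1}{2}N\sigma_u^2$ produces a single-direction lower bound $\norm{U_f^\top v}^2\geq \tfrac{1}{2}N\sigma_u^2$ whenever $N \gtrsim f \log(1/\delta)$.

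Next I would make the bound uniform via a $1/4$-net $\mathcal{N}$ on the unit sphere of $\mathbb{R}^{fn_u}$, of cardinality at most $9^{fn_u}$, together with the standard approximation $\lambda_{\min}(\tfrac{1}{N}U_fU_f^\top) \geq \min_{v\in\mathcal{N}}\tfrac{1}{N}v^\top U_fU_f^\top v - \tfrac{1}{2}\norm{\tfrac{1}{N}U_fU_f^\top}$. A parallel Hanson--Wright argument on the same net bounds the operator norm. Taking a union bound over $\mathcal{N}$ introduces an $fn_u\log 9$ penalty inside the logarithm, which is absorbed by the burn-in $N \gtrsim fn_y\log(1/\delta_u)$ once the prescribed $\delta_u$ is substituted.

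The main obstacle is the unusual triple-log form $\delta_u=(2(N+f-1)n_u)^{-\log^2(2fn_u)\log(2(N+f-1)n_u)}$: a plain $\epsilon$-net argument would only produce an exponent linear in $fn_u$, not the iterated-log exponent in the statement. The tighter bound instead arises from a multi-scale generic-chaining refinement that exploits smoothness of the Gaussian process $v\mapsto\norm{U_f^\top v}^2$ on the sphere, or alternatively from a direct appeal to Toeplitz/Hankel spectral concentration. Rather than redoing this chaining calculation from scratch, I would invoke Lemma~5 of \cite{Oymak2021revisiting} as a black box at this step, since the other ingredients -- Hanson--Wright concentration, the covering reduction, and the operator-norm control -- are standard tools already used elsewhere in this appendix.
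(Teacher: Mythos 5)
The paper does not actually prove this statement: \lemref{Lem6} is imported verbatim as Lemma~5 of \cite{Oymak2021revisiting}, with no argument given in the appendix. Your proposal ultimately lands in the same place --- you invoke that external lemma as a black box for the step that produces the specific failure probability $\delta_u$ --- so in substance you and the paper agree on how this result enters the analysis. The additional self-contained sketch you give (Hanson--Wright for a fixed direction $v$, exploiting that $M_vM_v^\top$ is banded with bandwidth $2f-1$ so its operator norm and normalized Frobenius norm scale like $f$, followed by a $1/4$-net and a union bound) is sound as far as it goes, and you correctly diagnose its limitation: the net has cardinality exponential in $fn_u$, so the union bound forces a burn-in of order $f^2n_u + f\log(1/\delta)$ and a failure probability with exponent linear in $fn_u$, rather than the poly-logarithmic burn-in $N\gtrsim f\,n_y\log(1/\delta_u)$ and the iterated-log exponent in the stated $\delta_u$. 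Closing that gap genuinely requires the sharper machinery in \cite{Oymak2021revisiting} (concentration for suprema of chaos processes over structured Hankel/Toeplitz matrices), so deferring to that reference at the final step is the right call; just be aware that your sketch, taken on its own, proves a quantitatively weaker version of the lemma, not the lemma as stated.
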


\begin{lemma} \label{Lem7}
	Fix a failure probability $0<\delta<1$. Then, with probability at least $1-3\delta/2$, we have 
	\begin{equation} \label{DE1}
	\bar\sigma_{p,0}^2I \preccurlyeq  \hat \Sigma_{p,0,N} := \frac{1}{N}Z_p Z_p^\top \preccurlyeq \frac{3d_{p,0}}{\delta} {\Sigma}_{p,0,N},
	\end{equation}
	where $\bar\sigma_{p,0} := \frac{\norm{\mathcal{T}_{p,0}}\norm{\Lambda_{u,e}}}{2}>0$, 
	$\mathcal{T}_{p,0} = \begin{bmatrix}G_p&H_p \\I&0 \end{bmatrix}$, $d_{p,0} = p(n_x+n_y)$, $\Sigma_{p,0,k} = \mathbb{E}\left[\phi_{p,0}(k)\phi_{p,0}^\top(k)\right]$, $\phi_{p,0}(k) = \begin{bmatrix}y_p(k)\\ u_p(k)\end{bmatrix}$.
\end{lemma}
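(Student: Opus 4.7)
The plan is to establish the two-sided bound by treating the upper and lower inequalities separately and then combining them with a union bound. Observe first that $\phi_{p,0}(k) = [y_p^\top(k),\, u_p^\top(k)]^\top$ and $\mathcal{T}_{p,0}$ are exactly the ``$i=0$'' specializations of the regressor $\phi_{p,i}(k)$ and the noise-input transfer $\mathcal{T}_{p,i}$ introduced in \eqref{E23} and \eqref{AE2}: the block columns corresponding to the future inputs $u_i(k)$ simply disappear. This identification lets me reuse the machinery already developed for \lemref{Lem1} and \lemref{Lem2} rather than redoing the analysis from scratch.

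For the lower bound $\hat{\Sigma}_{p,0,N} \succcurlyeq \bar\sigma_{p,0}^2 I$, I would invoke \lemref{Lem1} in its $i=0$ instantiation. Specifically, rerunning the three-step argument of Appendix \ref{App1} verbatim with $\phi_{p,0}$, $W_{p,0}$, and $\mathcal{T}_{p,0}$ in place of their $i$-subscripted counterparts yields the inequality with probability at least $1-\delta$ once $N \geq N_{pe}(\delta/3,\, p,\, 0)$, with the constant $\bar\sigma_{p,0}^2 = \norm{\mathcal{T}_{p,0}}^2\norm{\Lambda_{u,e}}^2/4$ arising exactly as in \eqref{AE20}. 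No step of that proof depends structurally on $i \geq 1$; the only thing to check is that $\mathcal{T}_{p,0}$ retains full row rank, which is immediate from its block-triangular structure featuring an identity block, so that $\bar\sigma_{p,0} > 0$.

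For the upper bound $\hat{\Sigma}_{p,0,N} \preccurlyeq (3d_{p,0}/\delta)\,\Sigma_{p,0,N}$, I would apply the matrix Markov inequality (\lemref{LemA4}) to the PSD random matrix $\hat{\Sigma}_{p,0,N}$, whose expectation is $\Sigma_{p,0,N}$. Markov yields $\mathbb{P}(\hat{\Sigma}_{p,0,N} \not\preccurlyeq t\Sigma_{p,0,N}) \leq d_{p,0}/t$; choosing $t = 2d_{p,0}/\delta$ gives the inequality with probability at least $1-\delta/2$, and since $3d_{p,0}/\delta \geq 2d_{p,0}/\delta$ the claimed form follows a fortiori. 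This is precisely the event $\mathcal{E}_{i,\Phi_2}$ employed in the proof of \lemref{Lem2}, specialized to $i=0$.

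Combining the two events through the union bound gives a total failure probability at most $\delta + \delta/2 = 3\delta/2$, as claimed. The proof is essentially an $i=0$ bookkeeping exercise layered on top of the existing \lemref{Lem1} and the upper-tail event from \lemref{Lem2}; no genuinely hard step is anticipated, the only minor subtlety being the verification that the PE argument of \lemref{Lem1} indeed carries over when the future-input block is absent, which I have just argued reduces to the non-degeneracy of $\mathcal{T}_{p,0}$.
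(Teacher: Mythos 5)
Your proposal is correct and follows essentially the same route as the paper, which likewise obtains the lower bound by specializing the PE argument of \lemref{Lem1} to $i=0$ and the upper bound from the matrix Markov inequality of \lemref{LemA4} (the event $\mathcal{E}_{i,\Phi_2}$ with $i=0$), combined via a union bound. Your additional checks — that $\mathcal{T}_{p,0}$ remains full row rank and that the lower bound implicitly requires the burn-in condition $N \geq N_{pe}(\delta/3,p,0)$, which the lemma statement leaves tacit — are both apt.
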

\begin{proof}
	The proof for the lower bound is identical to the PE condition in Lemma \ref{Lem1} by taking $i=0$. Meanwhile, the proof for the upper bound is derived from an extension of Markov's inequality in Lemma \ref{LemA4}. For space considertaion, they are omitted here.
\end{proof}
\vspace{-3mm}
\subsection{Proof of Lemma \ref{Lem4} (Properties of Weighting Matrices)}
Since $W_2 = \left(\frac{1}{N}Z_p\Pi_{U_f}^{\perp}Z_p^\top\right)^{\frac{1}{2}}$, it is equivalent to prove that $\frac{1}{N}Z_p\Pi_{U_f}^{\perp}Z_p^\top = \frac{1}{N}Z_pZ_p^\top - \frac{1}{N}Z_pU_f^\top(U_fU_f^\top)^{-1}U_fZ_p^\top$ has the same properties as in Lemma \ref{Lem4}. A prerequisite is that $\frac{1}{N}U_fU_f^\top \succ 0$, which is guaranteed by Lemma \ref{Lem6}.

First, we prove that  $\frac{1}{N}Z_p\Pi_{U_f}^{\perp}Z_p^\top$ is positive definite. According to the second statement on the Schur complement in Lemma \ref{LemA8}, $\frac{1}{N}Z_p\Pi_{U_f}^{\perp}Z_p^\top \succ 0$ is equivalent to requiring
\begin{equation}
	\nonumber
	\frac{1}{N}\begin{bmatrix}Z_pZ_p^\top & Z_pU_f^\top \\U_fZ_p^\top &U_fU_f^\top
	\end{bmatrix} ={\hat \Sigma}_{p,f,N}  \succ 0,
\end{equation}
which is essentially the PE condition in \eqref{E31} by taking $i=f$.

Second, we prove that  $\norm{\frac{1}{N}Z_p\Pi_{U_f}^{\perp}Z_p^\top}$ grows at most logarithmically with $N$. According to Lemma \ref{Lem7}, we have
\begin{equation*}
	\frac{1}{N}Z_p\Pi_{U_f}^{\perp}Z_p^\top \preccurlyeq \frac{1}{N}Z_pZ_p^\top \preccurlyeq \frac{3d_{p,0}}{\delta} {\Sigma}_{p,0,N}.
\end{equation*} 
As a consequence of Lemma \ref{LemA11}, $\norm{\Sigma_{p,0,N}}$ grows at most logarithmically with $N$, we therefore conclude that $\norm{\frac{1}{N}Z_p\Pi_{U_f}^{\perp}Z_p^\top}$ grows at most logarithmically with $N$.

Third, we prove that  $\norm{\left(\frac{1}{N}Z_p\Pi_{U_f}^{\perp}Z_p^\top\right)^{-1}}$ is bounded, which is equivalent to showing that the minimal eigenvalue of $\frac{1}{N}Z_p\Pi_{U_f}^{\perp}Z_p^\top$ is lower bounded. According to the fourth statement of Lemma \ref{LemA8}, we have $\lambda_{\rm{min}}({\hat \Sigma}_{p,f,N}) \leq \lambda_{\rm{min}}(\frac{1}{N}Z_p\Pi_{U_f}^{\perp}Z_p^\top)$. Meanwhile, a lower bound on the minimal eigenvalue of ${\hat \Sigma}_{p,f,N}$ has been given in Lemma \ref{Lem1} by taking $i=f$, thus,  $\norm{\left(\frac{1}{N}Z_p\Pi_{U_f}^{\perp}Z_p^\top\right)^{-1}}$ is bounded. \hfill $\blacksquare$
\vspace{-3mm}
\subsection{Proof of Theorem \ref{The3} (Weighted SVD)}
According to Lemma \ref{LemA9} (let $M = W_1{\mathcal{H}}_{fp}W_2$ and $\hat M = W_1\hat{\mathcal{H}}_{fp}W_2$), if condition \eqref{E41} is satisfied, we have
\begin{equation}
	\nonumber
	\begin{split}
		&\norm{\hat{L}_p- T^\top\bar{L}_p} = \norm{
		\hat \Lambda_{1}^{\frac{1}{2}}\hat V_{1}^\top W_2^{-1}- T^\top\Lambda_{1}^{\frac{1}{2}}V_{1}^\top W_2^{-1}} \\
		\leq & \norm{\hat \Lambda_{1}^{\frac{1}{2}}\hat V_{1}^\top - T^\top\Lambda_{1}^{\frac{1}{2}}V_{1}^\top }\norm{W_2^{-1}}  \\
		\leq & \sqrt{\frac{40n_x}{\sigma_{n_x}(W_1{\mathcal{H}}_{fp}W_2)}}
		\norm{W_1\left(\hat{\mathcal{H}}_{fp}- {\mathcal{H}}_{fp}\right)W_2} \norm{W_2^{-1}} \\
		\leq &
		\sqrt{\frac{40n_x}{\sigma_{n_x}(W_1{\mathcal{H}}_{fp}W_2)}}\norm{\hat{\mathcal{H}}_{fp}- {\mathcal{H}}_{fp}}\norm{W_1}\norm{W_2}\norm{W_2^{-1}} \\
		\leq &\sqrt{\frac{40n_x}{\sigma_{n_x}({\mathcal{H}}_{fp})}} \norm{\hat{\mathcal{H}}_{fp}- {\mathcal{H}}_{fp}}\frac{\norm{W_1}\norm{W_2}\norm{W_2^{-1}}}{\sqrt{\sigma_{\rm {min}} (W_2)\sigma_{\rm {min}} (W_1)}} \\
		= & \sqrt{\frac{40n_x}{\sigma_{n_x}({\mathcal{H}}_{fp})}} \norm{\hat{\mathcal{H}}_{fp}- {\mathcal{H}}_{fp}} W_{L},
	\end{split}
\end{equation}
where $W_{L}$ are given in \eqref{E42}. The first and third inequalities are due to the triangle inequality, the second inequality is due to Lemma \ref{LemA9}, the fourth inequality is due to the relation $\sigma_{n_x} (W_1{\mathcal{H}}_{fp}W_2)\geq \sigma_{\rm {min}} (W_1)\sigma_{n_x} ({\mathcal{H}}_{fp})\sigma_{\rm {min}} (W_2)$, where the rank of ${\mathcal{H}}_{fp}$ is $n_x$, and the last equality is due to $\sigma_{\rm {min}} (W_2)\sigma_{\rm {min}} (W_1) = \norm{W_2^{-1}}^{-1}\norm{W_1^{-1}}^{-1}$. The bound of $\norm{\hat{\Gamma}_{f}- \bar{\Gamma}_{f}T}$ can be similarly derived.
\hfill $\blacksquare$
\vspace{-3mm}
\section{Bounds on System matrices}   \label{App5}
\subsection{Proof of Theorem \ref{The4} (CVA Type Realization)}
\subsubsection{Bound on $C$}
To estimate $C$, we first rewrite \eqref{E19a} as
\begin{equation*}
	\begin{split}
		Y_{f1} &= {\bar C T }{T^\top \bar X_k} + \Sigma_e^{1/2}E_{f1} \\
		&= {\bar C T }\hat X_k + {\bar C T }(T^\top \bar X_k-\hat X_k) + \Sigma_e^{1/2}E_{f1}.
	\end{split}	
\end{equation*}
According to \eqref{E20a}, the estimate of $C$ is $\hat C = Y_{f1}\hat X_k^\dagger$, where $\hat X_k^\dagger = Z_p^\top\hat L_p^\top(\hat L_p Z_pZ_p^\top\hat L_p^\top)^{-1}$. In this way, the estimation error of $C$ can be decomposed into three types:
\begin{equation} \label{EE1}
	\begin{split}
		\tilde{C} :=& \hat C - {\bar C T } = {\bar C T }(T^\top \bar X_k-\hat X_k)\hat X_k^\dagger + \Sigma_e^{1/2}E_{f1}\hat X_k^\dagger\\
		=& \tilde{C}^L + \tilde{C}^B + \tilde{C}^E,
	\end{split}
\end{equation}
where 
\begin{equation*}
	\begin{split}
		\tilde{C}^L &:= {\bar C T }(T^\top \bar L_p- \hat{L}_p)Z_p\hat X_k^\dagger, \\ \tilde{C}^B &:= {C}A_K^pX_{k-p}\hat X_k^\dagger, \\
		\tilde{C}^E &:= \Sigma_e^{1/2}E_{f1}\hat X_k^\dagger.
	\end{split}
\end{equation*}

First, we rewrite the error coming from $\hat{L}_p$ as
\begin{equation*}
	\tilde{C}^L = {\bar C T }(T^\top \bar L_p- \hat{L}_p)Z_pZ_p^\top\hat L_p^\top(\hat L_p Z_pZ_p^\top\hat L_p^\top)^{-1}.
\end{equation*}
For convenience, we define 
\begin{equation*}
	\text{Cond}(\Phi_{p,0}\Phi_{p,0}) = \hat L_pZ_pZ_p^\top\hat L_p^\top(\hat L_p Z_pZ_p^\top\hat L_p^\top)^{-1},
\end{equation*}
where $\Phi_{p,0}=Z_p$. Using Lemma \ref{Lem7}, we have that
\begin{equation*}
	\norm{\text{Cond}(\Phi_{p,0}\Phi_{p,0})} \leq \frac{\norm{\hat L_p}^2}{\sigma_{\min}^2(\hat L_p)}\frac{\norm{\Sigma_{p,0,N}}}{\bar\sigma_{p,0}^2}
	\frac{3d_{p,0}}{\delta},
\end{equation*}
which grows at most logarithmically with $N$. In this way, we have that 
\begin{equation}  \label{EE2}
    \norm{\tilde{C}^L} \leq c_4 \norm{\hat{L}_p- T^\top \bar L_p},
\end{equation}
where $c_4 := \norm{\bar C T} \norm{\text{Cond}(\Phi_{p,0}\Phi_{p,0})}$.

Second, the truncation bias can be rewritten as
\begin{equation}
	\nonumber
	\begin{split}
	 \tilde{C}^B &= CA_K^pX_{k-p}Z_p^\top\hat L_p^\top(\hat L_p Z_pZ_p^\top\hat L_p^\top)^{-1}\\
	& = CA_K^pX_{k-p}Z_p^\top(Z_pZ_p^\top)^{-1}\text{Cond}(\Phi_{p,0}\Phi_{p,0}).
	\end{split}
\end{equation}
Similar to bounding ${\tilde \Theta}_{i}^B$ in Lemma \ref{Lem3}, by taking $i=0$, $\tilde{C}^B$ is bounded by 
\begin{equation}  \label{EE3}
	\norm{\tilde{C}^B}^2 
	\leq \frac{c_5\epsilon_{0,B}^2}{N^2},
\end{equation}
where $\epsilon_{0,B}^2 := \frac{n_x}{{\bar\sigma_{p,0}^2}}{\rm{log}}\frac{1}{\delta}
$ and $c_5 := c_2\norm{C}\norm{\text{Cond}(\Phi_{p,0}\Phi_{p,0})}$.

Third, the cross-term error can be rewritten as
\begin{equation}
	\nonumber
	\begin{split}
		\tilde{C}^E &= \Sigma_e^{1/2}E_{f1}Z_p^\top\hat L_p^\top(\hat L_p Z_pZ_p^\top\hat L_p^\top)^{-1}\\
		& = \Sigma_e^{1/2}E_{f1}Z_p^\top(Z_pZ_p^\top)^{-1}\text{Cond}(\Phi_{p,0}\Phi_{p,0}).
	\end{split}
\end{equation}
Similar to bounding ${\tilde \Theta_{i}}^E$ in Lemma \ref{Lem2}, by taking $i=0$, $\tilde{C}^E$ is bounded by 
\begin{equation} \label{EE4}
	 \norm{\tilde{C}^E} \leq \frac{c_6\epsilon_{0,E}^2}{N},
\end{equation}
where $c_6 := c_1\norm{\text{Cond}(\Phi_{p,0}\Phi_{p,0})}$, and $\epsilon_{0,E}$ is given by taking $i=0$ in $\epsilon_{i,E}$ in \eqref{E32}. After merging \eqref{EE2}, \eqref{EE3} and \eqref{EE4} together, we obtain \eqref{E43a}.

\subsubsection{Bounds on $A$ and $B$}
We first rewrite \eqref{E19b} as
\begin{equation}
	\nonumber
 \begin{split}
     \hat X_k^{+} = &{T^\top \bar A T}\hat X_k+ {T^\top \bar B}U_{f1} + K\Sigma_e^{1/2}E_{f1} +  \\
     &{T^\top \bar A T}(T^\top \bar X_k-\hat X_k) +  (\hat X_k^{+} - T^\top \bar X_k^{+}).
 \end{split}
\end{equation}
According to \eqref{E20b}, the estimates of $A$ and $B$ is
\begin{equation} \label{EE5}
	\hat \theta := \begin{bmatrix}\hat A& \hat B \end{bmatrix} = \hat X_k^{+} \begin{bmatrix}\hat X_k\\U_{f1}\end{bmatrix}^\dagger.
\end{equation}
For brevity, we define 
\begin{equation*}    
    \bar\theta := \begin{bmatrix}T^\top \bar A T&T^\top \bar B \end{bmatrix},
    \Psi_k := \begin{bmatrix}\hat X_k\\U_{f1}\end{bmatrix} = \hat L_{p,1}\Phi_{p,1}, 
\end{equation*}	
where $\hat L_{p,1} = \begin{bmatrix}\hat L_p&0\\0&I\end{bmatrix}$, and $\Phi_{p,1}=\begin{bmatrix}Z_p\\U_{f1}\end{bmatrix}$. In this way, the estimation error of $\theta$ can be similarly divided into three parts:
\begin{equation*}
	\tilde{\theta} := \hat \theta-\bar\theta =  \tilde{\theta}^L + \tilde{\theta}^B + \tilde{\theta}^E,
\end{equation*}
where
\begin{equation*}
	\begin{split}
		\tilde{\theta}^L &:= T^\top \bar A T(\hat{L}_p- T^\top \bar L_p)Z_p\Psi_{k}^\dagger +  (\hat{L}_p- T^\top \bar L_p)Z_p^{+}\Psi_{k}^\dagger, \\ 
		\tilde{\theta}^B &:= AA_K^pX_{k-p}\Psi_{k}^\dagger +  A_K^pX_{k-p}^{+}\Psi_{k}^\dagger, \\ 
		\tilde{\theta}^E &:= K\Sigma_e^{1/2}E_{f1}\Psi_{k}^\dagger.
	\end{split}
\end{equation*}

First, for $\tilde{\theta}^L$, we rewrite it as
\begin{equation*}
	\begin{split}
		\tilde{\theta}^L :=& T^\top \bar A T(\hat{L}_p- T^\top \bar L_p)Z_p\Phi_{p,1}^\top\hat{L}_{p,1}^\top\left(\hat{L}_{p,1}\Phi_{p,1}\Phi_{p,1}^\top\hat{L}_{p,1}^\top\right)^{-1}\\
		& + (\hat{L}_p- T^\top \bar L_p)Z_p^{+}\Phi_{p,1}^\top\hat{L}_{p,1}^\top\left(\hat{L}_{p,1}\Phi_{p,1}\Phi_{p,1}^\top\hat{L}_{p,1}^\top\right)^{-1}.
	\end{split}	
\end{equation*}
For convenience, we define 
\begin{equation*}
	\text{Cond}(\Phi_{p,1}\Phi_{p,1}^\top) = \hat{L}_{p,1}\Phi_{p,1}\Phi_{p,1}^\top\hat{L}_{p,1}^\top\left(\hat{L}_{p,1}\Phi_{p,1}\Phi_{p,1}^\top\hat{L}_{p,1}^\top\right)^{-1}.
\end{equation*}
Similar to $\text{Cond}(\Phi_{p,0}\Phi_{p,0})$, we conclude that $\text{Cond}(\Phi_{p,1}\Phi_{p,1})$ grows at most logarithmically with $N$. Then, we have that
\begin{equation} \label{EE6}
	\norm{\tilde{\theta}^L} \leq c_7 \norm{\hat{L}_p- T^\top \bar L_p},
\end{equation}
where $c_7 := \left(\norm{T^\top \bar A T}+\norm{I}\right) \norm{\text{Cond}(\Phi_{p,1}\Phi_{p,1}^\top)}$.

Second, for $\tilde{\theta}^B$, we rewrite it as
\begin{equation*}
	\begin{split}
		\tilde{\theta}^B = &AA_K^pX_{k-p}\Phi_{p,1}^\top\hat{L}_{p,1}^\top\left(\hat{L}_{p,1}\Phi_{p,1}\Phi_{p,1}^\top\hat{L}_{p,1}^\top\right)^{-1} +  \\
		&A_K^pX_{k-p}^{+}\Phi_{p,1}^\top\hat{L}_{p,1}^\top\left(\hat{L}_{p,1}\Phi_{p,1}\Phi_{p,1}^\top\hat{L}_{p,1}^\top\right)^{-1},
	\end{split}	
\end{equation*}
which can be bounded by
\begin{equation} \label{EE7}
	\norm{\tilde{\theta}^B} \leq c_8 \frac{\epsilon_{1,B}}{N},
\end{equation}
where $c_8 = \norm{A+I}\norm{\text{Cond}(\Phi_{p,1}\Phi_{p,1}^\top)}$, and $\epsilon_{1,B}$ is in \eqref{E34}.

Third, for $\norm{{\tilde C}^E}$, we rewrite it as
\begin{equation*}
	\tilde{\theta}^E := K\Sigma_e^{1/2}E_{f1}\Phi_{p,1}^\top\hat{L}_{p,1}^\top\left(\hat{L}_{p,1}\Phi_{p,1}\Phi_{p,1}^\top\hat{L}_{p,1}^\top\right)^{-1},
\end{equation*}
which can be bounded by
\begin{equation} \label{EE8}
	\norm{\tilde{\theta}^E} \leq c_9 \frac{\epsilon_{1,E}}{\sqrt{N}},
\end{equation}
where $c_9 = \norm{K\Sigma_e^{1/2}}\norm{\text{Cond}(\Phi_{p,1}\Phi_{p,1}^\top)}$, and $\epsilon_{1,E}$ is in \eqref{E32}. After merging \eqref{EE6}, \eqref{EE7} and \eqref{EE8} together, we obtain \eqref{E43b}. We hence complete the proof. \hfill $\blacksquare$
\vspace{-3mm}
\subsection{Proof of Theorem \ref{The5}  (MOESP Type Realization)}

The proof of Theorem \ref{The5} is identical to \cite[Th. 4]{Tsiamis2019finite}, thus, it is omitted here. \hfill $\blacksquare$
\vspace{-3mm}
\section{Technical Lemmas}   \label{App6}

\begin{lemma} [{\cite[Lemma 6]{Abbasi2011regret}}] \label{LemA1}	
	Norm of a sub-Gaussian vector: For an entry-wise $\sigma_w^2$-sub-Gaussian random variable $w\in \mathbb{R}^{n_w}$, i.e., such that ${\rm{log}}\mathbb{E}\left[e^{\lambda w}\right] \leq \mathbb{E}\left[w\right]\lambda + \frac{\sigma_w^2\lambda2}{2}$ for all $\lambda \in \mathbb{R}$, with probability at least $1-\delta/2$, for $1\leq k \leq N$,
	\begin{equation}
		\norm{w}\leq \sigma_w n_w\sqrt{2n_w{\rm {log}}(32Nn_w/\delta)}.
		\nonumber
	\end{equation}
\end{lemma}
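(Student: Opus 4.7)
The plan is to combine a scalar sub-Gaussian concentration inequality with two union bounds: one over the coordinates of $w$ and one over the time indices $k = 1, \ldots, N$. First, I would invoke the standard Chernoff argument on each coordinate $w_j$: exponentiating the defining sub-Gaussian inequality and optimizing over $\lambda > 0$ yields $\mathbb{P}(|w_j - \mathbb{E}[w_j]| \geq t) \leq 2\exp(-t^2/(2\sigma_w^2))$. Since the statement uses $\|w\|$ directly, I would either assume the mean is zero (which is the intended setting in the paper, where $w$ represents innovations or inputs) or absorb $\|\mathbb{E}[w]\|$ into the final constant.

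Next I would apply a union bound across the $n_w$ coordinates and the $N$ time indices, producing
\begin{equation*}
\mathbb{P}\!\left(\max_{j,k} |w_{j,k}| \geq t\right) \leq 2\, N\, n_w \exp\!\left(-\frac{t^2}{2\sigma_w^2}\right).
\end{equation*}
Setting the right-hand side equal to $\delta/2$ and solving for $t$ gives an $\ell_\infty$ bound $\|w\|_\infty \leq \sigma_w \sqrt{2\log(4Nn_w/\delta)}$ that holds uniformly in $k$ with probability at least $1 - \delta/2$. Converting to the Euclidean norm through $\|w\| \leq \sqrt{n_w}\,\|w\|_\infty$ then yields a bound of the form $\|w\| \leq \sigma_w \sqrt{2 n_w \log(4 N n_w /\delta)}$, which has exactly the structure asserted by the lemma up to the numerical constants.

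To reach the precise form $\sigma_w n_w \sqrt{2 n_w \log(32 N n_w / \delta)}$ stated in the excerpt, I would absorb the remaining slack by crudely inflating the leading factor (using, for instance, $\sqrt{n_w} \leq n_w$) and replacing the $\log(4 N n_w /\delta)$ by $\log(32 N n_w /\delta)$, which is larger. This kind of looseness is harmless for the downstream use of the lemma in bounding $\|u_k\|$ and $\|e_k\|$, since only the polynomial dependence on $n_w$ and the logarithmic dependence on $N/\delta$ matter in the burn-in bounds derived in Appendix~\ref{App1}.

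The main obstacle is purely cosmetic bookkeeping of numerical constants to match the form stated; there is no real technical difficulty, as the argument is a textbook combination of the sub-Gaussian tail bound with a union bound over $Nn_w$ scalar events. The only subtle point to flag is the handling of the mean term $\mathbb{E}[w]\lambda$ in the sub-Gaussian hypothesis, which I would address by first centering $w_j$ to obtain the Chernoff bound and then noting that in all applications of the lemma in the paper (to innovations $e_k$ and inputs $u_k$) the mean vanishes, so no extra term appears in the final inequality.
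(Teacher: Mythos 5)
Your proposal is correct and follows essentially the same route as the cited source for this lemma (a coordinate-wise sub-Gaussian Chernoff bound, a union bound over the $N n_w$ scalar events, and the conversion $\norm{w} \leq \sqrt{n_w}\norm{w}_\infty$); the paper itself only cites the result without reproving it. Your derivation in fact yields the sharper constant $\sigma_w\sqrt{2n_w\log(4Nn_w/\delta)}$, and inflating by $\sqrt{n_w}\leq n_w$ and $4\leq 32$ recovers the stated form, so the looseness you flag is harmless.
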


% \begin{lemma} [Lemma 14 in \cite{Abbasi2011regret}] \label{LemA2-1}	
%     Let $X_1, \dots, X_t$ be random variables. Let $a \in \mathbb{R}$. Let $S_t = \sum_{s=1}^{t} X_s$ and $\widetilde{S}_t = \sum_{s=1}^{t} \mathbf{1}_{X_s \leq a} X_s$ where $\mathbf{1}_{X_s \leq a} X_s$ denotes the truncated version of $X_s$. Then it holds that
%      \begin{equation}
%     \mathbb{P}(S_t > x) \leq \mathbb{P}\left(\max_{1 \leq s \leq t} X_s \geq a\right) + \mathbb{P}(\widetilde{S}_t > x).
%      \nonumber
%     \end{equation}
% \end{lemma}

\begin{lemma}  \label{LemA2-1}	
    Let $M_1, \dots, M_N$ be random matrices in $\mathbb{R}^{n\times n}$. Let $W \in \mathbb{R}^{n\times n}$ be a fixed symmetric matrix. Let $S_N = \sum_{i=1}^{N} M_i$ and $\tilde{S}_N = \sum_{i=1}^{N} \tilde{M}_i$, where $\tilde{M}_i = M_i\mathbb{I}_{\left\{{M_i \preccurlyeq W}\right\}}$ is the truncated version of $M_i$. Then it holds that
     \begin{equation*}
         \mathbb{P}\left(S_N \succ V\right) \leq \mathbb{P}\left(\max\limits_{1\leq i\leq N} M_{i} \succ W \right) + \mathbb{P}\left(\tilde{S}_N \succ V\right).
     \end{equation*}
\end{lemma}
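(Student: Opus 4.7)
The plan is a straightforward truncation/union argument based on partitioning the sample space according to whether the truncation actually affects the sum. I interpret the event $\{\max_{1\leq i \leq N} M_i \succ W\}$ as the event $\mathcal{A} := \{\exists\, i \in \{1,\dots,N\} : M_i \not\preccurlyeq W\}$, i.e.\ at least one summand fails the truncation criterion, since this is the only reading under which the indicators $\mathbb{I}_{\{M_i \preccurlyeq W\}}$ line up naturally with the conclusion.

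First, I would use the law of total probability to split the event of interest along $\mathcal{A}$ and its complement:
\begin{equation*}
\mathbb{P}(S_N \succ V) = \mathbb{P}\bigl(\{S_N \succ V\} \cap \mathcal{A}\bigr) + \mathbb{P}\bigl(\{S_N \succ V\} \cap \mathcal{A}^c\bigr).
\end{equation*}
Second, the first term is bounded trivially by $\mathbb{P}(\mathcal{A})$. Third, on the event $\mathcal{A}^c$ we have $M_i \preccurlyeq W$ for every $i$, so the truncation is vacuous: $\tilde M_i = M_i \mathbb{I}_{\{M_i \preccurlyeq W\}} = M_i$ for all $i$, and hence $\tilde S_N = S_N$ pointwise on $\mathcal{A}^c$. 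Consequently
\begin{equation*}
\{S_N \succ V\} \cap \mathcal{A}^c \;=\; \{\tilde S_N \succ V\} \cap \mathcal{A}^c \;\subseteq\; \{\tilde S_N \succ V\},
\end{equation*}
so the second term is at most $\mathbb{P}(\tilde S_N \succ V)$. Combining these three steps yields the claimed inequality.

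No real obstacle arises here: the argument is purely set-theoretic once one fixes the interpretation of the matrix max. The only point worth being careful about is measurability of $\mathcal{A}$, which is immediate because $\mathcal{A}$ is a finite union of the events $\{M_i \not\preccurlyeq W\} = \{\lambda_{\max}(M_i - W) > 0\}$, each a Borel event in the entries of $M_i$. Thus the same proof works without any structural assumption (independence, martingale property, etc.) on the $M_i$, which is exactly how this lemma is used in \eqref{AE9}, where it converts an a priori tail bound on the untruncated sum $R_{p,i}$ into one on the truncated sum $\tilde R_{p,i}$ plus the standard sub-Gaussian tail for $\max_k \norm{w_{p,i}(k)}$.
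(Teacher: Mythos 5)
Your proof is correct and is exactly the standard truncation argument that the paper invokes by deferring to the scalar case of \cite[Lemma 14]{Abbasi2011regret}: partitioning on whether any summand violates $M_i \preccurlyeq W$, bounding that piece trivially, and noting that the truncation is vacuous on the complement so that $\tilde S_N = S_N$ there. Your reading of the event $\left\{\max_{1\leq i\leq N} M_i \succ W\right\}$ as $\left\{\exists\, i : M_i \not\preccurlyeq W\right\}$ is the one under which the statement actually holds (the Loewner order being only partial), and it matches how the lemma is applied in \eqref{AE9}.
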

\begin{proof}
The proof closely follows the proof of the scalar case \cite[Lemma 14]{Abbasi2011regret}, thus, it is omitted here.
% We have
%      \begin{equation*}
%      \begin{split}
%          &\mathbb{P}\left(S_N \succ V\right) = \mathbb{P}\left(S_N \succ V, \bigcup_{i=1}^{N} M_{i} \succ W\right) + \\
%          &\mathbb{P}\left(S_N \succ V, {\left(\bigcup_{i=1}^{N} M_{i} \succ W\right)}^c\right) 
%          \leq \\
%          &\mathbb{P}\left(\bigcup_{i=1}^{N} M_{i} \succ W\right) + 
%          \mathbb{P}\left(S_N \succ V, \bigcap_{i=1}^{N} M_{i} \preccurlyeq W\right) = \\ &\mathbb{P}\left(\bigcup_{i=1}^{N} M_{i} \succ W\right) + \mathbb{P}\left(\tilde{S}_N \succ V\right),
%      \end{split}    
%      \end{equation*}
%      where the last inequality is due to $M_i = \tilde{M}_i$ on the event $\bigcup_{i=1}^{N} M_{i} \preccurlyeq W$.
\end{proof}

\begin{lemma} [{\cite[Th. 7.1]{Tropp2012user}}]\label{LemA2}
	Matrix Azuma: Consider a finite adapted sequence $\left\{W_k\right\}$ of self-adjoint matrices of dimension $d$, and a fixed sequence $\left\{M_k\right\}$ of self-adjoint matrices that satisfy $\mathbb{E}_{k-1}W_k = 0$ and $M_k^2\succcurlyeq W_k^2$ almost surely. Then, for all $t\geq 0$, $\mathbb{P}\left\{\lambda_{\rm{max}}(\sum_{k} W_k)\geq t\right\} \leq d\cdot{\rm{exp}}\left({-\frac{t^2}{8\norm{\sum_{k} M_k^2}}}\right)$.
	
%	\begin{equation}
%		\mathbb{P}\left\{\lambda_{\rm{max}}(\sum_{k} W_k)\geq t\right\} \leq d\cdot{\rm{exp}}\left({-\frac{t^2}{8\norm{\sum_{k} M_k^2}}}\right).
%		\nonumber
%	\end{equation}
\end{lemma}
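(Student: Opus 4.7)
The plan is to follow Tropp's matrix-martingale framework, which reduces the claim to three modular ingredients: (i) a matrix Laplace-transform (Chernoff-type) bound, (ii) a subadditivity of matrix cumulants for martingales derived from Lieb's concavity theorem, and (iii) a matrix Hoeffding-type bound on the conditional moment generating function of each increment. Once these pieces are in place, the stated Gaussian tail follows by optimizing over the Laplace parameter.

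First, writing $S_N := \sum_k W_k$ and using that $\lambda_{\max}(e^{\theta S_N}) \leq \operatorname{tr} e^{\theta S_N}$ since $e^{\theta S_N}$ is positive definite, Markov's inequality applied to the monotone function $x \mapsto e^{\theta x}$ gives, for any $\theta > 0$,
\begin{equation*}
\mathbb{P}\{\lambda_{\max}(S_N) \geq t\} \leq e^{-\theta t}\,\mathbb{E}\bigl[\operatorname{tr} e^{\theta S_N}\bigr].
\end{equation*}
Next, I would peel off the terms one at a time using the tower property together with Lieb's theorem, which states that $X \mapsto \operatorname{tr}\exp(H + \log X)$ is concave on the positive-definite cone. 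Applied conditionally on $\mathcal{F}_{k-1}$ with $H = \theta S_{k-1}$, Jensen's inequality yields $\mathbb{E}_{k-1}\operatorname{tr}\exp(H + \theta W_k) \leq \operatorname{tr}\exp(H + \log \mathbb{E}_{k-1} e^{\theta W_k})$; iterating from $k=N$ down to $k=1$ gives the subadditivity bound
\begin{equation*}
\mathbb{E}\bigl[\operatorname{tr} e^{\theta S_N}\bigr] \leq \operatorname{tr}\exp\!\Bigl(\sum_k \log \mathbb{E}_{k-1} e^{\theta W_k}\Bigr).
\end{equation*}

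The main obstacle is controlling the conditional cumulant $\log \mathbb{E}_{k-1} e^{\theta W_k}$ under the hypotheses $\mathbb{E}_{k-1} W_k = 0$ and $W_k^2 \preccurlyeq M_k^2$. I would establish the matrix Hoeffding-type bound $\log \mathbb{E}_{k-1} e^{\theta W_k} \preccurlyeq c\,\theta^2\,M_k^2$ for a universal constant $c$. Non-commutativity forbids the scalar trick of factoring matrix exponentials, so the standard workaround is a conditional symmetrization: introduce an independent copy $W_k'$ of $W_k$, dominate $\mathbb{E}_{k-1} e^{\theta W_k}$ by $\mathbb{E}_{k-1} e^{\theta(W_k - W_k')}$ using Jensen's inequality and the centering assumption, invoke the operator inequality $\cosh(\theta Y) \preccurlyeq \exp(\theta^2 Y^2 / 2)$ for the symmetrized matrix $Y = W_k - W_k'$, and finally bound $Y^2$ in terms of $W_k^2, (W_k')^2 \preccurlyeq M_k^2$. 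The symmetrization inflates the constant relative to the scalar Hoeffding lemma, and it is precisely this looseness that produces the exponent $\tfrac{1}{8}$ rather than $\tfrac{1}{2}$ in the final bound.

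Finally, combining the three ingredients with $\operatorname{tr}\exp(Z) \leq d\,e^{\lambda_{\max}(Z)}$ and the operator monotonicity of $\lambda_{\max}$, and setting $\sigma^2 := \norm{\sum_k M_k^2}$, we obtain
\begin{equation*}
\mathbb{P}\{\lambda_{\max}(S_N) \geq t\} \leq d\,\exp\!\bigl(-\theta t + c\,\theta^2\sigma^2\bigr)
\end{equation*}
for every $\theta > 0$. Optimizing by taking $\theta = t/(2c\sigma^2)$ yields the claimed tail $d\cdot\exp\bigl(-t^2/(8\sigma^2)\bigr)$.
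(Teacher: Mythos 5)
The paper does not prove this lemma at all --- it is imported verbatim from Tropp (2012, Theorem~7.1) --- so there is no in-paper argument to compare against; I am judging your reconstruction against the cited source. Your three-ingredient architecture (matrix Laplace transform, Lieb-based subadditivity of conditional cumulants, a Hoeffding-type bound on each conditional cgf, then optimization over $\theta$) is exactly the architecture of Tropp's proof, and your constant bookkeeping is right: a cgf bound $\log \mathbb{E}_{k-1} e^{\theta W_k} \preccurlyeq 2\theta^2 M_k^2$ fed into the master bound and optimized at $\theta = t/(4\sigma^2)$, with $\sigma^2 = \norm{\sum_k M_k^2}$, gives the exponent $t^2/(8\sigma^2)$, and the factor-of-four inflation of the variance proxy (hence $1/8$ instead of $1/2$) does trace back to $(W_k - W_k')^2 \preccurlyeq 4M_k^2$.

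The gap is in the symmetrization step. You propose to ``dominate $\mathbb{E}_{k-1} e^{\theta W_k}$ by $\mathbb{E}_{k-1} e^{\theta(W_k - W_k')}$ using Jensen's inequality and the centering assumption.'' In the positive semidefinite order this requires operator convexity of the matrix exponential, which fails for $d \geq 2$: the failure is precisely the existence of self-adjoint $H, Y$ with $\tfrac{1}{2}\bigl(e^{H+Y} + e^{H-Y}\bigr) \not\succcurlyeq e^{H}$, which is the two-point instance of the inequality you invoke. A trace version of the domination does hold (trace-Jensen needs only scalar convexity), but it is not enough here, because the Lieb/tower recursion consumes the cgf bound inside $\operatorname{tr}\exp(H + \cdot)$ and therefore needs the full semidefinite inequality $\log \mathbb{E}_{k-1} e^{\theta W_k} \preccurlyeq 2\theta^2 M_k^2$. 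This is the known technical crux of the matrix Azuma inequality, and Tropp's Lemma~7.2 establishes the cgf bound by a more delicate argument precisely because naive matrix symmetrization breaks down; your sketch identifies the right target and the right constant but not a valid route to it. A second, smaller point: for an adapted (non-independent) sequence your displayed bound $\mathbb{E}\operatorname{tr} e^{\theta S_N} \leq \operatorname{tr}\exp\bigl(\sum_k \log \mathbb{E}_{k-1} e^{\theta W_k}\bigr)$ leaves random matrices in the exponent; the correct procedure interleaves the almost-sure domination $\log \mathbb{E}_{k-1} e^{\theta W_k} \preccurlyeq 2\theta^2 M_k^2$ with each step of the recursion, using monotonicity of $A \mapsto \operatorname{tr}\exp(H + A)$, so that only the deterministic matrices $M_k^2$ accumulate.
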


% \begin{lemma} [Corollary 7.2 in \cite{Tropp2012user}]\label{LemA2-1}
% 	Consider an self-adjoint matrix martingale $\left\{Y_k: k =1,...,n\right\}$ in dimension $d$, and let $\left\{X_k \triangleq Y_k - Y_{k-1}\right\}$ be the associated difference sequence. Suppose that the difference sequence satisfies the hypothesis of Lemma \ref{LemA2}, then for all $t\geq 0$,
% 	\begin{equation}
% 		\mathbb{P}\left\{\lambda_{\rm{max}}(Y_n - \mathbb{E}Y_n)\geq t\right\} \leq d\cdot{\rm{exp}}\left({-\frac{t^2}{8\norm{\sum_{k} M_k^2}}}\right).
% 		\nonumber
% 	\end{equation}
% \end{lemma}

\begin{lemma} [\cite{Tao2010254a}]\label{LemA3}
	Weyl's inequality: Let $M_1, M_2 \in \mathbb{R}^{n\times n}$ be Hermitian matrices, with eigenvalues ordered in descending order $\lambda_1\geq \lambda_2\geq \cdots \geq \lambda_n$. Then, for $i+j > n$,
	\begin{equation*}
		\lambda_{i+j-1}(M_1+M_2) \leq \lambda_{i}(M_1)+\lambda_{j}(M_2) \leq \lambda_{i+j-n}(M_1+M_2).
	\end{equation*}
\end{lemma}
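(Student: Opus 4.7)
My plan is to prove Weyl's inequality via the Courant--Fischer min--max characterization of eigenvalues. For a Hermitian matrix $M \in \mathbb{R}^{n \times n}$ with eigenvalues $\lambda_1(M) \geq \cdots \geq \lambda_n(M)$, this characterization states that
\begin{equation*}
\lambda_k(M) = \max_{\dim V = k}\, \min_{x \in V,\, \|x\|=1} x^\top M x = \min_{\dim V = n-k+1}\, \max_{x \in V,\, \|x\|=1} x^\top M x.
\end{equation*}
Both halves of the stated inequality will follow by choosing clever candidate subspaces built from the eigenvectors of $M_1$ and $M_2$ separately, combined with a dimension-counting argument and the additivity of quadratic forms.

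For the upper bound $\lambda_{i+j-1}(M_1+M_2) \leq \lambda_i(M_1) + \lambda_j(M_2)$, I would define $W_1$ as the orthogonal complement of the top $i-1$ eigenvectors of $M_1$, so that $\dim W_1 = n-i+1$ and $x^\top M_1 x \leq \lambda_i(M_1)$ for every unit $x \in W_1$. Constructing $W_2$ analogously for $M_2$ with $\dim W_2 = n-j+1$, inclusion-exclusion gives $\dim(W_1 \cap W_2) \geq n-i-j+2$, and on this intersection the Rayleigh quotient of $M_1+M_2$ is bounded above by $\lambda_i(M_1)+\lambda_j(M_2)$. Feeding this candidate subspace into the min--max form of Courant--Fischer at index $k = i+j-1$ (which is valid because $n-k+1 = n-i-j+2 \leq \dim(W_1 \cap W_2)$) yields the desired inequality.

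For the lower bound $\lambda_i(M_1) + \lambda_j(M_2) \leq \lambda_{i+j-n}(M_1+M_2)$, I would perform the dual construction: take $V_1$ as the span of the top $i$ eigenvectors of $M_1$ and $V_2$ as the span of the top $j$ eigenvectors of $M_2$, so that on each $V_\ell$ the Rayleigh quotient of $M_\ell$ is bounded below by the appropriate eigenvalue. The stated condition $i+j > n$ is used precisely here, to guarantee that $\dim(V_1 \cap V_2) \geq i+j-n \geq 1$, so the intersection is nontrivial. Any unit vector $x \in V_1 \cap V_2$ then satisfies $x^\top(M_1+M_2)x \geq \lambda_i(M_1)+\lambda_j(M_2)$, and plugging an $(i+j-n)$-dimensional subspace of this intersection into the max--min form of Courant--Fischer completes the argument.

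I do not anticipate any substantive obstacle: the result is classical and reduces to one-paragraph consequences of min--max. The only bookkeeping subtlety is tracking the direction of the eigenvalue ordering (descending here, which swaps the roles of the two Courant--Fischer formulas relative to the more common ascending convention) so that the inequality signs do not flip. Since the statement is cited from \cite{Tao2010254a}, in the final writeup I would ultimately defer to the reference and simply verify that the indexing convention aligns with the way the lemma is invoked in the proof of \lemref{Lem1}, where it converts an additive eigenvalue perturbation bound $\lambda_{\max}(R_{p,i}) \leq \bar{r}_{p,i}\sqrt{N}$ into a lower bound on $\sigma_{\min}\bigl(\sum_k w_{p,i}(k) w_{p,i}^\top(k)\bigr)$.
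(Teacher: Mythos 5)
The paper does not prove this lemma at all---it is imported verbatim from the cited reference \cite{Tao2010254a}---so there is no in-paper argument to compare against. Your Courant--Fischer proof is the standard one and is correct: the dimension count $\dim(W_1\cap W_2)\ge n-i-j+2$ feeds the min--max form at index $i+j-1$ for the upper bound, and $\dim(V_1\cap V_2)\ge i+j-n\ge 1$ feeds the max--min form at index $i+j-n$ for the lower bound, with the Rayleigh-quotient additivity doing the rest. One small observation worth recording: the two halves naturally hold under complementary index conditions ($i+j\le n+1$ so that $\lambda_{i+j-1}$ is defined, versus $i+j>n$ so that $\lambda_{i+j-n}$ is defined), so the chained inequality as literally stated in the lemma is only meaningful when $i+j=n+1$; your proof correctly establishes each half under its own hypothesis, and the application in \lemref{Lem1} only uses the single-index special case $\lambda_{\min}(M_1+M_2)\ge\lambda_{\min}(M_1)+\lambda_{\min}(M_2)$, which both conventions cover.
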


\begin{lemma} [{\cite[Th. 12]{Ahlswede2002strong}}] \label{LemA4}
	Markov's inequality: Let a matrix $M \succ 0$, and a random matrix $W \succcurlyeq 0$ almost surely. Then, we have $\mathbb{P}\left(W\not\preceq M\right) \leq {\rm {trace}}\left(\mathbb{E}W M^{-1}\right)$, where $\left(W\not\preceq M\right)$ is the complement of the event $\left(W\preceq M\right)$.
\end{lemma}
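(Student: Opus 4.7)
The plan is to reduce the matrix statement to the classical scalar Markov inequality through a PSD normalization. First, I would rewrite the event of interest: since $M \succ 0$, the relation $W \preceq M$ is equivalent to $M - W \succcurlyeq 0$, which in turn (by congruence with $M^{-1/2}$) is equivalent to $I - M^{-1/2} W M^{-1/2} \succcurlyeq 0$, i.e., $\lambda_{\max}(M^{-1/2} W M^{-1/2}) \leq 1$. Consequently, the event $\{W \not\preceq M\}$ coincides with $\{\lambda_{\max}(M^{-1/2} W M^{-1/2}) > 1\}$.

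Second, I would invoke the elementary bound $\lambda_{\max}(S) \leq \mathrm{trace}(S)$ valid for any PSD matrix $S$. Applied to $S := M^{-1/2} W M^{-1/2} \succcurlyeq 0$ (which is PSD almost surely because $W \succcurlyeq 0$ and $M^{-1/2}$ is symmetric), this yields the set inclusion $\{\lambda_{\max}(S) > 1\} \subseteq \{\mathrm{trace}(S) > 1\}$, and hence $\mathbb{P}(W \not\preceq M) \leq \mathbb{P}(\mathrm{trace}(S) > 1)$.

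Third, let $T := \mathrm{trace}(M^{-1/2} W M^{-1/2})$, which is a non-negative scalar random variable. The classical Markov inequality gives $\mathbb{P}(T > 1) \leq \mathbb{E}[T]$. Finally, using the cyclic invariance of the trace together with linearity of expectation, $\mathbb{E}[T] = \mathbb{E}[\mathrm{trace}(W M^{-1})] = \mathrm{trace}(\mathbb{E}[W] M^{-1})$, which is exactly the asserted upper bound.

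The only delicate point is the first step: rewriting the matrix event as a scalar-valued event on $\lambda_{\max}$ requires $M \succ 0$ strictly, so that $M^{-1/2}$ exists, and uses the standard fact that congruence by a non-singular matrix preserves the positive semidefinite cone. Beyond this, the argument is routine. The relaxation $\lambda_{\max}(S) \leq \mathrm{trace}(S)$ is lossy in general, but it is precisely what matches the trace appearing on the right-hand side of the claim, which indicates that this is the natural route to the stated form and explains why the bound involves the full trace rather than a single eigenvalue.
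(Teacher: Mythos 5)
Your proof is correct: the chain $\{W \not\preceq M\} = \{\lambda_{\max}(M^{-1/2}WM^{-1/2}) > 1\} \subseteq \{\mathrm{trace}(M^{-1/2}WM^{-1/2}) > 1\}$ followed by scalar Markov and cyclicity of the trace is exactly the argument behind this result. The paper does not reprove the lemma but simply cites Ahlswede and Winter, whose proof is essentially the one you gave, so there is nothing further to compare.
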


\begin{lemma} [{\cite[Th. 3.4]{Abbasi2011online}}]\label{LemA5}
	  Self-normalized martingale: Let $\left\{\mathcal{F}_k\right\}_{k=0}^N$ be a filtration such that $\left\{W_k\right\}_{k=1}^N$ is adapted to $\left\{\mathcal{F}_{k-1}\right\}_{k=1}^N$ and $\left\{V_k\right\}_{k=1}^N$ is adapted to $\left\{\mathcal{F}_{k}\right\}_{k=1}^N$. Additionally, suppose that for all $1\leq k \leq N$, $V_k \in \mathbb{R}^{n_v\times 1}$ is $\sigma^2-$conditionally sub-Gaussian with respect to $\mathcal{F}_{k}$. Let $\Sigma \in \mathbb{R}^{n_w\times n_w}$. Given a failure probability $0<\delta<1$, then with probability at least $1-\delta$, we have 
	  \begin{equation*}
	  	\begin{split}
	  		&\norm{\sum_{k=1}^{N}V_kW_k^\top \left(\Sigma+\sum_{k=1}^{N}W_kW_k^\top\right)^{-1/2}}^2 \leq 8n_v\sigma^2{\rm {log}}5 + \\
	  		&4\sigma^2{\rm {log}}\left(\frac{{\rm {det}}\left(\Sigma+\sum_{k=1}^{N}W_kW_k^\top\right)}{{\rm {det}}\left(\Sigma\right)}\right) +  8\sigma^2{\rm {log}}\frac{1}{\delta}.
	  	\end{split}	  	
	  \end{equation*}
\end{lemma}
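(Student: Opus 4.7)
The plan is to combine the method of mixtures for scalar self-normalized martingales with an $\varepsilon$-net argument on the unit sphere $S^{n_v-1} \subset \mathbb{R}^{n_v}$ to handle the vector dimension of $V_k$. Write $S_N := \Sigma + \sum_{k=1}^N W_k W_k^\top$ and $M_N := \sum_{k=1}^N V_k W_k^\top$, so that the quantity to control is $\norm{M_N S_N^{-1/2}}$. Throughout, read the hypothesis as: $V_k$ is $\sigma^2$-conditionally sub-Gaussian given $\mathcal{F}_{k-1}$ (the statement as written must involve this minor index shift, since $V_k$ is already $\mathcal{F}_k$-measurable), and $W_k$ is $\mathcal{F}_{k-1}$-predictable.

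First I would reduce to the scalar case. For a fixed unit vector $u \in \mathbb{R}^{n_v}$, the scalars $X_k^{(u)} := u^\top V_k$ are $\sigma^2$-sub-Gaussian conditional on $\mathcal{F}_{k-1}$. For any $\eta \in \mathbb{R}^{n_w}$, define the exponential process
\begin{equation*}
L_k(\eta) := \exp\!\left(\eta^\top \sum_{j=1}^k X_j^{(u)} W_j - \tfrac{\sigma^2}{2}\eta^\top \!\left(\sum_{j=1}^k W_j W_j^\top\right)\!\eta\right).
\end{equation*}
The sub-Gaussian hypothesis together with the predictability of $W_k$ gives $\mathbb{E}[L_k(\eta)\mid \mathcal{F}_{k-1}] \leq L_{k-1}(\eta)$, so $\{L_k(\eta)\}$ is a nonnegative supermartingale with $L_0 = 1$. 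Mixing over $\eta$ against the Gaussian prior $\mathcal{N}(0,\sigma^{-2}\Sigma^{-1})$ yields, by Fubini, another supermartingale with the closed-form expression
\begin{equation*}
\bar L_k = \frac{\det(\Sigma)^{1/2}}{\det(S_k)^{1/2}}\exp\!\left(\frac{1}{2\sigma^2}\norm{S_k^{-1/2}\sum_{j\leq k} X_j^{(u)} W_j}^2\right).
\end{equation*}
Ville's maximal inequality applied to $\bar L$ at level $1/\delta'$ produces, with probability at least $1-\delta'$, the scalar self-normalized bound
\begin{equation*}
\norm{u^\top M_N S_N^{-1/2}}^2 \leq 2\sigma^2 \log\frac{\det(S_N)^{1/2}\det(\Sigma)^{-1/2}}{\delta'}.
\end{equation*}

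Next I would lift the scalar bound to the operator norm. Take a $1/2$-net $\mathcal{N}$ of the unit sphere in $\mathbb{R}^{n_v}$ with $|\mathcal{N}|\leq 5^{n_v}$, which is the standard volumetric covering. The inequality $\norm{A}\leq 2\max_{u\in\mathcal{N}}\norm{u^\top A}$ reduces the spectral norm to a maximum over the net, and a union bound with per-point failure probability $\delta/5^{n_v}$ gives, with probability at least $1-\delta$,
\begin{equation*}
\norm{M_N S_N^{-1/2}}^2 \leq 8\sigma^2 \log\frac{\det(S_N)^{1/2}\det(\Sigma)^{-1/2} \cdot 5^{n_v}}{\delta}.
\end{equation*}
Expanding the logarithm and separating the three contributions gives $8n_v \sigma^2 \log 5 + 4\sigma^2 \log(\det(S_N)/\det(\Sigma)) + 8\sigma^2 \log(1/\delta)$, matching the claim exactly.

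The main obstacle is the method-of-mixtures step: verifying that the Gaussian integration in $\eta$ produces the clean closed form for $\bar L_k$ requires careful bookkeeping of the randomness in $S_k$, and verifying that mixing preserves the supermartingale property requires a Fubini/tower argument. This is the technical core of the Abbasi-Yadkori–P\'al–Szepesv\'ari result. Once the scalar bound is in hand, the net step is routine; the $n_v \log 5$ summand is precisely the dimension-dependence introduced by the covering, and the factor $4$ in front of $\log\det$ arises from squaring the $1/2$-net constant $2$.
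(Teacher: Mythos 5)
Your proposal is correct: the paper does not prove this lemma but simply imports it from Abbasi-Yadkori et al.\ (cited as Th.~3.4 of that reference), and your reconstruction — scalar method of mixtures with a Gaussian prior $\mathcal{N}(0,\sigma^{-2}\Sigma^{-1})$, Ville's inequality, then a $1/2$-net over the sphere in $\mathbb{R}^{n_v}$ with covering number $5^{n_v}$ and a union bound — is precisely the standard argument behind that result, and your constant accounting ($8n_v\sigma^2\log 5$ from the net cardinality, the factor $4$ on the $\log\det$ term from squaring the net constant $2$) matches the stated bound exactly. Your reading of the index convention (sub-Gaussianity of $V_k$ conditional on $\mathcal{F}_{k-1}$) is also the correct interpretation of the statement.
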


\begin{lemma} [{\cite{Hanson1971bound}, \cite[Th. 2.1]{Ziemann2023tutorial}}]\label{LemA6}
	 Hanson-Wright inequality: Consider a random variable $w\in\mathbb{R}^{n_w}$, where each entry is a scalar, zero mean and independent $\sigma_w^2$-sub-Gaussian random variable. For a matrix $M\in\mathbb{R}^{n_w\times n_w}$ and every $s\geq0$, we have
	\begin{equation}
		\nonumber
		\begin{split}
			&\mathbb{P}\left(\norm{w^\top Mw - \mathbb{E}w^\top Mw}>s\right) \leq \\
			&2{\rm {exp}}\left(-{\rm {min}}\left(\frac{s^2}{114\sigma_w^4\norm{M}_F^2},\frac{s}{16\sqrt{2}\sigma_w^2\norm{M}}\right)\right).
		\end{split}
	\end{equation}
\end{lemma}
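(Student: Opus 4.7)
Since this is a classical result, the plan is to follow the standard \emph{symmetrize--split--decouple} route (as in Rudelson--Vershynin, and reproduced in \cite{Ziemann2023tutorial}). The only real work is to track constants carefully enough to recover the $114$ and $16\sqrt{2}$ stated above.

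\emph{Step 1 (symmetrize and split).} Replace $M$ by $\tilde M := (M+M^{\top})/2$; the quadratic form is unchanged, and $\norm{\tilde M}\leq \norm{M}$, $\norm{\tilde M}_F \leq \norm{M}_F$, so WLOG $M$ is symmetric. Write
$$w^{\top} M w - \mathbb{E}[w^{\top} M w] \;=\; \underbrace{\sum_{i} M_{ii}\bigl(w_i^2 - \mathbb{E} w_i^2\bigr)}_{=:\,X_d} \;+\; \underbrace{\sum_{i\neq j} M_{ij} w_i w_j}_{=:\,X_o},$$
and control $X_d$ and $X_o$ separately, finishing via $\mathbb{P}(|X_d+X_o|>s)\leq \mathbb{P}(|X_d|>s/2)+\mathbb{P}(|X_o|>s/2)$ and redistributing constants.

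\emph{Step 2 (diagonal part).} Each $Z_i := w_i^2-\mathbb{E} w_i^2$ is centered and sub-exponential: a routine computation from sub-Gaussianity of $w_i$ gives $\mathbb{E}\exp(\lambda Z_i)\leq \exp(c\sigma_w^4\lambda^2)$ for $|\lambda|\leq c'/\sigma_w^2$. Hence $X_d = \sum_i M_{ii} Z_i$ is a sum of independent centered sub-exponentials, and Bernstein's inequality produces a tail of the form $2\exp\!\bigl(-c\min(s^2/(\sigma_w^4\sum_i M_{ii}^2),\, s/(\sigma_w^2\max_i|M_{ii}|))\bigr)$. The obvious estimates $\sum_i M_{ii}^2\leq \norm{M}_F^2$ and $\max_i|M_{ii}|\leq \norm{M}$ deliver the claimed form for $X_d$.

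\emph{Step 3 (off-diagonal chaos --- the hard part).} Let $w'$ be an independent copy of $w$. Bourgain--Tzafriri decoupling (de la Pe\~na--Gin\'e) gives, for every convex $F$,
$$\mathbb{E}F(X_o)\;\leq\;\mathbb{E}F\!\Bigl(4\sum_{i,j} M_{ij}\, w_i w'_j\Bigr).$$
Conditionally on $w'$, the decoupled sum $S:=\sum_{i,j} M_{ij} w_i w'_j = \sum_i w_i \langle M_{i,\cdot}, w'\rangle$ is a linear combination of independent centered $\sigma_w^2$-sub-Gaussians, so
$$\mathbb{E}\bigl[e^{\lambda S}\,\big|\,w'\bigr]\;\leq\;\exp\!\Bigl(\tfrac{\sigma_w^2\lambda^2}{2}\,\norm{M w'}^2\Bigr).$$
It remains to take expectation in $w'$ and bound the MGF of the quadratic form $\norm{Mw'}^2 = (w')^{\top} M^{\top} M w'$. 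I would do this either by a truncation/recursion (using $(w')^\top M^\top M w' \leq \norm{M}\,(w')^\top |M| w'$ and iterating the MGF estimate once more), or by comparison with the Gaussian chaos, where $\mathbb{E}\exp(\lambda g^{\top}M^{\top}M g)\leq \exp(c\sigma_w^4\lambda^2\norm{M}_F^2)$ for $|\lambda|\leq c'/(\sigma_w^2\norm{M})$, and then transfer to the sub-Gaussian setting by a standard comparison. Plugging back and optimizing in $\lambda$ via Markov yields a sub-exponential tail $2\exp(-c\min(s^2/(\sigma_w^4\norm{M}_F^2),\,s/(\sigma_w^2\norm{M})))$ for $X_o$.

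\emph{Main obstacle.} The technical crux is Step 3: executing the decoupling and then controlling the resulting Gaussian-chaos-like MGF of $\norm{Mw'}^2$. The sharp constants $114$ and $16\sqrt{2}$ only emerge after careful bookkeeping of (a) the factor $4$ from decoupling, (b) the factor $1/2$ from the $X_d$/$X_o$ union bound, (c) the Gaussian-comparison constant in the MGF estimate, and (d) the optimization of $\lambda$ in the exponential Markov inequality. Steps 1 and 2, by contrast, are entirely routine.
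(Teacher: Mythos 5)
The paper offers no proof of \lemref{LemA6} at all: it is imported as a technical lemma with a pointer to \cite{Hanson1971bound} and \cite[Th.~2.1]{Ziemann2023tutorial}, so there is no in-paper argument to compare against. Measured against the cited source, your outline follows essentially the same (and the standard) route --- symmetrize, split into diagonal and off-diagonal parts, Bernstein for the diagonal, and Bourgain--Tzafriri decoupling plus the conditional sub-Gaussian MGF bound $\mathbb{E}[e^{\lambda S}\mid w']\leq\exp(\sigma_w^2\lambda^2\norm{Mw'}^2/2)$ followed by a Gaussian-comparison control of the MGF of $\norm{Mw'}^2$ for the off-diagonal chaos. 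Steps 1 and 2 are correct and complete up to constants, and Step 3 names the right tools, including the decoupling factor $4$ and the estimates $\sum_i M_{ii}^2\leq\norm{M}_F^2$, $\max_i|M_{ii}|\leq\norm{M}$.

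The gap is that the only non-generic content of the statement --- the explicit constants $114$ and $16\sqrt{2}$ --- is precisely what you defer. As written, your argument establishes ``$2\exp\bigl(-c\min(s^2/(\sigma_w^4\norm{M}_F^2),\,s/(\sigma_w^2\norm{M}))\bigr)$ for some universal $c$,'' which is a Hanson--Wright inequality but not the one asserted; the crux you flag in Step 3 (the MGF of $\norm{Mw'}^2$ and the optimization in $\lambda$) is exactly where those numbers are determined, and you give two alternative strategies without executing either. This is not purely cosmetic for the paper: the constant $16\sqrt{2}$ propagates verbatim into the threshold $c_2\geq 16\sqrt{2}+1/\log\frac{2}{\delta}$ of Lemma~\ref{Lem5}, so a version with unspecified universal constants would force that downstream statement to be restated. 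To close the proposal you would need to either carry out the bookkeeping you describe in items (a)--(d), or simply cite \cite[Th.~2.1]{Ziemann2023tutorial} for the quantitative form, which is what the paper in effect does.
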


\begin{lemma} [Lemma A.1 in \cite{Tsiamis2019finite}] \label{LemA7}
	Norm of a block matrix: Let a block matrix $M = \begin{bmatrix}
		M_1^\top&M_2^\top&\cdots &M_f^\top
	\end{bmatrix}^\top$, where all the sub-matrices $M_i$'s have the same dimension. Then, the block matrix $M$ satisfies $\norm{M} \leq \sqrt{f} \max\limits_{1\leq i\leq f} \norm{M_i}$.
%	\begin{equation}
%		\nonumber
%		\norm{M} \leq \sqrt{f} \max\limits_{1\leq i\leq f} \norm{M_i}.
%	\end{equation}
\end{lemma}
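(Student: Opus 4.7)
The plan is to combine the variational definition of the spectral norm with the vertical block structure of $M$. For any unit vector $x$ of the appropriate column dimension, the product $Mx$ is simply the vertical concatenation of $M_1x,\ldots,M_fx$, so the Pythagorean identity gives $\norm{Mx}^2 = \sum_{i=1}^{f}\norm{M_i x}^2$. Each term is then bounded via the operator-norm inequality $\norm{M_i x}^2\leq \norm{M_i}^2\norm{x}^2 = \norm{M_i}^2$, and summing yields $\norm{Mx}^2 \leq \sum_{i=1}^{f}\norm{M_i}^2 \leq f\max_{1\leq i\leq f}\norm{M_i}^2$. Taking the supremum over unit $x$ on the left-hand side and a square root on both sides produces the desired bound $\norm{M}\leq\sqrt{f}\max_{1\leq i\leq f}\norm{M_i}$.

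An equivalent route that I would mention as a sanity check is to use the identity $M^\top M = \sum_{i=1}^{f}M_i^\top M_i$ together with the triangle inequality for the spectral norm: $\norm{M}^2 = \norm{M^\top M}\leq \sum_{i=1}^{f}\norm{M_i^\top M_i} = \sum_{i=1}^{f}\norm{M_i}^2$, and then bound the sum by $f$ times its maximum. Both arguments collapse to the same three ingredients: the variational definition of $\norm{\cdot}$, a block-wise decomposition, and a crude sum-to-max inequality.

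Since the proof is entirely elementary, there is essentially no technical obstacle. The only pitfall worth flagging is not to conflate the spectral and Frobenius norms: the Frobenius norm admits a clean block decomposition $\norm{M}_F^2=\sum_i \norm{M_i}_F^2$, but the spectral norm does not factorize in this way and requires either of the two arguments above, each of which can cost a factor of up to $\sqrt{f}$. This factor is tight in general, as one sees by taking all $M_i$ equal to a common rank-one matrix with identical top singular value.
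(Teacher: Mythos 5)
Your proof is correct and complete. The paper itself gives no proof of this lemma---it is imported verbatim as Lemma A.1 of the cited reference---and your argument (Pythagorean decomposition of $\norm{Mx}^2$ over the vertical blocks, operator-norm bound on each term, sum-to-max) is the standard one and matches what that reference does; the alternative route via $M^\top M=\sum_i M_i^\top M_i$ and the tightness example with $f$ identical rank-one blocks are both also correct.
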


\begin{lemma} \label{LemA8}
	Shur complement: Let $M = \begin{bmatrix}
		M_1&M_2\\M_2^\top&M_4
	\end{bmatrix}$ be a block matrix, and $M_4 \succ 0$. Defining $M_S = M_1-M_2M_4^{-1}M_2^\top$ as the Shur complement of $M$, we then have 
\begin{enumerate}
	\item $M^{-1} = \begin{bmatrix}M_S^{-1}&-M_S^{-1}M_2M_4^{-1}\\-M_4^{-1}M_2^\top M_S^{-1}&M_\Delta\end{bmatrix}$, where $M_\Delta = M_4^{-1}+M_4^{-1}M_2^\top M_S^{-1}M_2M_4^{-1}$.
	\item $M \succ 0$, if and only if $M_S\succ 0$.
	\item If $M \succ 0$, then $\lambda_{\rm{max}}(M) \geq \lambda_{\rm{max}}(M_S)$.
	\item If $M \succ 0$, then $\lambda_{\rm{min}}(M) \leq \lambda_{\rm{min}}(M_S)$.
\end{enumerate}
\end{lemma}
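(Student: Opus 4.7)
The plan is to base the entire proof on the block LDU (Aitken) decomposition
\[
M = \begin{bmatrix} I & M_2 M_4^{-1} \\ 0 & I \end{bmatrix} \begin{bmatrix} M_S & 0 \\ 0 & M_4 \end{bmatrix} \begin{bmatrix} I & 0 \\ M_4^{-1} M_2^\top & I \end{bmatrix},
\]
which is verified by multiplying out the three factors and using the definition $M_S = M_1 - M_2 M_4^{-1} M_2^\top$. Statement (1) then follows by inverting each factor: the unit triangular factors invert by negating their off-diagonal blocks, the middle block-diagonal factor inverts to $\mathrm{diag}(M_S^{-1}, M_4^{-1})$, and composing the three inverses in reverse order reproduces the claimed closed-form expression for $M^{-1}$. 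Statement (2) follows from the same decomposition because the outer triangular factors are non-singular and mutual transposes, so $M$ is congruent to $\mathrm{diag}(M_S, M_4)$; Sylvester's law of inertia together with the standing hypothesis $M_4 \succ 0$ then yields $M \succ 0$ iff $M_S \succ 0$.

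For statement (3), I would use the Rayleigh--Ritz characterization $\lambda_{\rm{max}}(M) = \sup_{x \neq 0} x^\top M x / x^\top x$ and plug in the test vector $x = [y^\top, 0]^\top$, where $y$ is a unit eigenvector of $M_S$ realizing $\lambda_{\rm{max}}(M_S)$. This collapses the Rayleigh quotient to $y^\top M_1 y$. Since $M_4 \succ 0$ yields $M_2 M_4^{-1} M_2^\top \succcurlyeq 0$, we have $M_1 = M_S + M_2 M_4^{-1} M_2^\top \succcurlyeq M_S$, hence $y^\top M_1 y \geq y^\top M_S y = \lambda_{\rm{max}}(M_S)$, which closes the proof.

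Statement (4) is the one step that requires a carefully chosen test vector: the idea is to arrange for the bottom block of $Mx$ to vanish so that the Rayleigh quotient collapses exactly onto the Schur-complement quadratic form. Specifically, I would take $x = [y^\top,\, -(M_4^{-1} M_2^\top y)^\top]^\top$, where $y$ is the unit eigenvector of $M_S$ associated with $\lambda_{\rm{min}}(M_S)$. A short calculation gives $Mx = [(M_S y)^\top, 0]^\top$, hence $x^\top M x = y^\top M_S y = \lambda_{\rm{min}}(M_S)$, while $\|x\|^2 = 1 + \|M_4^{-1} M_2^\top y\|^2 \geq 1$. The variational characterization of $\lambda_{\rm{min}}$ then yields $\lambda_{\rm{min}}(M) \leq \lambda_{\rm{min}}(M_S)/\|x\|^2 \leq \lambda_{\rm{min}}(M_S)$, where $M_S \succ 0$ (from (2)) is what guarantees that the division by $\|x\|^2 \geq 1$ preserves rather than reverses the inequality.

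The only non-routine step is the design of the witnessing vector in (4); the rest reduces mechanically to the LDU decomposition and the standard variational characterization of extremal eigenvalues.
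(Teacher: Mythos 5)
Your proposal is correct. Parts (1), (2), and (3) follow essentially the same path as the paper: the same block LDU factorization yields the inverse formula and (via congruence) the positivity equivalence, and your Rayleigh-quotient argument for (3) with the test vector $[y^\top,0]^\top$ is just the spelled-out version of the paper's chain $\lambda_{\max}(M_S)\leq\lambda_{\max}(M_1)\leq\lambda_{\max}(M)$ using $M_S\preccurlyeq M_1$. Where you genuinely diverge is part (4). The paper deduces it as a corollary of part (1): since $M_S^{-1}$ is the $(1,1)$ block of $M^{-1}$, the principal-submatrix bound gives $\lambda_{\max}(M^{-1})\geq\lambda_{\max}(M_S^{-1})$, and taking reciprocals yields $\lambda_{\min}(M)\leq\lambda_{\min}(M_S)$. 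You instead construct the witness $x=[y^\top,\,-(M_4^{-1}M_2^\top y)^\top]^\top$ so that $Mx=[(M_Sy)^\top,0]^\top$ and the Rayleigh quotient collapses to $\lambda_{\min}(M_S)/\lVert x\rVert^2$; this is a direct variational argument that does not route through the inverse formula, at the price of needing $\lambda_{\min}(M_S)>0$ (supplied by part (2)) to conclude that dividing by $\lVert x\rVert^2\geq 1$ does not flip the inequality — a point you correctly flag. Both arguments are sound; the paper's is shorter given (1), yours is self-contained and makes the extremizing direction explicit.
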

\begin{proof}
Given that $M_4 \succ 0$, then $M$ can be rewritten as
\begin{equation}
	\nonumber
	M = \begin{bmatrix}I&M_2M_4^{-1}\\0&I\end{bmatrix} 
\begin{bmatrix}M_S&0\\0&M_4\end{bmatrix}
\begin{bmatrix}I&0\\M_4^{-1}M_2^\top&I\end{bmatrix}. 
\end{equation}

The first and second statements can then be obtained straightforwardly. For the third statement, since $M_S \prec M_1$, we have $\lambda_{\rm{max}}(M_S) \leq \lambda_{\rm{max}}(M_1) \leq \lambda_{\rm{max}}(M)$. For the forth statement, according to the first statement, since $\lambda_{\rm{max}}(M^{-1}) \geq \lambda_{\rm{max}}(M_S^{-1})$, so $\lambda_{\rm{min}}(M) \leq \lambda_{\rm{min}}(M_S)$.

% first we define $\mathcal{M} \triangleq \left\{\begin{bmatrix}m\\n\end{bmatrix}: M_4^{-1}M_2^\top m+n = 0\right\}$. Note that if $z =\begin{bmatrix}m\\n\end{bmatrix} \in \mathcal{M}$, then $\begin{bmatrix}I&0\\M_4^{-1}M_2^\top&I\end{bmatrix}\begin{bmatrix}m\\n\end{bmatrix}=\begin{bmatrix}m\\0\end{bmatrix}$. In this way, we have
% \begin{equation}
% 	\nonumber
% 	\begin{split}
% 		\lambda_{\rm{min}}(M) &= \min\limits_{z\neq 0} \frac{z^\top Mz}{z^\top z} \leq \min\limits_{0\neq z\in\mathcal{M}} \frac{z^\top Mz}{z^\top z} \\
% 		&\leq \min\limits_{m\neq 0} \frac{m^\top M_Sm}{m^\top m} = \lambda_{\rm{min}}(M_S). 
% 	\end{split}
% \end{equation}
\end{proof}

\begin{lemma} [{\cite[Th. 4]{Tsiamis2019finite}}] \label{LemA9}
	Suppose rank $n$ matrices $M$ and $\bar M$ have singular value decomposition $U\Lambda V^\top$ and $\bar U \bar \Lambda \bar V^\top$, where $\bar M$ is the rank $n$ approximation of $\hat M$. If $\norm{M- \hat M} \leq \frac{\sigma_n (M)}{4}$, then there exists a unitary matrix $T$ such that
	\begin{equation}
		\nonumber
		\max\left(\norm{\bar U \bar \Lambda^{1/2} - U  \Lambda^{1/2}T},\norm{\bar \Lambda^{1/2}\bar V^\top -  T^\top \Lambda^{1/2}V^\top}\right) \leq \kappa_M,
	\end{equation}
    where $\kappa_M = \sqrt{\frac{40n}{\sigma_n (M)}}\norm{M- \hat M}$.
\end{lemma}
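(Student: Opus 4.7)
The plan is a three-stage argument passing through a symmetric dilation to produce one common unitary $T$ aligning both the $U$ and $V$ factors. Since $M$ has rank $n$ and $\bar M$ is the best rank-$n$ spectral-norm approximation of $\hat M$, the Eckart--Young--Mirsky theorem gives $\norm{\bar M - \hat M} \le \norm{M - \hat M}$, hence $\norm{\bar M - M} \le 2\norm{\hat M - M} \le \sigma_n(M)/2$ by the hypothesis. Weyl's inequality (Lemma~\ref{LemA3}) then yields $\sigma_n(\bar M) \ge \sigma_n(M)/2$ and $|\sigma_i(\bar M)-\sigma_i(M)|\le 2\norm{\hat M-M}$ for every $i$, so $\bar M$ has rank exactly $n$ and its SVD factors are well defined.

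\textbf{Stage 2: A symmetric dilation and Davis--Kahan.} I would introduce the dilation $\mathcal{M}$ with zero diagonal blocks and off-diagonal blocks $M$ and $M^\top$, and define $\bar{\mathcal M}$ analogously. Its nonzero eigenpairs are $(\pm\sigma_i(M), q_i^\pm)$ with $q_i^\pm = \tfrac{1}{\sqrt 2}[u_i^\top,\pm v_i^\top]^\top$, and $\norm{\bar{\mathcal M}-\mathcal M}=\norm{\bar M-M}$. Let $Q$ collect $\{q_i^+\}_{i=1}^n$ and $\bar Q$ the analogue for $\bar{\mathcal M}$. The top-$n$ positive eigenspace of $\mathcal M$ has a spectral gap of at least $\sigma_n(M)$ to the remaining zero eigenvalues, and the one of $\bar{\mathcal M}$ has a gap at least $\sigma_n(\bar M)\ge\sigma_n(M)/2$. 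The Davis--Kahan $\sin\Theta$ theorem together with Procrustes alignment then produces a single unitary $T\in\mathbb{R}^{n\times n}$ with
\[
\norm{\bar Q - Q T}_F \lesssim \frac{\sqrt n\,\norm{\bar M - M}}{\sigma_n(M)},
\]
where the $\sqrt n$ arises from the Frobenius-norm bound on an $n$-dimensional invariant subspace. The \emph{same} $T$ aligns both the $U$ and $V$ blocks of $Q$ simultaneously.

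\textbf{Stage 3: From orthonormal to scaled factors.} I would transfer the alignment through the identity
\[
\bar U\bar\Lambda^{1/2} - U\Lambda^{1/2}T = (\bar U - UT)\bar\Lambda^{1/2} + UT\bigl(\bar\Lambda^{1/2} - T^\top\Lambda^{1/2}T\bigr),
\]
and its analogue on the $V$ side with the same $T$. The first summand is controlled by Stage~2 together with $\norm{\bar\Lambda^{1/2}}\le\norm{M}^{1/2}$. For the second, Weyl controls $\norm{\bar\Lambda-T^\top\Lambda T}$ by a multiple of $\norm{\bar M - M}$, after which the matrix square-root perturbation inequality $\norm{A^{1/2}-B^{1/2}}\le \norm{A-B}/\bigl(\sqrt{\lambda_{\min}^+(A)}+\sqrt{\lambda_{\min}^+(B)}\bigr)$ applied to $\bar\Lambda$ and $T^\top\Lambda T$ yields the scaled bound. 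Collecting everything with $\sigma_n(\bar M)\ge\sigma_n(M)/2$ and $\norm{\bar M-M}\le 2\norm{\hat M-M}$ gives a bound of order $\sqrt{n/\sigma_n(M)}\,\norm{\hat M-M}$ on each side, matching $\kappa_M$ up to the explicit constant $40$.

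\textbf{Main obstacle.} The insistence on a \emph{single} unitary $T$ that works on both sides is the crux: the Procrustes optimizer for the $U$ block and for the $V$ block need not coincide, so a naive application of Wedin to $M$ directly would yield two unrelated unitaries and only control $\bar U \bar \Lambda \bar V^\top$, not the factors individually. The symmetric-dilation trick of Stage~2 is what forces a common $T$. A secondary technical point is the $\sqrt n$ loss incurred when converting the Frobenius-norm Davis--Kahan bound into a spectral-norm bound on the scaled factors; this is precisely what produces the $\sqrt n$ factor inside $\kappa_M$ rather than a dimension-free constant.
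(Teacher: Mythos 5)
This lemma is not proved in the paper at all: it is imported verbatim as \cite[Th.~4]{Tsiamis2019finite}, whose own proof reduces to the balanced-factorization perturbation lemma of Tu et al.\ \cite{Tu2016low} (their Lemma~5.14). Your overall architecture -- symmetric dilation to force a \emph{single} aligning unitary $T$, which you correctly identify as the crux -- is in the same spirit as that chain of references, and your Stage~1 (Eckart--Young plus Weyl to get $\norm{\bar M - M}\le 2\norm{\hat M - M}$ and $\sigma_n(\bar M)\ge \sigma_n(M)/2$) is correct.

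The gap is in Stage~3, and it is quantitative rather than cosmetic. For the first summand you bound $\norm{(\bar U - UT)\bar\Lambda^{1/2}} \le \norm{\bar U - UT}\,\norm{\bar\Lambda^{1/2}}$, with $\norm{\bar U - UT}\lesssim \sqrt{n}\,\norm{\bar M - M}/\sigma_n(M)$ from Davis--Kahan and $\norm{\bar\Lambda^{1/2}}\le\norm{M}^{1/2}$. Multiplying these gives
\begin{equation*}
\sqrt{\frac{n}{\sigma_n(M)}}\;\norm{\bar M - M}\;\sqrt{\frac{\norm{M}}{\sigma_n(M)}},
\end{equation*}
which carries an extra factor $\sqrt{\sigma_1(M)/\sigma_n(M)}$ relative to the target $\kappa_M$. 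The stated bound is condition-number free, so "collecting everything" does not yield a bound of order $\sqrt{n/\sigma_n(M)}\,\norm{\hat M - M}$; it yields that times the square root of the condition number of $M$, which can be arbitrarily large. Decoupling the subspace alignment from the scaling is exactly what loses here; the actual proof in \cite{Tu2016low} avoids this by working directly with the balanced factors $U\Lambda^{1/2}$, $V\Lambda^{1/2}$ inside a PSD lifting and bounding the Procrustes distance between factors in one shot, never passing through an unweighted $\sin\Theta$ bound. A secondary unjustified step is the claim that Weyl controls $\norm{\bar\Lambda - T^\top \Lambda T}$ by a multiple of $\norm{\bar M - M}$: Weyl compares the \emph{ordered eigenvalues} of $\bar\Lambda$ and $\Lambda$, whereas $\norm{\bar\Lambda - T^\top\Lambda T}$ also contains the commutator $\norm{T\Lambda - \Lambda T}$, whose smallness requires a separate argument about how the Procrustes $T$ interacts with the spectrum. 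To repair the proof you should either invoke Tu et al.'s Lemma~5.14 directly (as \cite{Tsiamis2019finite} does) or prove a weighted subspace perturbation bound that controls $(\bar U - UT)\bar\Lambda^{1/2}$ as a whole.
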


\begin{lemma} [{\cite[Th. 4.1]{Wedin1973perturbation}}] \label{LemA10}
	Consider matrices $M_1, M_2\in \mathbb{R}^{m\times n}$ with rank $m$, where $m\leq n$. Then, we have 
	\begin{equation}
		\nonumber
		\norm{M_1^\dagger-M_2^\dagger} \leq \sqrt{2}\norm{M_1^\dagger}\norm{M_2^\dagger}\norm{M_1-M_2}.
	\end{equation}
\end{lemma}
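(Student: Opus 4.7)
The plan is to derive the $\sqrt{2}$ factor by combining Wedin's pseudo-inverse identity with an orthogonal decomposition of $(M_1^\dagger - M_2^\dagger)u$, bounding the in-range component directly and the kernel component via a sine-theta bound on the row spaces of $M_1$ and $M_2$. Let $E := M_1 - M_2$. Because both matrices have full row rank $m$, one has $M_iM_i^\dagger = I_m$, $M_i^\dagger = M_i^\top(M_iM_i^\top)^{-1}$ and $\mathrm{range}(M_i^\dagger) = \mathrm{row}(M_i)$; a few lines of algebra starting from $M_2M_2^\dagger = I_m$ then yield the identity
\begin{equation*}
M_1^\dagger - M_2^\dagger = -M_1^\dagger E M_2^\dagger - (I_n - M_1^\dagger M_1)M_2^\dagger,
\end{equation*}
which can be checked by expanding $-M_1^\dagger M_1 M_2^\dagger + M_1^\dagger M_2 M_2^\dagger - M_2^\dagger + M_1^\dagger M_1 M_2^\dagger = M_1^\dagger - M_2^\dagger$.

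Fix a unit vector $u \in \mathbb{R}^m$ and set $v := (M_1^\dagger - M_2^\dagger)u$. Split it via the identity into $v = v^R + v^N$ with $v^R := -M_1^\dagger E M_2^\dagger u \in \mathrm{row}(M_1)$ and $v^N := -(I_n - M_1^\dagger M_1)M_2^\dagger u \in \ker(M_1)$. Since these subspaces are orthogonal complements, the Pythagorean identity gives $\norm{v}^2 = \norm{v^R}^2 + \norm{v^N}^2$. The first summand satisfies $\norm{v^R} \leq \norm{M_1^\dagger}\norm{E}\norm{M_2^\dagger}$ immediately. For the second, the columns of $M_2^\dagger$ lie in $\mathrm{row}(M_2)$, so $v^N$ factors through $P_{\mathrm{row}(M_2)}$ and obeys $\norm{v^N} \leq \tau \norm{M_2^\dagger}$ with $\tau := \norm{P_{\ker(M_1)} P_{\mathrm{row}(M_2)}}$.

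The main work is the sine-theta estimate $\tau \leq \norm{E}\norm{M_1^\dagger}$. I would take any unit $w \in \mathrm{row}(M_2)$ and set $y := P_{\ker(M_1)}w$, so that $M_1 y = 0$ forces $M_2 y = -Ey$. Orthogonally splitting $y = y_R + y_N$ over $\mathrm{row}(M_2) \oplus \ker(M_2)$: the identity $M_2 y_R = -Ey$ and the bound $\norm{M_2 y_R} \geq \norm{y_R}/\norm{M_2^\dagger}$ yield $\norm{y_R} \leq \norm{E}\norm{M_2^\dagger}\norm{y}$, while $y = w - P_{\mathrm{row}(M_1)}w$ together with $P_{\ker(M_2)}w = 0$ gives $y_N = -P_{\ker(M_2)} P_{\mathrm{row}(M_1)} w$, whence $\norm{y_N} \leq \tau\sqrt{1 - \norm{y}^2}$, using that the principal angles between two equal-dimensional subspaces are symmetric so that $\norm{P_{\ker(M_2)} P_{\mathrm{row}(M_1)}} = \tau$. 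Squaring and summing produces $\norm{y}^2 \leq \norm{E}^2 \norm{M_2^\dagger}^2 \norm{y}^2 + \tau^2(1 - \norm{y}^2)$; taking the supremum over $w$ so that $\norm{y}\uparrow\tau$ collapses this to $\tau^2 \leq \norm{E}^2\norm{M_2^\dagger}^2$, and swapping the roles of $M_1$ and $M_2$ analogously produces $\tau \leq \norm{E}\norm{M_1^\dagger}$.

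Plugging back, $\norm{v^N} \leq \norm{E}\norm{M_1^\dagger}\norm{M_2^\dagger}$, hence $\norm{v}^2 \leq 2\norm{E}^2\norm{M_1^\dagger}^2\norm{M_2^\dagger}^2$, and taking the supremum over unit $u$ delivers the claim. The hard part is the sine-theta step: the two components $y_R$ and $y_N$ are naturally bounded in terms of $\tau$ and $\norm{E}$ themselves, and the constant $\sqrt{2}$ only emerges because the self-referential inequality for $\tau$ can be closed into a bound free of $\tau$ on the right-hand side. The Wedin identity and the orthogonal decomposition are otherwise routine algebraic manipulations.
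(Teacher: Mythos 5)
The paper offers no proof of this lemma at all: it is imported verbatim as Theorem~4.1 of Wedin's 1973 paper, so there is nothing internal to compare against. Your blind proof is, as far as I can check, correct, and it essentially reconstructs Wedin's own argument for the equal-rank case: the identity $M_1^\dagger - M_2^\dagger = -M_1^\dagger E M_2^\dagger - (I_n - M_1^\dagger M_1)M_2^\dagger$ checks out (it is the specialization of Wedin's three-term decomposition to full row rank, where $M_2M_2^\dagger = I_m$ kills one term), the two remaining pieces do land in the mutually orthogonal subspaces $\mathrm{row}(M_1)$ and $\ker(M_1)$, and the Pythagorean step is exactly where $\sqrt{2}$ comes from instead of the naive $2$ (or the $(1+\sqrt{5})/2$ of the general-rank statement). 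The sine-theta estimate also closes correctly: the sup defining $\tau$ is attained by compactness of the unit sphere of $\mathrm{row}(M_2)$, the case $\tau=0$ is trivial, and the symmetry $\norm{P_{\ker(M_1)}P_{\mathrm{row}(M_2)}} = \norm{P_{\ker(M_2)}P_{\mathrm{row}(M_1)}}$ that you invoke twice does hold because the two row spaces have equal dimension $m$. The only criticism is that this step is far more laborious than it needs to be: since $P_{\mathrm{row}(M_2)} = M_2^\top (M_2^\dagger)^\top$ and $P_{\ker(M_1)}M_1^\top = 0$, one line gives
\begin{equation*}
P_{\ker(M_1)}P_{\mathrm{row}(M_2)} = P_{\ker(M_1)}\left(M_2 - M_1\right)^\top (M_2^\dagger)^\top = -P_{\ker(M_1)}E^\top (M_2^\dagger)^\top,
\end{equation*}
hence $\tau \leq \norm{E}\norm{M_2^\dagger}$ immediately, and the swapped version gives $\tau \leq \norm{E}\norm{M_1^\dagger}$ without any self-referential inequality. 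With that simplification your argument is a clean, self-contained proof of the cited bound.
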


\begin{lemma} [{\cite[Lemma E.2 ]{Tsiamis2019finite}}] \label{LemA11}
	Consider the series $W_t = \sum_{k=0}^{t} \norm{A^k}$. We have the following two cases:
	\begin{enumerate}
	\item If the system is asymptotically stable, i.e., $\rho(A) < 1$, then $\norm{W_t} = \mathcal{O}(1)$.
	\item If the system is marginally stable, i.e., $\rho(A) = 1$, then $\norm{W_t} = \mathcal{O}(t^\kappa)$, where $\kappa$ is the Largest Jordan blcok of $A$ corresponding to the eigenvalue $|\lambda|=1$.
    \end{enumerate}
\end{lemma}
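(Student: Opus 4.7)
The plan is to analyze the growth of $\norm{A^k}$ through the Jordan canonical decomposition and then sum the resulting bounds. Write $A = PJP^{-1}$, where $J = \mathrm{diag}(J_1,\dots,J_r)$ is a block-diagonal matrix of Jordan blocks $J_i$ associated with eigenvalues $\lambda_i$ of $A$. Since $\norm{A^k} = \norm{PJ^kP^{-1}} \leq \norm{P}\norm{P^{-1}}\norm{J^k}$ and $\norm{J^k} = \max_i \norm{J_i^k}$, it suffices to control $\norm{J_i^k}$ for each block separately and then sum over $k$.

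For a single Jordan block $J_i$ of size $m_i$ with eigenvalue $\lambda_i$, the standard closed form gives
\begin{equation*}
J_i^k = \sum_{j=0}^{m_i-1} \binom{k}{j} \lambda_i^{k-j} N_j,
\end{equation*}
where $N_j$ is the nilpotent shift matrix of index $j$, so that $\norm{J_i^k} \leq c_i (k+1)^{m_i-1} |\lambda_i|^{k-m_i+1}$ for a constant $c_i$ depending only on $m_i$. In Case 1, every eigenvalue satisfies $|\lambda_i| \leq \rho(A) < 1$. Picking any $\bar\rho \in (\rho(A),1)$ absorbs the polynomial prefactor into a geometric tail and gives $\norm{A^k} \leq C \bar\rho^k$ for some $C>0$. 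Summing the geometric series yields $W_t \leq C/(1-\bar\rho) = \mathcal{O}(1)$, as claimed.

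In Case 2, $\rho(A)=1$, so eigenvalues split into two groups: the unit-modulus ones (with largest Jordan block of size $\kappa$) and the strictly contractive ones. The strictly contractive blocks contribute a uniformly bounded term by the same argument as Case 1. For each unit-modulus block of size $m_i \leq \kappa$, we get $\norm{J_i^k} \leq c_i(k+1)^{m_i-1}$, and summing from $k=0$ to $t$ gives $\sum_{k=0}^t (k+1)^{m_i-1} = \mathcal{O}(t^{m_i})$. Taking the maximum over all blocks, the dominant contribution comes from a block of size $\kappa$, yielding $W_t = \mathcal{O}(t^\kappa)$.

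There is no real obstacle here; the only mild care needed is ensuring that the constants depending on $P$ and on the block sizes $m_i$ are absorbed into the $\mathcal{O}(\cdot)$ notation uniformly in $t$. The argument is essentially a textbook exercise in applying the Jordan normal form to matrix power series, and the bound is tight because for a single Jordan block of size $\kappa$ with eigenvalue $1$, $\norm{J^k}$ indeed grows like $k^{\kappa-1}$, forcing the partial sum to grow like $t^\kappa$.
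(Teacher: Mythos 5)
Your proof is correct. Note that the paper itself does not prove this lemma at all---it is imported by citation from \cite{Tsiamis2019finite} (Lemma E.2)---so there is no in-paper argument to compare against; your Jordan-form decomposition, with the polynomial-times-geometric bound $\norm{J_i^k}\leq c_i(k+1)^{m_i-1}|\lambda_i|^{k-m_i+1}$ per block and termwise summation, is the standard proof of this fact and is essentially what the cited source does. The only cosmetic remark is that $W_t$ is already a nonnegative scalar, so the $\norm{W_t}$ in the statement is just $W_t$; your argument handles this correctly.
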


\begin{lemma} [{\cite[Lemma B.5]{Tsiamis2019finite}}] \label{LemA12}
	Let $\beta \geq b > 0$, for some $b>0$ and consider the function:
	\begin{equation*}
		f(\alpha,\beta) = \frac{\alpha+\beta}{2} -\gamma\sqrt{\alpha+1}, 
	\end{equation*}
    for $a\geq 0$. If $\gamma\leq \min\left(1,\frac{b}{2}\right)$, then 
    $f(\alpha,\gamma) \geq 0$.
\end{lemma}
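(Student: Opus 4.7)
The plan is to prove the inequality $f(\alpha,\beta) = \frac{\alpha+\beta}{2} - \gamma\sqrt{\alpha+1} \geq 0$ by splitting the lower bound $\frac{\alpha+\beta}{2}$ into a piece that covers the linear growth of $\gamma\sqrt{\alpha+1}$ in $\alpha$ and a piece that covers its constant term, exploiting each of the two hypotheses $\gamma \leq 1$ and $\gamma \leq b/2$ separately.

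First, I would replace $\beta$ by its lower bound $b$, using monotonicity of $f$ in $\beta$, to reduce to showing $\frac{\alpha+b}{2} \geq \gamma\sqrt{\alpha+1}$. Next I would linearize the right-hand side via the elementary AM-GM inequality
\begin{equation*}
\sqrt{\alpha+1} = \sqrt{(\alpha+1)\cdot 1} \leq \frac{(\alpha+1)+1}{2} = \frac{\alpha}{2}+1,
\end{equation*}
valid for all $\alpha \geq 0$. Thus $\gamma\sqrt{\alpha+1} \leq \frac{\gamma\alpha}{2} + \gamma$.

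Then I would use the two parts of the hypothesis $\gamma \leq \min(1, b/2)$ to absorb the two terms. From $\gamma \leq 1$ we get $\frac{\gamma\alpha}{2} \leq \frac{\alpha}{2}$, and from $\gamma \leq b/2$ we get $\gamma \leq b/2$. Adding these gives $\gamma\sqrt{\alpha+1} \leq \frac{\alpha}{2} + \frac{b}{2} = \frac{\alpha+b}{2} \leq \frac{\alpha+\beta}{2}$, which is exactly the desired inequality.

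There is no real obstacle here; the only minor subtlety is the choice of the linear upper bound for $\sqrt{\alpha+1}$, where AM-GM (equivalently, concavity of $\sqrt{\cdot}$) gives the tight affine majorant at a point where the two hypotheses match the slope and intercept contributions. The argument is essentially a one-line calculation once the majorization and the splitting of $\gamma$ into its slope-bound and intercept-bound roles are made explicit.
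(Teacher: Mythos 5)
Your proof is correct and complete. Note that the paper itself does not prove this lemma---it simply cites it as Lemma~B.5 of Tsiamis and Pappas---so there is no in-paper argument to compare against; your AM--GM majorization $\sqrt{\alpha+1}\leq \frac{\alpha}{2}+1$ followed by splitting $\gamma$ against the slope (via $\gamma\leq 1$) and the intercept (via $\gamma\leq b/2$) is a clean, self-contained verification of the cited result. The only implicit assumption worth flagging is $\gamma\geq 0$ when you multiply the majorization by $\gamma$; for $\gamma<0$ the claim is trivial since $-\gamma\sqrt{\alpha+1}\geq 0$, and in the paper's application $\gamma=\gamma_{i,N}$ is a nonnegative norm-based quantity, so nothing is lost.
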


\end{document}